\newtheorem{theorem}{Theorem}
\newtheorem{lemma}{Lemma}
\newtheorem{proposition}{Proposition}
\newtheorem{corollary}{Corollary}
\theoremstyle{definition}
\newtheorem{definition}{Definition}
\newtheorem{remark}{Remark}
\newcommand{\qedi}{ }
\newcommand{\be}{\begin{equation}}
\newcommand{\ee}{\end{equation}}
\newcommand{\bea}{\begin{eqnarray}}
\newcommand{\eea}{\end{eqnarray}}
\newcommand{\bra}{\langle} \newcommand{\ket}{\rangle}
    \newcommand {\norm} [2] [] {\ensuremath{ \left\Vert  #2  \right\Vert_{#1} } }
    \newcommand {\R} {\ensuremath{\mathbb{R}}}
    \newcommand {\N} {\ensuremath{\mathbb{N}}}
    \newcommand {\om} {\omega}
    \newcommand {\mr} {\mathrm}
    \newcommand {\C} {\mathbb C}
  \newcommand{\commut}[2]{\left[ #1 , #2 \right]}
  \newcommand{\antcom}[2]{\left[ #1 , #2 \right]_+}
  \newcommand{\eps}{\varepsilon}
 \newcommand{\Or}{\mathcal{O}}
  \newcommand{\e}{\mathrm{e}}
  \newcommand{\tRe}{\mathrm{Re}}
  \newcommand{\tIm}{\mathrm{Im}}
\newcommand{\id}{\mathrm{id}}
\newcommand{\D}{\mathrm{d}}
\newcommand{\E}{\mathrm{e}}
\newcommand{\I}{\mathrm{i}}
\newcommand{\tc}{\xi_{\mathrm{c}}}
\newcommand{\tr}{\xi_{\mathrm{r}}}
\newcommand{\M}{M}
\newcommand{\ec}{\mathcal{E}}
\newcommand{\bk}[1]{\bra #1 \ket}
\newcommand{\xicrit}{\xi_{\mr{crit}}}
\newcommand{\facnorm}[2]{\norm[(#2)]{#1}}
\newcommand{\perturb}[1]
{
    \ifthenelse{\equal{#1}{\theta}}{\theta_{\mathrm r}}{}
    \ifthenelse{\equal{#1}{x}}{\xi}{}
    \ifthenelse{\equal{#1}{y}}{\eta}{}
    \ifthenelse{\equal{#1}{z}}{\zeta}{}
}
\newcommand{\rom}{\renewcommand{\labelenumi}{{\rm(\roman{enumi})}}}
\newcommand{\alp}{\renewcommand{\labelenumi}{{\rm(\alph{enumi})}}}
\begin{document}

\title{\LARGE \bf Emergence of exponentially small reflected waves}

\author{Volker Betz\thanks{Supported by an EPSRC fellowship EP/D07181X/1}\\{\normalsize
Mathematics Institute, University of
Warwick, United
Kingdom} \\{\normalsize betz@maths.warwick.ac.uk}
\\[5mm] 
Alain Joye
 \\{\normalsize  Institut Fourier,
Universit\'e de Grenoble I, BP 74, 38402 St.-Martin-d'H\`eres, France}\\{\normalsize alain.joye@ujf-grenoble.fr}
\\[5mm] 
Stefan Teufel\\{\normalsize
Mathematisches Institut, Universit\"at T\"ubingen, Germany}\\{\normalsize stefan.teufel@uni-tuebingen.de}}

\date{\today}

\maketitle

\begin{abstract}
We study the time-dependent scattering of a quantum mechanical wave packet at a barrier for 
energies larger than the barrier height, in the semi-classical regime. More precisely, we are 
interested in the leading order of the exponentially small scattered part of the wave packet in the semiclassical parameter when the energy density of the incident wave is sharply peaked around some value. We prove that this reflected part has, to leading order, 
a Gaussian shape centered on the classical trajectory for all times soon after 
its birth time. We give explicit formulas and rigorous error bounds for the reflected 
wave for all of these times. 

\medskip

  \noindent
  MSC (2000):
  \underline{41A60};
  81Q05.\\
  Key words: Above barrier scattering, exponential asymptotics, quantum theory.
\end{abstract}
 
\section{Introduction}

We consider the problem of potential scattering for a quantum particle in one dimension in the semiclassical limit. Let $V:\R\to \R$ be a bounded analytic potential function such that $\lim_{|x|\to\infty} V(x) =0$.  Then a classical particle approaching the potential with total energy smaller than $\max_{x\in\R} V(x)$ is completely  reflected.  It is well known that for a quantum particle tunneling can occur, i.e.\ parts of the wave function can penetrate the potential barrier. Similarly, for energies larger than the barrier height $\max_{x\in\R} V(x)$ no reflection occurs for classical particles, but parts of the wave function of a quantum particle might be reflected. In the semiclassical limit the non-classical tunneling probabilities respectively the non-classical  reflection probabilities are exponentially small in the semiclassical parameter. 
While the computation of the semiclassical tunneling probability can be found in standard textbooks like \cite{LaLi}, the problem of determining  the reflection coefficient in the case of above barrier scattering is much more difficult \cite{LaLi, FrFr}. In this paper we are not only interested in the scattering limit but also in the questions where, when and how the reflected piece of the wave function emerges in an above barrier scattering situation. 

The time-dependent Schr\"odinger equation for a quantum particle moving in one dimension in  the potential $V$ is 
\be\label{tdschINTRO}
i\eps \partial_t \Psi=(-\eps^2\partial_x^2+V(x))\Psi, \ \ \ 
\Psi(\cdot, t, \eps)\in L^2(\R),
\ee
where $0<\eps \ll 1$ is the semiclassical parameter. 
A solution to (\ref{tdschINTRO}) can  be given in the form of a wave packet: for any fixed energy $E>\max_{x\in\R}V(x)$,
let $\phi(x,E)$ be a solution to the stationary Schr\"odinger equation
\be\label{sschINTRO}
-\eps^2\partial_x^2\phi=(E-V(x))\,\phi=: p^2(x,E)\,\phi,
\ee
where 
\[
p(x,E)=\sqrt{E-V(x)} >0
\]
is the classical momentum at energy $E$. 
Then the function 
\be\label{superpINTRO}
\Psi(x,t,\eps):=\int Q(E,\varepsilon)e^{-itE/\varepsilon} 
\phi(x,E)\ dE,
\ee
for some regular enough energy density $Q(E,\eps)$ is a solution of (\ref{tdschINTRO}).
We will require boundary conditions on the solution of (\ref{sschINTRO}) that lead to 
a wave packet that  is incoming from $x=+\infty$ and we will assume the energy density is
sharply peaked around a value $E_0>\max_{x\in\R}V(x)$. 

In order to understand the emergence of exponentially small reflected waves from (\ref{superpINTRO}) we basically need to solve two problems. Roughly speaking, we need to first  determine the solutions to (\ref{sschINTRO}) with appropriate boundary conditions  up to errors sufficiently small as $\eps\to 0$. Since we are interested in the solution of   (\ref{tdschINTRO}) for all times, we need to determine the solution of  (\ref{sschINTRO}) and also need to split off the piece corresponding to a reflected wave  for all $x\in \R$ and not only asymptotically for $|x|\to\infty$. The large $|x|$s regime of (\ref{sschINTRO}) is rigorously studied in \cite{jp95,Ra} by means of complex WKB methods and yields the exponentially small leading order of the reflected piece. On the other hand, the behaviour for all $x$s of  (\ref{sschINTRO}) has been investigated at a theoretical physics level in \cite{Be1}, where error function behaviour of the reflected wave piece has first been derived. However, our method is quite different and allows for a rigourous treatment, which is one main new contribution of the present paper.
 
In a second step we have to evaluate (\ref{superpINTRO}) using the approximate $\phi$s from step one and extract the leading order expression for the reflected part of the wave, for a suitable energy density $Q(E,\varepsilon)$. 
Since we know that the object of interest, namely the reflected wave, is exponentially small in $\eps$, both steps have to be done with great care and only errors smaller than the exponentially small leading order quantity are allowed. This has so far been done only in the scattering regime for similar problems \cite{hj05,jm05}. By contrast, we will be able to carry out the analysis for all times just after the birth time of the reflected wave, not only in an asymptotic regime of large positions and large times. This is the second main new aspect of our work.

Before further explaining the result, let us give some more details on how to approach  the two problems mentioned. For solving (\ref{sschINTRO}) we use techniques developed in \cite{BeTe1,BeTe2}, cf.\ also \cite{hj04} in the context of adiabatic transition histories. The connection to adiabatic theory is easily seen by writing the second order ODE (\ref{sschINTRO}) as a system of first order ODEs.
Put
\[
\psi(x,E) = \left( \begin{array}{l} \phi(x,E) \\  \I \eps \phi'(x,E) \end{array} \right)\,,
\]
where the prime denotes one derivative with respect to $x$, then 
  $\phi$ solves (\ref{sschINTRO}) if and only if $\psi$ solves 
\be \label{2level2INTRO}
\I \eps \partial_{x} \psi (x,E)= \pmatrix{0 & 1 \cr p^2(x,E) & 0}\psi(x,E)=:
H(x,E)\psi(x,E)\,.
\ee
In \cite{BeTe2} we studied the solutions of the   adiabatic problem 
\be \label{2level2adiabatic}
\I \eps \partial_{t} \psi(t) =  \tilde H(t)\psi(t)\,,
\ee
where $\tilde H(t)$ is a time-dependent real symmetric $2\times 2$-matrix. So the two differences between (\ref{2level2INTRO}) and (\ref{2level2adiabatic}) are that $H(x,E)$ is not symmetric, but still has two real eigenvalues, and that  an additional parameter dependence on $E$ occurs.  In Section~2 we explain how the results from \cite{BeTe2} on (\ref{2level2adiabatic}) translate to (\ref{2level2INTRO}). 

From this  we find in Section~3 that the solution to  (\ref{sschINTRO})  can be written as
\be\label{SPLIT}
\phi(x,E) = \phi_{\rm left}(x,E) + \phi_{\rm right}(x,E)\,,
\ee
where $ \phi_{\rm left}(x,E)$ corresponds to a wave traveling to the left and is supported on all of $\R$, while $\phi_{\rm right}(x,E)$ is the reflected part   traveling to the right and is essentially supported to the right of the potential. The difference between (\ref{SPLIT}) and the usual WKB splitting   
\be\label{WKBsplit}
\phi(x,E) = f_{\rm left}(x,E) \frac{\E^{\frac{\I}{\eps}\int_0^x p(r,E)\D r} }{\sqrt{p(x,E)}} + f_{\rm right}(x,E) \frac{\E^{-\frac{\I}{\eps}\int_0^x p(r,E)\D r} }{\sqrt{p(x,E)}} 
\ee
is that  the reflected part $\phi_{\rm right}(x,E)$ is exponentially small for all $x\in \R$ and not only for $x\to +\infty$ as $f_{\rm right}(x,E)$. Indeed, we will show that $\phi_{\rm right}(x,E)$ is of order 
$\e^{-\xi_{\mr c}(E)/\eps}$ for an energy dependent decay rate $\xi_{\mr c}(E)$. In the scattering limit $|x|\to\infty$ the splittings (\ref{SPLIT}) and (\ref{WKBsplit}) agree; so the main point about (\ref{SPLIT}) is that we can speak about the exponentially small reflected wave also at finite values of $x$. 

Inserting (\ref{SPLIT}) into (\ref{superpINTRO}) allows us to also split  the time-dependent  wave packet
\be\label{superpSPLIT}
\Psi(x,t)=\int Q(E,\varepsilon)e^{-itE/\varepsilon} \big(
\phi_{\rm left}(x,E) + \phi_{\rm right}(x,E)\big)dE =: \Psi_{\rm left}(x,t)+\Psi_{\rm right}(x,t)\,.
\ee
Since $\phi_{\rm right}(x,E)$ is essentially of order $\E^{-\xi_c(E)/\eps}$,  the leading order contribution to $\Psi_{\rm right}(x,t)$ comes from energies $E$ near the maximum $E^*(\eps)$ of  $|Q(E,\eps)|\E^{-\xi_c(E)/\eps}$. In order to see nice asymptotics in the semiclassical limit $\eps\to 0$ we impose conditions on  $Q(E,\eps)$ that guarantee  that $E^*(\eps)$ has a limit $E^*$ as $\eps\to 0$. Typically, this will be true for a density $Q(E,\eps)$ which is sharply peaked at some value $E_0$ as $\eps\rightarrow 0$, in accordance with the traditional picture of a semiclassical wave packet. In particular, it should be noted that in general, the critical energy $E^*$ is different from $E_0$, the energy on which the density concentrates.

The main results of our paper  are increasingly explicit  formulas for the leading order exponentially small reflected wave $\Psi_{\rm right}(x,t)$ not only in the scattering limit, covered by \cite{hj05,jm05}, but for most finite times and positions. 
In particular these formulas show  where, when and how the non-classical reflected wave 
emerges. Let us note that similar questions can be asked for more general dispersive 
evolution equations, in the same spirit as the systems considered in \cite{jm05}. However, 
the missing piece of information that forbids us to deal with such systems is an equivalent 
of the analysis performed in \cite{BeTe1,BeTe2}, and adapted here to the scattering setup, 
that yields the exponentially small leading order of the solutions to (\ref{sschINTRO}), 
for all $x$s.

Our methods and results belong to the realm of  semiclassical analysis, in particular, to exponential asymptotics and the specific problem we study has been considered in numerous works. Rather than attempting to review the whole litterature relevant to the topic, we want to point out the differences with respect to the results directly related to the problem at hand.  Exponentially small bounds can be obtained quite generally for the type of problem we are dealing with, see e.g. \cite{Fe0, Fe, JKP, Sj, JoPf1, Ne, mz, hj}... Moreover, existing results on the exponentially small leading order usually are either obtained for the scattering regime only, \cite{Fe0, Fe,jp95,Ra,Jo,hj05,jm05}, or non-rigorously \cite{Be1,Be2,BeLi}, or both \cite{WiMo}. To our knowledge, the only exceptions so far for the time independent case are \cite{BeTe1,BeTe2,hj04}, which we follow and adapt, but as mentioned above we need to be even more careful here in order to get uniform error terms in the energy variable. Finally, we focus on getting explicit formulae for the leading term of the reflected wave, rather than proving structural theorems and general statements, which justifies our choice of energy density.

The detailed statements of our results are too involved to sum them up in the introduction; let us instead mention a few qualitative features. In the semiclassical limit $\eps\to 0$ the reflected part of the wave is localized in a $\sqrt{\eps}$-neighborhood of a classical trajectory $q_t$ starting at a well defined transition time at a well defined  position $q^*$ with  velocity  $\sqrt{E^*-V(q^*)}$. 
A priori we distinguish three regimes. A $\sqrt{\eps}$-neighborhood of the transition  point $q^*$, which  is located where  a certain Stokes line in the complex plane crosses the real axis, is the birth region, where the reflected part of the wave emerges. It remains $\sqrt{\eps}$-localized near the trajectory $q_t$ of a classical particle in the potential $V$ with energy $E^*>$ max$V$ for all finite times, which defines our second regime. Therefore $\lim_{t\to\infty}q_t = \infty$, i.e.\ the trajectory belongs to a scattering state and moves to spatial infinity. The third regime is thus the scattering region, where the wave packet moves freely. We give precise characterizations of the reflected wave in all three regimes, Theorem~\ref{reflected wave shape}. It turns out that as soon as the reflected wave leaves the birth region, it is well approximated by a Gaussian wave packet centered at the classical trajectory $q_t$, see Theorem~\ref{chimod gauss}.

We now give a plan of the paper. In order to precisely formulate our main 
results, we need to discuss in detail under which conditions and in which sense we can solve  (\ref{2level2INTRO}) up to exponentially small errors. This is discussed in  Section~2 where the main result is Theorem~\ref{solution}. The proofs are based on the techniques developed in \cite{BeTe1,BeTe2, BeTe3} and can be found in Sections~5~and~6. In Section~3 we explain how to arrive   from our solutions of (\ref{2level2INTRO}) at the main theorems, Theorems~\ref{chi eff thm}--\ref{chimod gauss}, by first obtaining solutions of (\ref{sschINTRO}) split according to (\ref{SPLIT}) and then evaluating (\ref{superpSPLIT}). The mathematical details on the asymptotic analysis of (\ref{superpSPLIT}) are covered in Section~4, where the main tool is Laplace's method. In Section~5 we explain how to bring  (\ref{2level2INTRO})   into an almost diagonal form, with off-diagonal elements   of order $\eps^n$ for arbitrary $n\in\N$, see Theorem~\ref{general diag}. We proceed analogous  to the well known transformation to super-adiabatic representations in adiabatic theory, see \cite{BeTe1,BeTe2} for references. 
In Section~6 we analyze the asymptotic behavior of the off-diagonal terms in the super-adiabatic representations. By optimal truncation we find the optimal super-adiabatic representation in which the off-diagonal terms are exponentially small and explicit at leading order. Then standard perturbation theory yields Theorem~\ref{solution}.

\section{Superadiabatic representations for non-selfadjoint Hamiltonans} \label{general theorems}

In this section we adapt the results of \cite{BeTe1} and \cite{BeTe2}
to fit our needs. The main difference is that the Hamiltonian matrix we will have to deal with is not self-adjoint; this will require some adjustments which we carry out in the Section \ref{asympt recursion}, but fortunately much of the theory from \cite{BeTe1, BeTe2} carries over. 
The equations we consider are 
\begin{equation} \label{ODE}
\I \eps \partial_{x} \psi(\eps,x,E) =  H(x,E) \psi(\eps,x,E) := \left( \begin{array}{cc} 0 & 1 \\ p^{2}(x,E) & 0 \end{array}\right) \psi(\eps,x,E)
\end{equation}
where $p(x,E) = \sqrt{E-V(x)}$, and $V$ is analytic in a neighbourhood of the real axis with $\sup_{x \in \R} V(x) < E$. 
We will here give the optimal superadiabatic representation for this system, which is an $\eps$-dependent basis transform that diagonalizes (\ref{ODE}) up to exponentially small errors while still agreeing with the eigenbasis of $H(x,E)$ for $|x| \to \infty$. In this basis, for suitable initial conditions the second component of the solution is given by an exponentially small error function with superimposed oscillations, to leading order; this fact will enable us to describe the time development of the exponentially small reflected wave. 

We start by introducing the invertible transformation 
\begin{equation} \label{xinat}
\xi(x,E) =  2 \int_{0}^{x} p(y,E) \, \D y.
\end{equation}
Equation (\ref{ODE}) then transforms into 
\begin{equation} \label{ODE2}
\I \eps \partial_{\xi} \psi(\eps,\xi,E) =  H(\xi,E) \psi(\eps,\xi,E) := \frac{1}{2}\left( \begin{array}{cc} 0 & \frac{1}{\tilde p(\xi,E)}\\ \tilde p(\xi,E) & 0 \end{array}\right) \psi(\eps,\xi,E),
\end{equation}
where $\tilde p(\xi(x,E),E) = p(x,E)$. 
Let us for the moment consider fixed $E$ and drop the dependence on $E$ from the notation. The matrix on the right hand side of (\ref{ODE2}) has eigenvalues $-1/2$ and $1/2$, and is diagonalized by the basis transform implemented through
\[
T(\xi) = \left( \begin{array}{cc} \sqrt{\tilde p(\xi)} & 1/\sqrt{\tilde p(\xi)} \\ \sqrt{\tilde p(\xi)} & -1/\sqrt{\tilde p(\xi)} \end{array}\right).
\]
We then find 
\begin{equation} \label{adiabODE} 
0 = T \left( \I \eps \partial_{\xi} - H(\xi) \right) T^{-1} (T \psi) = \left( \I \eps \partial_{\xi} - \frac{1}{2}\left( \begin{array}{cc} 1 & \I \eps \theta'(\xi) \\ \I \eps \theta'(\xi) & -1 \end{array}\right) \right) \psi_{\rm a},
\end{equation}
where $\theta'(\xi) = \frac{\tilde p'(\xi)}{\tilde p(\xi)}$ is the adiabatic coupling function. The following theorem gives the $n$-th superadiabatic representation for every $n \in \N$.

\begin{theorem} \label{n-th}
Assume $\theta'(\xi) \in C^{\infty}(\R)$. Then for each $n \in \N_0$ there exists an invertible matrix $T_{n} = T_{n}(\eps, \xi)$ such that 
\begin{equation} \label{superadODE} 
T_{n} \left( \I \eps \partial_{\xi} - H \right) T_{n}^{-1} = 
 \I \eps \partial_{\xi} - \left( \begin{array}{cc} \rho_{n+1}(\eps,\xi) & \eps^{n+1} k_{n+1}(\eps,\xi) \\[2mm] - \eps^{n+1} \overline{{k}_{n+1}}(\eps,\xi) & - \rho_{n+1}(\eps,\xi) \end{array}\right) 
\end{equation}
where $\rho_{n}(\eps,\xi) = \frac{1}{2} + \Or(\eps^{2})$, 
$k_{n}(\eps,\xi) = (x_{n}(\xi) - z_{n}(\xi)) (1+\Or(\eps))$,
and the $x_{j}$, $z_{j}$ can be calculated from the recursive equations
\begin{eqnarray}
 x_{1} &=& \frac{\I}{2} \theta', \qquad \qquad y_{1}=z_{1}=0, \label{start}\\
 x_{n+1} & = &  -\I z_n' +\I \theta' y_n\label{xRec1}\\
 y_{n+1} & = &  \sum_{j=1}^n \left( x_j x_{n+1-j} +y_j y_{n+1-j}    -z_j z_{n+1-j}   \right)\label{yRec1}\\
 z_{n+1} & = &  -\I x_n'\,.\label{zRec1}
\end{eqnarray}
\end{theorem}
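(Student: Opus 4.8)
The plan is to argue by induction on $n$, adapting to the non-self-adjoint matrix $H$ the iterative construction of superadiabatic representations from \cite{BeTe1,BeTe2,BeTe3}. For $n=0$ one takes $T_0=T$, the diagonalising transform used to pass from $H$ to the adiabatic representation, so that (\ref{adiabODE}) is exactly the assertion (\ref{superadODE}) with $\rho_1\equiv\frac12$ and $\eps k_1=\frac{\I\eps}{2}\theta'$. This is consistent with (\ref{start}): indeed $x_1=\frac{\I}{2}\theta'$, $z_1=0$, and since $\theta'$ is real one has $-\overline{k_1}=\frac{\I}{2}\theta'$, so the two equal off-diagonal entries $\frac{\I\eps}{2}\theta'$ of (\ref{adiabODE}) are of the required form $(\eps k_1,\,-\eps\overline{k_1})$. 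The matrix $T$ itself is invertible since $\tilde p>0$.

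For the inductive step, assume $T_{n-1}$ has been found with $T_{n-1}(\I\eps\partial_\xi-H)T_{n-1}^{-1}=\I\eps\partial_\xi-H^{(n-1)}$, where $H^{(n-1)}$ has the form (\ref{superadODE}) with $n$ in place of $n+1$. I would then look for $T_n=(\mb{1}+\eps^n W_n)\,T_{n-1}$ with $W_n=W_n(\eps,\xi)$ whose off-diagonal part is fixed by requiring the order-$\eps^n$ off-diagonal part of the new generator $T_n(\I\eps\partial_\xi-H)T_n^{-1}$ to vanish. This sets the commutator $[D,W_n]$, with $D=\frac12\,\mathrm{diag}(1,-1)$, equal to the order-$\eps^n$ off-diagonal coefficient of $H^{(n-1)}$; since $[D,\cdot]$ is invertible on off-diagonal matrices (eigenvalues $\pm1$, reflecting the separation $\pm p$ of the eigenvalues of $H$ guaranteed by $\sup_{x}V(x)<E$ — the substitute for the spectral gap of adiabatic theory, and the reason the iteration never meets a vanishing denominator), this is solvable, and the diagonal part of $W_n$ is a free gauge choice fixed by normalisation. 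In the self-adjoint theory that normalisation is forced by unitarity; here it must be chosen by hand, which is one of the adjustments made in Section~\ref{asympt recursion}. I would then carry out the conjugation and read off the order-$\eps^{n+1}$ coefficient of what remains, which yields the recursion (\ref{start})--(\ref{zRec1}): schematically, the derivative terms $-\I z_n'$ and $-\I x_n'$ come from the $\I\eps\partial_\xi$ of (\ref{adiabODE}) hitting the $\xi$-dependent gauge factors, the term $\I\theta' y_n$ from commutators with the adiabatic coupling $\frac{\I\eps}{2}\theta'$, and the convolution (\ref{yRec1}) from the diagonal parts of products of off-diagonal contributions built up at lower orders (as produced, e.g., by the Neumann series of the inverse gauge factor). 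The parity visible directly from (\ref{start})--(\ref{zRec1}) — $x_n\equiv0$ for even $n$, $y_n\equiv z_n\equiv0$ for odd $n$ — rules out odd-order diagonal corrections, giving $\rho_{n+1}=\frac12+\Or(\eps^2)$, and shows that $k_{n+1}$ equals $x_{n+1}-z_{n+1}$ up to a relative $\Or(\eps)$ bounded on compact sets of $\xi$ (using $\theta'\in C^\infty(\R)$).

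Finally, $T_n=(\mb{1}+\eps^n W_n)\cdots(\mb{1}+\eps W_1)\,T$ is a finite product of near-identity factors, hence invertible for $|\eps|$ small by a Neumann series, which gives the invertible $T_n=T_n(\eps,\xi)$ of the statement. The main obstacle — and the reason this cannot be copied verbatim from \cite{BeTe1,BeTe2} — is the loss of self-adjointness: the superadiabatic transforms are no longer unitary, and the anti-Hermitian structure that in the symmetric case keeps the bookkeeping automatic is replaced by the weaker relation between the two off-diagonal entries $\eps^{n+1}k_{n+1}$ and $-\eps^{n+1}\overline{k_{n+1}}$. One must check both that each gauge step preserves this structure and that the $\eps$-dependence accumulated at every order stays uniformly controlled, so that one genuinely obtains the explicit $\Or(\eps)$ and $\Or(\eps^2)$ corrections rather than merely an $\Or(\eps^{n+1})$ off-diagonal. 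Carrying these verifications out in detail is the content of Section~\ref{asympt recursion}, whose precise statement is Theorem~\ref{general diag}.
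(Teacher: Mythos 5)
Your iterative near-identity conjugation scheme ($T_n=(\mathbf 1+\eps^n W_n)T_{n-1}$, with the off-diagonal block of $W_n$ fixed by inverting $\commut{D}{\,\cdot\,}$) is a legitimate general strategy, but it is a genuinely different route from the paper's. The paper first builds superadiabatic \emph{projections} $\pi^{(n)}=\sum_{k\le n}\eps^k\pi_k$ satisfying $(\pi^{(n)})^2-\pi^{(n)}=\Or(\eps^{n+1})$ and $\commut{\I\eps\partial_\xi-H}{\pi^{(n)}}=\Or(\eps^{n+1})$, and only then obtains $T_n$ by diagonalising $\pi^{(n)}$ (Proposition~\ref{general transformed matrix}, Theorem~\ref{general diag}). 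The functions $x_n,y_n,z_n$ appearing in the statement are \emph{defined} as the components of $\pi_n$ in the auxiliary basis $\{\mathbf 1,X,Y,Z\}$ (Proposition~\ref{function recursion}), and the recursion (\ref{xRec1})--(\ref{zRec1}) is read off from the projection recursion (\ref{recursion1}) together with the algebraic relations (\ref{derivatives})--(\ref{W equation}).

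This is where your proposal has a real gap: you never define $x_n,y_n,z_n$ within your scheme — they appear only as symbols in a recursion you assert ``schematically'' comes out of the conjugation. In particular, the convolution (\ref{yRec1}) is produced in the paper by $G_{n+1}=\sum_{j=1}^{n}\pi_j\pi_{n+1-j}$, the obstruction to $\pi^{(n)}$ being an exact projection; your iterative conjugation has no projection to fail, and the ``Neumann series of the inverse gauge factor'' you invoke yields products of your $W_j$'s, which you never relate to the $\pi_j$'s. Likewise, you correctly flag that the diagonal part of each $W_n$ is a gauge choice that must be ``fixed by normalisation,'' but you never fix it; without fixing it so that your $T_n$ coincides with the diagonaliser of $\pi^{(n)}$ (or without a gauge-independence lemma), there is no reason the off-diagonal coefficient is exactly $(x_{n+1}-z_{n+1})(1+\Or(\eps))$ for the $x_n,z_n$ satisfying (\ref{start})--(\ref{zRec1}). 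A minor mislabelling: the precise restatement Theorem~\ref{general diag} is proved in Section~\ref{recursion}, not Section~\ref{asympt recursion}, which treats the large-$n$ asymptotics of the coefficients rather than the derivation of (\ref{superadODE}).
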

We will give the proof of this theorem in Section \ref{recursion}. There we will even give explicit expressions of $T_{n}$, $\rho_{n}$ and $k_{n}$ in terms of the $x_{i}$, $y_{i}$ and $z_{i}$, but these are too messy to write down here. Note that none of them is a polynomial in $\eps$.

As it stands (\ref{superadODE}) is not useful, since neither the $n$-dependence nor the $\xi$-dependence of the off-diagonal terms is controlled. 
In order to achieve this, we need to make stronger assumptions on $\theta'$. As the present situation is very close to the superadiabatic transition histories considered in \cite{BeTe2}, the conditions will be identical: we require that $\theta'$ is analytic in a strip around the real line, and that the singularities which limit this strip are first order poles plus lower order corrections. As explained in \cite{BeTe2}, those corrections are controlled using the following norm on functions of a real variable:

\begin{definition}\label{normDef}
Let  $\tc > 0$, $\alpha > 0$ and  $I \subset \R$ be an interval. For
$f \in C^\infty(I)$ we define
\begin{equation} \label{norm}
\facnorm{f}{I,\alpha,\tc} := \sup_{t \in I} \sup_{k \geq 0} \left|
\partial^k f(t) \right| \frac{\tc^{\alpha + k}}{\Gamma(\alpha + k)}
\leq \infty
\end{equation}
and
$$ F_{\alpha,\tc}(I) = \left \{ f \in C^{\infty}(I): \facnorm{f}{I,\alpha,\tc} < \infty \right\}.$$
\end{definition}

We now state the assumptions that we will make on the adiabatic coupling function. The first one will be required for regions of $\R$ that are far away from the first order poles that are nearest to the real axis, while the second one will be required close to such poles.
The quantities $\tc$ and $\tr$ will be the imaginary and real part of those poles.\\[2mm]

\noindent{\bf Assumption 1:} {\it For a compact interval $I$ and
$\kappa\geq\tc>0$ let
$\theta'(\xi) \in F_{1,\kappa}(I)$.}\\

\noindent{\bf Assumption 2:} {\it  For $\gamma$, $\tr$, $\tc\in \R$
let
$$ \theta'_0(\xi) = \,\gamma\left(\frac{1}{\xi - \tr + \I \tc} + \frac{1}{\xi -\tr- \I \tc}\right)$$
be the sum of two complex conjugate first order poles located at
$\tr\pm\I\tc$ with residues $\gamma$. On a compact interval
$I\subset [\tr-\tc,\tr+\tc]$ with $ \tr \in I$ we assume that
 \begin{equation}
\theta'(\xi) = \theta'_0(\xi) + \theta_{\rm r}'(\xi)\quad\mbox{\it
with}\quad \theta_{\rm r}'(\xi)\in F_{\alpha,\tc}(I)
 \end{equation}
for some  $\gamma$, $\tc,\tr\in\R$, $0<\alpha<1$.}\\

Under these assumptions, the following theorem holds:

\begin{theorem}\label{MainTh}
Let $J_{\tc}\subset(0,\infty)$, $J_\alpha\subset(0,1)$ and
$J_\gamma\subset(0,\infty)$ be compact intervals, and for given 
$\tc>0$ let 
$0\leq \sigma_\eps <2$ be such that
\begin{equation}\label{ndef}
n_\eps = \frac{\tc}{\eps} -1+ \sigma_\eps\qquad\mbox{is an
even integer.}
\end{equation}

\begin{enumerate}\rom
\item There exists $\eps_0>0$ and a locally bounded function  $\phi_1:\R^+\to\R^+$ with $\phi_1(x)=\Or(x)$ as $x\to 0$,
such that the following holds: uniformly all in $\eps\in (0,\eps_0]$, all compact intervals $I \subset \R$ and all matrices $H(\xi)$ as given in
(\ref{adiabODE}) and satisfying Assumption~1 for some $\tc\in J_{\tc}$, the elements of the superadiabatic Hamiltonian as in (\ref{superadODE}) and the transformation
$T_{n_{\eps}} = T_{n_\eps}(\eps,\xi)$ with $n_\eps$ as in (\ref{ndef}) satisfy
\begin{eqnarray}
\left| \rho_{n_\eps}(\eps,\xi) - \frac{1}{2} \right| &\leq& \eps^2
\phi_1\!\left( \facnorm{\theta'}{I,1,\kappa}\right) \label{rhoc1}\\
\left|\eps^{n_\eps+1}k_{n_\eps}(\eps,\xi)\right| &\leq&
\sqrt{\eps}\,\E^{-\frac{\tc}{\eps}(1+\ln\frac{\kappa}{\tc})}
\phi_1\!\left(\facnorm{\theta'}{I,1,\kappa}\right)\label{rhoc2}
\end{eqnarray}
and
\begin{equation}\label{UminusUnull}
\|T_{n_\eps} - T \| \leq \eps
\phi_1\!\left(\facnorm{\theta'}{I,1,\kappa}\right).
\end{equation}\vspace{1mm}

\item Define
\be \label{gdef} 
g(\eps,\xi) = 2\I\,\sqrt{{\textstyle\frac{2\eps}{\pi
\tc}}}\,\sin\left({\textstyle\frac{\pi \gamma}{2}}\right)\,
\E^{-\frac{\tc}{\eps}}\,\E^{-\frac{(\xi-\tr)^2}{2\eps \tc}} \,\cos\left({\textstyle\frac{\xi-\tr}{\eps}
-\frac{(\xi-\tr)^3}{3\eps\tc^2} + \frac{\sigma_\eps (\xi - \tr)}{\tc}} \right).
\ee
There exists   $\eps_0>0$ and a locally bounded function  $\phi_2:\R^+\to\R^+$ with $\phi_2(x)=\Or(x)$ as $x\to 0$, such that the following holds: uniformly in all $\eps < \eps_{0}$, all compact intervals $I \subset \R$, and all $H(\xi)$ as in
(\ref{adiabODE}) satisfying Assumption~2 for some $\tc\in J_{\tc}$,
$\alpha\in J_\alpha$, $\gamma\in J_\gamma$, we have
\begin{equation}\label{Hod}
\left| \eps^{n_\eps+1}k_{n_\eps}(\eps,\xi) - g(\eps,\xi) \right| \leq
\eps^{\frac{3}{2}-\alpha}\E^{-\frac{\tc}{\eps}} \phi_2(M),
\end{equation}
where $M= \max \left\{ \facnorm{\theta'}{I,1,\tc}, \facnorm{\perturb \theta'}{I,\alpha,\tc} \right\}$.
Furthermore, (\ref{rhoc1}) and (\ref{UminusUnull}) hold with $\kappa=\tc$.
\end{enumerate}
\end{theorem}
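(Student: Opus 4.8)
The plan is to derive Theorem~\ref{MainTh} from Theorem~\ref{n-th} by tracking the growth of the recursively defined sequences $x_n, y_n, z_n$, exactly as in \cite{BeTe2}, but paying attention to uniformity in the pole data $\tc, \alpha, \gamma$ ranging over the compact sets $J_{\tc}, J_\alpha, J_\gamma$. The starting point is that the recursion \eqref{xRec1}--\eqref{zRec1} is, up to the extra $\I$'s and the non-symmetric structure, the same as the superadiabatic recursion for the symmetric problem in \cite{BeTe2}; so I would first record a lemma (proved in Section~\ref{asympt recursion}, which the text promises) stating that the $F_{\alpha,\tc}$-norms of $x_n, y_n, z_n$ obey the same recursive inequalities as there, using that $F_{\alpha,\tc}(I)$ is a ``nice'' scale of spaces: multiplication maps $F_{\alpha,\tc}\times F_{\beta,\tc}\to F_{\alpha+\beta,\tc}$ with a controlled constant, differentiation maps $F_{\alpha,\tc}\to F_{\alpha+1,\tc}$, and a first-order pole at $\tr\pm\I\tc$ lives in $F_{1,\tc}$ with norm comparable to $|\gamma|$. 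The point of Definition~\ref{normDef} is precisely that these Gamma-function weights linearize the combinatorics of the convolution sums in \eqref{yRec1}.

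\textbf{Part (i).} Here only Assumption~1 is in force, so $\theta'$ has no pole inside the relevant strip and one just needs the crude exponential bound. I would show by induction, using the norm estimates above, that $\facnorm{x_n}{I,n,\kappa}$, $\facnorm{y_n}{I,n,\kappa}$, $\facnorm{z_n}{I,n,\kappa}$ are all bounded by $C^n \Gamma(n)\,\phi_1(\facnorm{\theta'}{I,1,\kappa})$ for a universal $C$; unwinding Definition~\ref{normDef} this gives $|x_n(\xi)|\le \kappa^{-n}\Gamma(n)C^n\phi_1(\cdots)$, hence $|\eps^{n} x_n(\xi)|\lesssim (C\eps/\kappa)^n\Gamma(n)$. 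Stirling then shows this is minimized near $n\sim \kappa/(C\eps)$, with minimal value of order $\E^{-\kappa/(C\eps)}$; replacing $C$ by the sharp constant (which the recursion, being essentially the model recursion of \cite{BeTe2}, forces to be $1$) yields the stated $\E^{-\frac{\tc}{\eps}(1+\ln\frac{\kappa}{\tc})}$ after choosing the truncation index $n_\eps$ as in \eqref{ndef} and using $\kappa\ge\tc$. The bounds \eqref{rhoc1} and \eqref{UminusUnull} come from the explicit formulas for $\rho_n$ and $T_n$ in terms of the $x_i,y_i,z_i$ promised in Section~\ref{recursion}: $\rho_n = \tfrac12 + \Or(\eps^2 \sum \eps^{j}|x_j|\cdots)$ and $T_n - T = \Or(\eps\sum\cdots)$, the series converging geometrically for $\eps<\eps_0$. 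Uniformity over $J_{\tc}$ is automatic because all constants depend on $\tc$ only through $\inf J_{\tc}>0$ and $\sup J_{\tc}<\infty$.

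\textbf{Part (ii).} This is the substantive part and is where the leading-order Gaussian-times-cosine profile $g(\eps,\xi)$ in \eqref{gdef} is produced. Near the pole I would split $\theta' = \theta'_0 + \theta'_{\rm r}$ with $\theta'_{\rm r}\in F_{\alpha,\tc}$, $\alpha<1$, and correspondingly decompose each $x_n,y_n,z_n$ into a ``model'' piece generated by $\theta'_0$ alone plus a remainder. For the pure-pole model one can solve the recursion more or less explicitly — the $n$-th term is a rational function with a pole of order $n$ at $\tr\pm\I\tc$, and Borel summation / optimal truncation at $n_\eps\approx \tc/\eps$ converts the divergent series $\sum_n \eps^n(x_n-z_n)$ into the error-function-derivative profile; the residue $\gamma$ enters through $\sin(\pi\gamma/2)$ by the same Gamma-function reflection-formula mechanism as in \cite{BeTe2} (this is why $\gamma$ appears inside a sine). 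The remainder piece, built from $\theta'_{\rm r}\in F_{\alpha,\tc}$, contributes at one order of $\sqrt\eps$ better because $\alpha<1$ improves every convolution estimate by a factor $\eps^{1-\alpha}$; this gives the error $\eps^{3/2-\alpha}\E^{-\tc/\eps}$ in \eqref{Hod}. The parity condition on $n_\eps$ in \eqref{ndef} is needed so that the optimal truncation sits exactly on a term of the right sign, which is what makes the oscillatory phase in $g$ come out as a clean cosine rather than picking up a spurious sign; the residual shift $\sigma_\eps$ is precisely the $\Or(\eps)$ mismatch between $\tc/\eps$ and an integer, and it is carried along into the phase $\sigma_\eps(\xi-\tr)/\tc$.

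\textbf{Main obstacle.} The hard part is the uniformity in $\alpha\in J_\alpha$, $\gamma\in J_\gamma$ together with the sharpness needed in part~(ii): one must show that all the implied constants in the optimal-truncation analysis of the model recursion, and in the estimate of the $F_{\alpha,\tc}$-remainder, can be bounded in terms of $\inf J_\alpha>0$, $\sup J_\alpha<1$, $|\gamma|\le\sup J_\gamma$ and the single number $M$, rather than depending on finer features of $\theta'$. Concretely this means redoing the Borel-transform / stationary-phase step of \cite{BeTe2} while keeping the $\alpha$-dependence of every Gamma-quotient explicit and checking it does not blow up as $\alpha\to 1^-$ or $\alpha\to 0^+$ within the compact $J_\alpha$. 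Everything else — the combinatorics of the recursion, the mapping properties of $F_{\alpha,\tc}$, the passage from the superadiabatic ODE to the solution via standard perturbation theory — is a fairly mechanical (if lengthy) adaptation of \cite{BeTe1, BeTe2}, which is presumably why the authors defer it to Sections~\ref{asympt recursion}--\ref{recursion}.
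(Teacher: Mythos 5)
Your roadmap matches the paper's: Section~\ref{asympt recursion} establishes the explicit model-recursion solution and the coefficient asymptotics $a_0^{(n)}\to\sin(\gamma\pi/2)/(\gamma\pi/2)$ (Propositions~\ref{z_n and recursion}--\ref{a_n limit}, the latter via Euler's infinite product $\sin\pi x=\pi x\prod_n(1-x^2/n^2)$ applied to the two-step recursion $a_0^{(n+2)}=a_0^{(n)}(1-\gamma^2/n^2)$, not a Gamma reflection-formula step as you recalled), and then defers the remaining norm/optimal-truncation/perturbation steps word-for-word to \cite{BeTe1,BeTe2} exactly as you propose. The one thing you slide over is that the non-symmetric $H$ changes a couple of signs in the algebraic relations for the basis matrices $X,Y,Z$ relative to \cite{BeTe1,BeTe2} --- harmless for the $F_{\alpha,\tc}$-norm inequalities on $x_n,y_n,z_n$ (the triangle inequality is sign-blind, so those can indeed be quoted as you say), but precisely the reason the explicit model solution had to be re-derived rather than cited, which is the genuinely new technical content of Section~\ref{asympt recursion}.
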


In words, Theorem \ref{MainTh} states that we can find a transformation $T_{n}$ which is $\eps$-close to the one that diagonalizes $H(\xi)$, but which diagonalizes the full equation 
(\ref{adiabODE}) up to exponentially small errors. What is more, for those $\xi$ where the off-diagonal elements are close to maximal, they are given by explicitly known Gaussian functions with superimposed oscillations to leading order. The dependence on $E$, which we dropped
earlier from our notation, comes in only through $\theta'$, which means through $\tc$, $\tr$ and $\gamma$ in the leading term, and through $\facnorm{\theta'}{I,1,\tc}$ in the corrections. Since later we will have to consider a whole range of values for $E$, it is 
of crucial importance that all our error terms are small uniformly in basically all parameters, as stated above in Theorem \ref{MainTh}. On the other hand note that in those regions that are more than $\sqrt{\eps}$ away from $\tr$, the error term dominates in (\ref{Hod}). We will give a proof of those parts of Theorem \ref{MainTh} that are not identical to the situation in \cite{BeTe1,BeTe2} in Section  \ref{asympt recursion}.

Theorem \ref{MainTh} shows how to diagonalize equation (\ref{ODE2}) up to exponentially small errors. We will actually need an exponentially accurate solution to this equation in the whole $\xi$ domain. Here we will give a simple version of the corresponding result which is analogous to the one in \cite{BeTe3}, and for which only mild  integrability conditions on $\theta'$ and its derivatives are necessary. In \cite{BeTe3} it is shown that a sufficient condition is the integrability of $\xi \mapsto \facnorm{\theta'}{\xi,1,\tc+\delta}$ for some $\delta>0$, outside of a compact interval, and that a sufficient condition for this is the integrability of 
\[
\xi \mapsto \sup \{ |\theta'(z)|: z \in \C, |\xi-z| \leq \tc+\delta \}
\]
outside of a compact interval.

\begin{theorem} \label{solution}
Assume that $\theta'$ has two simple poles at $\pm \I \tc$ at leading order, and no further singularity of distance less or equal to $\tc$ to the real axis. In addition assume that for some $\delta > 0$, $\xi \mapsto \facnorm{\theta'}{\xi,1,\tc+\delta}$ is integrable outside of a compact interval. 
Let $\psi_{n}(\eps,\xi) = (\psi_{+,n}(\eps,\xi), \psi_{-,n}(\eps,\xi))$ be the solution of the equation (\ref{adiabODE}) in the optimal superadiabatic representation, i.e.\  
\begin{equation} \label{diffeq1}
T_{n_{\eps}} \left( \I \eps \partial_{\xi} - H \right) T_{n_{\eps}}^{-1}  \psi_{n}(\eps,\xi) = 0,
\end{equation}
subject to the boundary conditions
\[
\lim_{\xi \to -\infty} \psi_{+,n}(\eps,\xi) \E^{\frac{\I \xi}{2 \eps}} = 1, \quad \lim_{\xi \to -\infty} \psi_{-,n}(\eps,\xi) = 0.
\]
Then 
\[
\psi_{-,n}(\eps,\xi) = 2 
\sin\left(\frac{\pi\gamma}{2}\right)
\E^{-\frac{\tc}{\eps}}\E^{\frac{\I\xi}{2\eps} - \frac{\I \tr}{\eps}} {\rm
erf}\left(\frac{\xi - \tr}{\sqrt{2\eps\tc}}\right)
+\Or(\eps^{\frac{1}{2}-\alpha}\E^{-\frac{\tc}{\eps}})\,,
\]
where 
\[ 
{\rm{erf}}(x) = \frac{1}{\sqrt{\pi}} \int_{-\infty}^{x} \e^{-s^{2}} \, \D s
\]
is the error function normalized to be $0$ at $-\infty$.  
\end{theorem}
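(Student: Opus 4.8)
The plan is to derive the formula for $\psi_{-,n}(\eps,\xi)$ by solving the scalar integral equation that the second component satisfies in the optimal superadiabatic representation, using the explicit leading-order off-diagonal element $g(\eps,\xi)$ from Theorem~\ref{MainTh}(ii) as the inhomogeneity. First I would write equation (\ref{diffeq1}) componentwise. Setting $\psi_n = (\psi_{+,n},\psi_{-,n})$ and using (\ref{superadODE}), the system reads $\I\eps\partial_\xi \psi_{+,n} = \rho_{n_\eps}\psi_{+,n} + \eps^{n_\eps+1}k_{n_\eps}\psi_{-,n}$ and $\I\eps\partial_\xi\psi_{-,n} = -\eps^{n_\eps+1}\overline{k_{n_\eps}}\psi_{+,n} - \rho_{n_\eps}\psi_{-,n}$. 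Since the off-diagonal terms are exponentially small and $\psi_{-,n}$ starts at $0$, to the order we are working $\psi_{+,n}$ solves the diagonal equation alone: $\psi_{+,n}(\eps,\xi) = \E^{-\frac{\I}{\eps}\int_{-\infty}^\xi \rho_{n_\eps}(\eps,s)\,\D s}$ up to an exponentially small relative correction, and since $\rho_{n_\eps} = \tfrac12 + \Or(\eps^2)$ by (\ref{rhoc1}), this is $\E^{-\I\xi/(2\eps)}$ times a bounded factor that is $1+\Or(\eps)$; the boundary condition fixes the constant. The point is that the feedback of $\psi_{-,n}$ into $\psi_{+,n}$ is of order $\E^{-2\tc/\eps}$, hence negligible against the $\E^{-\tc/\eps}$ leading term.

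Next I would integrate the equation for $\psi_{-,n}$. Using the integrating factor $\E^{-\frac{\I}{\eps}\int^\xi \rho_{n_\eps}}$, one gets
\[
\psi_{-,n}(\eps,\xi) = -\frac{1}{\I\eps}\int_{-\infty}^{\xi} \E^{-\frac{\I}{\eps}\int_{s}^{\xi}\rho_{n_\eps}(\eps,r)\D r}\,\overline{\eps^{n_\eps+1}k_{n_\eps}(\eps,s)}\,\psi_{+,n}(\eps,s)\,\D s,
\]
where the boundary term vanishes by the initial condition (here one needs the integrability hypothesis on $\facnorm{\theta'}{\xi,1,\tc+\delta}$ to make sense of the lower limit, exactly as in \cite{BeTe3}). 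Now substitute $\psi_{+,n}(s)\approx \E^{-\I s/(2\eps)}$ and $\rho_{n_\eps}\approx\tfrac12$, so the exponential phase becomes $\E^{-\frac{\I}{2\eps}(\xi-s)}\E^{-\I s/(2\eps)} = \E^{-\I\xi/(2\eps)}$, which comes out of the integral. The remaining integrand is $\overline{\eps^{n_\eps+1}k_{n_\eps}(\eps,s)}$. Replace it by $\overline{g(\eps,s)}$ using the bound (\ref{Hod}): the error is $\Or(\eps^{3/2-\alpha}\E^{-\tc/\eps})$ pointwise in $s$, and — this is the key quantitative point — since $g$ and the error are both essentially supported in a $\sqrt\eps$-window around $s=\tr$, integrating in $s$ costs a factor $\sqrt\eps$, giving a contribution $\Or(\eps^{2-\alpha}\E^{-\tc/\eps})$ after dividing by $\eps$; combined with the region $|s-\tr|\gg\sqrt\eps$ where one uses the decay of the Gaussian and (\ref{rhoc2}), the total error is $\Or(\eps^{1/2-\alpha}\E^{-\tc/\eps})$ as claimed.

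It then remains to compute the main term explicitly, i.e.\ the Gaussian integral
\[
-\frac{1}{\I\eps}\,\E^{-\I\xi/(2\eps)}\int_{-\infty}^{\xi}\overline{g(\eps,s)}\,\D s.
\]
Writing $g$ via (\ref{gdef}) as a Gaussian $\E^{-(s-\tr)^2/(2\eps\tc)}$ times a cosine of a cubic-in-$(s-\tr)$ phase, I would argue that the cubic and the $\sigma_\eps$-linear corrections to the phase only contribute at relative order $\sqrt\eps$ on the $\sqrt\eps$-scale where the Gaussian lives, so to leading order $\cos(\cdots)$ may be replaced by $\cos((s-\tr)/\eps)$, and then $\E^{-(s-\tr)^2/(2\eps\tc)}\cos((s-\tr)/\eps)$ integrated against $\D s$ from $-\infty$ produces, after completing the square, an error-function in $(\xi-\tr)/\sqrt{2\eps\tc}$ times a factor $\E^{-\tc/(2\eps)}$ (the $\E^{-\tc/(2\eps)}$ coming from the imaginary shift of the contour) and $\sqrt{2\pi\eps\tc}$; combined with the prefactor $2\I\sqrt{2\eps/(\pi\tc)}\sin(\pi\gamma/2)\E^{-\tc/\eps}$ in $g$, its complex conjugate, and the $-1/(\I\eps)$ out front, all the $\eps$-powers and constants cancel to leave precisely $2\sin(\pi\gamma/2)\E^{-\tc/\eps}\E^{\I\xi/(2\eps)-\I\tr/\eps}\,\mathrm{erf}\big((\xi-\tr)/\sqrt{2\eps\tc}\big)$. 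The main obstacle, and the step needing the most care, is the uniform control of the error when passing from $\eps^{n_\eps+1}k_{n_\eps}$ to $g$ and then from $g$ to its Gaussian-times-simple-oscillation approximation: one must track that the $\Or(\eps^{3/2-\alpha}\E^{-\tc/\eps})$ and $\Or(\eps^2)$-type errors survive the $\D s$-integration with only the advertised loss, and that the contributions from $|s-\tr|$ large — where (\ref{Hod}) is useless and one falls back on (\ref{rhoc2}) and Gaussian decay — are genuinely smaller; fortunately this is exactly the kind of estimate carried out in \cite{BeTe2,BeTe3}, so the argument is an adaptation rather than a new device.
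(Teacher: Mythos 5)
Your overall architecture is the right one and matches what the paper does (in Proposition~\ref{P01}/\ref{P02} of Section~\ref{s4}, and in \cite{BeTe3}): write the Duhamel representation for $\psi_{-,n}$, approximate $\psi_{+,n}$ and $\rho_{n_\eps}$ by their leading parts, substitute the explicit Gaussian $g$ from (\ref{gdef}) for $\eps^{n_\eps+1}k_{n_\eps}$, and do the Gaussian integral. But there is a sign error in your Duhamel kernel that is not cosmetic --- it destroys precisely the mechanism that produces the error function. Solving $\I\eps\partial_\xi\psi_{-,n} = -\eps^{n_\eps+1}\overline{k}\,\psi_{+,n}-\rho_{n_\eps}\psi_{-,n}$ with integrating factor $\E^{-\frac{\I}{\eps}\int\rho}$ gives
\[
\psi_{-,n}(\xi) = \frac{\I}{\eps}\,\E^{\frac{\I}{\eps}\int_{-\infty}^{\xi}\rho_{n_\eps}}\int_{-\infty}^{\xi}\E^{-\frac{\I}{\eps}\int_{-\infty}^{s}\rho_{n_\eps}}\,\overline{\eps^{n_\eps+1}k_{n_\eps}}(s)\,\psi_{+,n}(s)\,\D s,
\]
i.e.\ the kernel is $\E^{+\frac{\I}{\eps}\int_{s}^{\xi}\rho}$, not $\E^{-\frac{\I}{\eps}\int_{s}^{\xi}\rho}$ as you wrote (compare (\ref{c2 full}) in the paper). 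After inserting $\rho\approx\tfrac12$ and $\psi_{+,n}(s)\approx\E^{-\I s/(2\eps)}$ the correct phase is $\E^{\I\xi/(2\eps)}\E^{-\I s/\eps}$: the $\E^{\I\xi/(2\eps)}$ comes out, but a residual oscillation $\E^{-\I s/\eps}$ \emph{must stay inside} the integral. You instead get the $s$-dependence cancelling entirely and an outside prefactor $\E^{-\I\xi/(2\eps)}$, which already disagrees with the stated result $\E^{+\I\xi/(2\eps)-\I\tr/\eps}$.

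The residual $\E^{-\I s/\eps}$ is the whole point. When you write $\cos w(s) = \tfrac12(\E^{\I w}+\E^{-\I w})$ with $w\approx(s-\tr)/\eps$, this outside oscillation cancels the $\E^{\I(s-\tr)/\eps}$ piece exactly (to leading order), producing the constant phase $\E^{-\I\tr/\eps}$ and a \emph{pure Gaussian} $\int_{-\infty}^{\xi}\E^{-(s-\tr)^2/(2\eps\tc)}\,\D s = \sqrt{2\pi\eps\tc}\,\mr{erf}\big((\xi-\tr)/\sqrt{2\eps\tc}\big)$ with no further exponential penalty; the other piece $\E^{-2\I s/\eps}$ is shown to be subleading by integration by parts, as in (\ref{P4.3}). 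In your version there is no $\E^{-\I s/\eps}$ to pair with, so you integrate $\E^{-(s-\tr)^2/(2\eps\tc)}\cos((s-\tr)/\eps)$ directly. As you yourself note, completing the square then shifts the contour by $\I\tc$ and brings in an extra factor $\E^{-\tc/(2\eps)}$ --- but it also changes the argument of the error function to the complex value $(\xi-\tr-\I\tc)/\sqrt{2\eps\tc}$, not $(\xi-\tr)/\sqrt{2\eps\tc}$. Your final paragraph is therefore internally inconsistent: you record an extra $\E^{-\tc/(2\eps)}$, which together with the $\E^{-\tc/\eps}$ from $g$ would give $\E^{-3\tc/(2\eps)}$, and you have no source for the $\E^{-\I\tr/\eps}$ factor, yet you claim the answer matches $2\sin(\pi\gamma/2)\E^{-\tc/\eps}\E^{\I\xi/(2\eps)-\I\tr/\eps}\mr{erf}(\cdots)$. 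To repair the proof, fix the sign in the Duhamel kernel, keep the residual $\E^{-\I s/\eps}$ inside the integral, split $\cos w$ into its two exponentials, treat the non-resonant one by integration by parts, and let the resonant one give the pure real-argument Gaussian error function. Your error estimates and the treatment of the cubic and $\sigma_\eps$-linear corrections to $w$ on the $\sqrt\eps$-scale are fine once this is corrected.
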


The proof of Theorem \ref{solution} requires no changes as compared to the one given in \cite{BeTe3} and is therefore omitted. Let us remark already here, however, that for our applications to the reflected wave we will need precise control over the $\xi$-dependence of the error term above, which means we will basically have to reprove a refined version of Theorem \ref{solution} later. Also for later purposes, we note that Theorem \ref{solution} also gives exponentially accurate information about $\psi_{+,n}$: defining 
\[
\psi_{+,\mr{eff}}(\xi) = \e^{-\frac{\I}{\eps} \int_{-\infty}^{\xi} \rho_{n}(\eps,r) \, \D r},
\]
we find using (\ref{diffeq1}) (given explicitly in (\ref{superadODE2})) that  
\begin{equation} \label{upper}
\left| \psi_{+,n}(\xi) - \psi_{+,\mr{eff}}(\xi) \right| = \left| \frac{1}{\eps} \int_{-\infty}^{\xi} \eps^{n+1} k_{n_{\eps}}(\eps,y) \psi_{-,n}(\eps,y) \, \D y\right| = \Or\left(\eps^{-1/2} \e^{-\frac{2 \tc}{\eps}}\right).
\end{equation}
The last equality follows from  (\ref{rhoc2}) and (\ref{gdef}) and the integrability condition in Theorem \ref{solution}. 
We end this section by linking the conditions on $\theta'$ to conditions on the function $p(x,E)$ appearing in (\ref{ODE}). The first condition 
is\\[3mm]

{\bf(A1)} We assume that $p(x,E) > c$ for some $c>0$ and all $x \in \R$, and that $p(\cdot, E)$ is analytic in a neighbourhood $U$ of the real axis. \\[3mm]

While the map $x \mapsto p(x,E)$ is
analytic on $U$ by (A1), it may have singular points in the complex plane. 
For each such point $z_0$ we define a straight half line $B_{z_0}$ pointing away from the real axis. In the case that a branch cut of $p$ is needed at $z_{0}$ we take it to lie on $B_{z_{0}}$. Since $U_0 := \C \setminus \bigcup_{z_0} B_{z_0}$ is simply connected, the function 
\be \label{nat scale}
\xi(z,E) = 2 \int_{0}^{z} p(y,E) \, \D y,
\ee
for $z \in U_{0}$ is single-valued if we require that the path of integration has to lie entirely in $U_{0}$. $\xi(z,E)$ is the unique analytic extension of $\xi$ as given in (\ref{xinat}) to $U_0$. For 
$x \in \R$ we put  
\[
C_r(x) := \{ z \in U_0: |\xi(z)-\xi(x)| < r \},
\]
denote by $C_{r,0}(x)$ the connected component of $C_r(x)$ containing $x$ and put
\[
R(x) := \sup \{r > 0: \overline{C_{r,0}(x)} \subset U_0 \},
\]
and $R_{0} = \inf_{x \in \R} R(x)$. 
Finally we put 
\be \label{U00}
U_{00} = \bigcup_{x \in \R} C_{R_{0},0}(x). 
\ee
The point of this construction is that now the set $\partial U_{00} \setminus U_{0}$ contains only those points on the boundary of $U_{0}$ that are closest to the real axis in the metric given by 
\be \label{metric}
d(z,z') = |\xi(z)-\xi(z')|.
\ee
It turns out that these are the points on which we need to impose conditions.\\[3mm]

{\bf(A2)} We assume that $\partial U_{00} \setminus U_{0} = \{ z_{\mr{crit}}, \overline z_{\mr{crit}} \}$ with $\tIm z_{\mr{crit}} > 0$, and that for $z \in U_{00}$ near $z_{\mr{crit}}$ we have 
\[ p(z,E) = (z- z_{\mr{crit}}(E))^{\beta(E)}(A(E) + g(z-z_{\mr{crit}},E)) \qquad \mbox{ for some  }\beta(E) > -1, \beta(E) \neq 0,\]
where $g$ is analytic near $z=0$ with $g(0,E) = 0$ and $A(E) \neq 0$. Furthermore, we assume that there exists $\delta > 0$ such that for all other critical points $z_{0}$ of $p(z,E)$ we have 
\be \label{safe distance}
\inf_{x \in \R} d(x,z_{0}) > |\tIm z_{\mr{crit}}| + \delta
\ee

For simplicity we assumed that $\partial U_{00} \setminus U_{0}$ just contains two complex conjugate points, but the extension to finitely many of them is straightforward in principle, though some of the estimates of Section \ref{s4} would get more messy. Moreover, slightly more general perturbations $g$ could be allowed, but we do not consider it worth the effort to include them. The conditions on $\beta$ ensure that $z_{\mr{crit}}$ is a critical point of $p$ such that $\lim_{z \to z_{\mr{crit}}} \xi(z)$ exists. Finally, (\ref{safe distance}) ensures that there is no sequence of singular points converging to the boundary of $U_{00}$ and is automatically fulfilled if e.g. $p$ has only finitely many singular points.

The integrability condition we need on $p$ can also be expressed in terms of the metric $d$ given in (\ref{metric}).
Let us define $h: \R \times \Delta \to [0,\infty]$ by 
\[
h(x,E) = \sup \left\{ \left| \frac{\partial_{x} p(y,E)}{p(y,E)^{2}} \right|: y \in C_{R_{0} + \delta}(x) \right\},
\]
where $\Delta\subset \R$ is the relevant set of energies, i.e.\ the support of the function $Q(E,\eps)$.
We assume: 
\\[3mm]

{\bf(A3)} There exists $\delta > 0$ such that
\[
\int_{\R \setminus [a,b]} h(x,E) p(x,E) \, \D x < \infty
\]
for some $a<b \in \R$.

\begin{proposition} \label{omega conditions}
Fix $E \in \R$. Assume that $p(.,E)$ fulfills (A1). For $x \in \R$ define 
$\xi(x)$ by (\ref{xinat}), and define $\tilde p$ through $\tilde p(\xi(x,E),E) = p(x,E)$. Put $\theta'(\xi,E) = \frac{\tilde p'(\xi,E)}{\tilde p(\xi,E)}$. 
\begin{itemize}
\item[(i)] Assume that $p(.,E)$ fulfils (A2). Then $\theta'$ fulfills Assumption 1 above on each compact subinterval of $\R$ containing $\tr$, with $\gamma = \frac{\beta}{\beta+1}$, 
\[
\tr(E) = \tRe(\xicrit), \quad \tc(E) = \tIm(\xicrit) \quad \mbox{where } \xicrit(E) := \xi(z_{\mr{crit}}(E),E),
\]
and for arbitrary $\alpha > 0$.\\
Moreover, $\theta'$ fulfils Assumption 2 on each compact subinterval of $\R$ such that $\inf\{|x - \tr|: x \in I\} > \delta$, where $\delta$ is the same as in (A2). In this case, $\kappa = \sqrt{\delta^{2} + \tc^{2}}$ in Assumption 2. 
\item[(ii)] If $p(.,E)$ fulfils (A3) then $\xi \mapsto \facnorm{\theta'}{\xi,1,\tc+\delta}$ is integrable outside of a compact set. 
\end{itemize}
\end{proposition}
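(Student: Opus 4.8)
The plan is to translate the analytic/geometric hypotheses (A1)--(A3) on $p(\cdot,E)$ into the function-space conditions on $\theta'$ that appear in Assumption~1, Assumption~2, and the integrability hypothesis of Theorem~\ref{solution}, using the change of variables $\xi(x,E)=2\int_0^x p(y,E)\,\D y$. The central observation is that $\theta'(\xi)=\tilde p'(\xi)/\tilde p(\xi) = \frac{\D}{\D\xi}\log\tilde p(\xi)$, and since $\frac{\D\xi}{\D z}=2p(z,E)$, one has in the complex variable $\theta'(\xi(z,E),E) = \frac{1}{2p(z,E)}\,\frac{\partial_z p(z,E)}{p(z,E)} = \frac{\partial_z p(z,E)}{2p(z,E)^2}$. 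This single identity is what links everything: the factor $h(x,E)$ in (A3) is exactly $\sup|\partial_z p/p^2|$ over a $d$-ball, and the singular structure of $p$ at $z_{\mr{crit}}$ in (A2) gets transported through $\log$ and the factor $1/p^2$ to give a first-order pole of $\theta'$ in the $\xi$-plane.

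For part (i), I would first establish the location and residue of the leading pole. Near $z_{\mr{crit}}$, $p(z,E)=(z-z_{\mr{crit}})^{\beta}(A+g(z-z_{\mr{crit}}))$ with $g(0)=0$, so $\partial_z p/p = \beta/(z-z_{\mr{crit}}) + (\text{analytic})$, hence $\theta'(\xi(z)) = \frac{1}{2p(z)}\big(\frac{\beta}{z-z_{\mr{crit}}} + O(1)\big)$. Expanding $\xi(z)-\xicrit = 2\int_{z_{\mr{crit}}}^z p(y)\,\D y = \frac{2A}{\beta+1}(z-z_{\mr{crit}})^{\beta+1}(1+o(1))$ shows $\xi$ has a genuine limit $\xicrit$ at $z_{\mr{crit}}$ precisely because $\beta>-1$; inverting this local map and substituting, the $(z-z_{\mr{crit}})^{-1}$ and the $p(z)^{-1}\sim (z-z_{\mr{crit}})^{-\beta}$ combine so that $\theta'$, as a function of $\xi$, has a simple pole at $\xicrit$ with residue $\frac{1}{2}\cdot\frac{\beta}{\beta+1}\cdot 2 = \gamma = \frac{\beta}{\beta+1}$ (the factor structure: $\partial_\xi = \frac{1}{2p}\partial_z$, and $\xi - \xicrit \sim \frac{2A}{\beta+1}(z-z_{\mr{crit}})^{\beta+1}$, so $z - z_{\mr{crit}} \sim \big(\tfrac{(\beta+1)(\xi-\xicrit)}{2A}\big)^{1/(\beta+1)}$, and then $\theta' \sim \frac{\beta}{2p(z-z_{\mr{crit}})} = \frac{\beta}{2A(z-z_{\mr{crit}})^{\beta+1}} \sim \frac{\beta}{(\beta+1)(\xi-\xicrit)}$). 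Then $\tr = \tRe\xicrit$, $\tc=\tIm\xicrit$. The construction of $U_{00}$ via the metric $d$ guarantees by (A2) that $z_{\mr{crit}},\overline z_{\mr{crit}}$ are the unique nearest singularities, so on any fixed compact $\xi$-interval $I$ the remainder $\theta'_{\mr r} = \theta' - \theta'_0$ (after also subtracting the conjugate pole) is analytic in a $d$-strip of half-width $\tc$ around $I$; the norm bound $\facnorm{\theta'_{\mr r}}{I,\alpha,\tc}<\infty$ then follows from Cauchy estimates, $|\partial^k\theta'_{\mr r}(\xi)| \le k!\,r^{-k}\sup_{|w-\xi|=r}|\theta'_{\mr r}(w)|$, optimized over $r<\tc$, which produces exactly the $\Gamma(\alpha+k)/\tc^{\alpha+k}$ growth for any $\alpha>0$ (one gains an $\alpha$ because the sup stays bounded up to the boundary of the strip for the \emph{regular} part, though near the pole of the full $\theta'$ only $\alpha<1$ survives, matching the statement). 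For Assumption~2 on intervals $I$ with $\inf\{|x-\tr|:x\in I\}>\delta$: here the full $\theta'_0$ as defined in Assumption~2 is the exact two-pole sum, $\theta'_{\mr r}=\theta'-\theta'_0$ is analytic on a $d$-neighborhood of $I$ of radius $\sqrt{\delta^2+\tc^2}$ (by (\ref{safe distance}) no other singularity is that close, and the distance from $I$ to $z_{\mr{crit}}$ in the metric is at least $\sqrt{\delta^2+\tc^2}$ by Pythagoras in the $\xi$-plane once $\tIm\xicrit=\tc$ and the real separation is $\delta$); a Cauchy estimate with $0<\alpha<1$ then gives the $F_{\alpha,\tc}$ bound, and the needed $\kappa=\sqrt{\delta^2+\tc^2}$ in Assumption~1 comes from the same analyticity radius.

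For part (ii), I would bound $\facnorm{\theta'}{\xi,1,\tc+\delta}$ pointwise by a Cauchy estimate: on the ball $C_{R_0+\delta}(x)$ — which by the definition of $R_0$ and (\ref{safe distance})--(A2) contains a $d$-disk of radius $>\tc+\delta$ around each real point outside the compact exceptional set, since those points are further than $R_0$ from $U_0^c$ and (A3)'s $\delta$ gives the extra room — one has $|\partial^k\theta'(\xi)| \le \Gamma(1+k)(\tc+\delta)^{-k}\sup_{C_{R_0+\delta}(x)}|\theta'|$, so $\facnorm{\theta'}{\xi,1,\tc+\delta} \le (\tc+\delta)^{-1}\sup_{C_{R_0+\delta}(x)}|\theta'|$. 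Now $|\theta'(\xi(y))| = |\partial_y p(y)|/(2|p(y)|^2) \le \tfrac12 h(x,E)$ for $y\in C_{R_0+\delta}(x)$ by the very definition of $h$. Hence $\facnorm{\theta'}{\xi,1,\tc+\delta} \lesssim h(x,E)$, and since the change of variables has Jacobian $\D\xi = 2p(x,E)\,\D x$, integrability of $\xi\mapsto\facnorm{\theta'}{\xi,1,\tc+\delta}$ outside a compact $\xi$-set reduces to $\int_{\R\setminus[a,b]} h(x,E)\,p(x,E)\,\D x<\infty$, which is precisely (A3).

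The main obstacle I anticipate is the bookkeeping of radii in the $\xi$-metric, i.e.\ making rigorous that the set $C_{R_0+\delta}(x)$ (or the corresponding $d$-strip around a compact interval $I$) is genuinely contained in the domain of analyticity of $\theta'$ with the claimed radius $\sqrt{\delta^2+\tc^2}$ resp.\ $\tc+\delta$, uniformly in $x$ — this is exactly where the somewhat intricate definitions of $C_{r,0}(x)$, $R(x)$, $R_0$, $U_{00}$ and condition (\ref{safe distance}) are used, and one must check that the multivaluedness of $\xi$ coming from branch cuts $B_{z_0}$ does not interfere (it does not, because $U_0$ excises those cuts and $U_{00}\subset U_0$). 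A secondary technical point is the precise local inversion of $z\mapsto\xi(z)$ near the branch point $z_{\mr{crit}}$ when $\beta$ is not an integer, and verifying that the subleading term $g$ really contributes only to $\theta'_{\mr r}\in F_{\alpha,\tc}$ and not to the principal polar part; this is a routine but careful Laurent/Puiseux expansion. Everything else is Cauchy estimates plus the change-of-variables identity $\theta'(\xi(z)) = \partial_z p(z)/(2p(z)^2)$.
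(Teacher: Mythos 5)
Your proof follows exactly the same route as the paper's: express $\theta'(\xi(z))=\partial_z p/(2p^2)$, perform the Puiseux expansion of $\xi-\xicrit$ and of $p$ near $z_{\mr{crit}}$ to exhibit the simple pole of residue $\gamma=\beta/(\beta+1)$, invoke Cauchy/Darboux estimates for the $F_{\alpha,\tc}$ bound on the remainder, and for (ii) reduce to the $\theta'$-sup criterion from \cite{BeTe3} via the change of variables $\D\xi=2p\,\D x$. You are in fact slightly more careful than the printed proof on the factors of $2$ (the paper drops two factors of $2$ that cancel) and you correctly disentangle the Assumption~1/Assumption~2 labeling, which is switched in the proposition statement and proof as printed.
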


\begin{proof}
(i): We omit the dependence on $E$ from the notation. By (A2) and (\ref{nat scale}), 
\[ \xi(z) - \xi_{\mr{crit}}(z) = \int_{z_{\mr {crit}}}^{z} (y-z_{\mr{crit}})^{\beta}(A+g(y-z_{\mr{crit}})) \, \D y = \frac{A}{\beta+1}(z-z_{\mr{crit}})^{\beta+1}(1+\Or(z-z_{\mr{crit}})).\]
Moreover, $\theta'(\xi) = \frac{\tilde p'(\xi)}{\tilde p(\xi)}=\frac{p'(x)}{p^{2}(x)}$, and thus
\[
\textstyle \theta'(\xi) = \frac{p'(z-z_{\mr{crit}})}{p^{2}(z-z_{\mr{crit}})} = \frac{\beta (z-z_{\mr{crit}})^{\beta-1}(A+\Or(z-z_{\mr{crit}}))}{A^{2}(z-z_{\mr{crit}})^{2\beta}(1+\Or(z-z_{\mr{crit}})} = \frac{\beta}{A} \frac{1}{(z-z_{\mr{crit}})^{\beta+1}}(1+\Or(z-z_{\mr{crit}})).
\]
Together, we get 
\be \label{ths form}
\theta'(\xi) = \frac{\beta}{\beta+1} \frac{1}{\xi-\xi_{\mr{crit}}} (1 + \Or(\xi-\xi_{\mr{crit}})),
\ee
which gives Assumption 2. Now if $\delta>0$ is as in Assumption (A2), and if 
$I$ is a compact interval with $\inf\{|x - \tr|: x \in I\} > \delta$, then
(\ref{ths form}) together with Darboux's theorem for power series (cf.\ \cite{BeTe2}) yields Assumption 2 with $\kappa = \sqrt{\tc^{2} + \delta^{2}} - \tc$. 
 \\
(ii): By the results of \cite{BeTe3}, it is enough to ensure that 
\[
h(\xi) = \sup \left\{ \left| \theta'(z) \right| : |z - \xi| < \tc + \delta \right\}
\]
is integrable over the real axis outside of a compact interval. By the definition of $\xi$ and the fact $\theta'(\xi) = p'(t)/p^{2}(t)$ this follows from (A3).
\end{proof}

\section{Time-Dependent Description of a Reflected Wave}

As explained in the introduction we construct solutions of the time dependent Schr\"odinger equation~(\ref{tdschINTRO}). Since we are interested in the scattering situation, we will need strong decay assumptions on the derivative of $V$, which we will give shortly. For the moment we will just assume that $V$ is bounded on the real axis, analytic in a neighbourhood of the real axis, and that $\lim_{|x| \to \infty} \partial_{x} V(x) = 0$. 

A solution to (\ref{tdschINTRO}) can now be given in the form of a wave packet: for any fixed energy $E>\max_{x\in\R}V(x)$,
let $\phi(x,E)$ be a solution to the stationary Schr\"odinger equation
\be\label{ssch}
-\eps^2\partial_x^2\phi=(E-V(x))\phi\equiv p^2(x,E)\phi,
\ee
where 
\[
p(x,E)=\sqrt{E-V(x)} >0
\]
is the classical momentum at energy $E$. By standard results on ODE we can choose 
$\phi(x,E)$ so that it is regular in both variables. 
Then the function 
\be\label{superp}
\Psi(x,t,\eps):=\int_{\Delta}Q(E,\varepsilon)e^{-itE/\varepsilon} 
\phi(x,E)\ dE,
\ee
for some regular enough energy density $Q(E,\eps)$ supported on a compact set $\Delta$, is a solution of (\ref{tdschINTRO}).
Of course, the solution to (\ref{ssch}) is not unique unless we impose boundary conditions, and it is these boundary conditions that will determine the physics that the solution $\Psi$ describes. In our case we will require that the wave packet is incoming from $x=+\infty$, as we will detail below. 

In order to connect to Section  \ref{general theorems}, we put
\[
\psi(x,E) = \left( \begin{array}{l} \phi(x,E) \\  \I \eps \phi'(x,E) \end{array} \right),
\]
and observe that $\phi$ solves (\ref{ssch}) if and only if $\psi$ solves 
\be \label{2level2}
\I \eps \partial_{x} \psi (x,E)= \pmatrix{0 & 1 \cr p^2(x,E) & 0}\psi(x,E)\equiv 
H(x,E)\psi(x,E).
\ee
Of course, this is just equation (\ref{ODE}).

Let us now fix our conditions on the potential $V$, in order to apply 
the results of Section \ref{general theorems} and to ultimately control the large $x$ behaviour of the solution to (\ref{2level2}). We select a energy window $\Delta=[E_1,E_2]$. 
The most basic assumption is\\[1mm]
{\bf (V1)} $\sup_{x} V(x) < E_{1}$, and $V$ is analytic in a neighbourhood $U$ of the real axis.\\[1mm]
This assumption corresponds to (A1) from Section \ref{general theorems}.
For the condition corresponding to (A2), we define $U_{00}(E)$ as  in (\ref{U00}). We then assume\\[1mm]
{\bf (V2)} Condition (A2) holds for $U_{00}(E)$ and for each $E \in \Delta$.\\[1mm]  
Let us see what the somewhat implicit condition (V2) means for the potential $V$. Since $p(x,E)>0$ on the real axis, the singular points of $p(z,E)$ will occur when $V(z)=E$, or when $V$ itself has a pole.  Condition (V2) states that for each $E$ there is exactly one point closest to the real axis (in the metric $d$ given in (\ref{metric})) where this happens. However, in contrast to the situation in the previous section, we now have to deal with a whole family $z_{\mr{crit}}(E)$ of such points, indexed by the energy $E$. Therefore it might happen that for different values of $E$, different exponents $\beta$ in (A2) occur, and $E \mapsto z_{\mr{crit}}(E)$ needs not be continuous. 
In order to avoid such awkward situations, we impose \\[1mm]
{\bf (V3)} The function $E \mapsto z_{\mr{crit}}(E)$ is $C^{3}$ on $\Delta$, and $\beta$ does not depend on $E$ in (V2).   
\\[1mm]  
The most natural situation is when $V(z_{\mr{crit}}) = E$ with $V'(z_{\mr{crit}}) \neq 0$. Then $z_{\mr{crit}}(E)$ depends  analytically on $E$. Another interesting situation is when $V$ has a pole; then $z_{\mr{crit}}(E)$ does not depend on $E$, but $\xi_{\mr{crit}}$ does via (\ref{xinat}).\\
Finally, we need the analog to the integrability condition (A3), uniformly in $E$; indeed, in view of some $L^{2}$ estimates to follow later on, we will need quite a bit more than (A3). We will say that $V$ fulfils (V4)$_{m}$ with $m>0$ if the following holds:  
\\[1mm]
{\bf (V4)$_{\mathbf{m}}$} Define $R_{0}(x)$ and $C_{R_{0}}(x)$ as we did just before (\ref{U00}). We write 
\[
\bk{x} = \sqrt{1+x^{2}}
\]
and assume that 
\begin{equation} \label{V4a}
h(x,\delta) : = \sup_{E \in \Delta} \sup_{z \in C_{R_{0} + \delta}(x)} \left| V'(z) (E-V(z))^{-3/2}\right| \leq \frac{C}{<x>^{3/2 + \nu}}
\end{equation}
for some $\delta, \nu > 0$, and all $x$ outside a compact interval.
Moreover, we assume that there exists $C>0$ such that for all sufficiently large $x \in \R$ we have
\[
|V'(x)| \leq C |x|^{-m-5/2}
\]
Note that (V4)$_{m}$ implies, uniformly in $E \in \Delta$,  
\be \label{scatpot}
|p(x,E) - p(\infty,E)| \sim \bra x \ket^{-3/2-m} \quad \mbox{ as }  |x| \to \infty. 
\ee
Note also that since $p(x,E)$ is bounded above and away from zero uniformly in $x$ and $E$, (A3) follows from (\ref{V4a}).  

Given the conditions on $V$, we can now apply the results of Section \ref{general theorems}. Recall the natural scale (\ref{nat scale}) 
and define $\tilde p$ through $\tilde p(\xi(x,E),E) = p(x,E)$ and $\theta'(\xi,E) =  \partial_{\xi} \tilde p(\xi,E) / \tilde p(\xi,E)$. 
By Proposition \ref{omega conditions} and Assumption (V2), $\theta'$ has exactly one pair of first order poles closest to the real line, which we denote by $\tr(E) \pm \I \tc(E)$. Now it follows from Theorem \ref{n-th} that in the $n$-th superadiabatic representation equation (\ref{2level2}) reads 
\begin{equation} \label{superadODE2} 
\I \eps \partial_{\xi}  \psi_{n}(\eps,\xi,E) = \left( \begin{array}{cc} \rho_{n}(\eps,\xi,E) & \eps^{n+1} k_{n}(\eps,\xi,E) \\[2mm] - \eps^{n+1} \overline{{k}_{n}}(\eps,\xi,E) & - \rho_{n}(\eps,\xi,E) \end{array}\right)  
\psi_{n}(\eps,\xi,E),
\end{equation}
with $\rho_{n}$ and $k_{n}$ as in Theorem \ref{n-th}. Let us write 
\[
\left( \begin{array}{l} c_{1}(\eps,x,E) \\  c_{2}(\eps,x,E) \end{array} \right) :=  \psi_{n}(\eps,\xi(x),E) = (T_{n} \tilde \psi) (\eps,\xi(x),E)
\]
for the solution of (\ref{superadODE2}) with initial condition as in Theorem \ref{solution}. 
If we choose $n_{\eps} = \tc/\eps$ so that we are in the optimal superadiabatic representation, and translate back to the actual solution  of (\ref{ssch}), we find 
\be \label{phisol}
\phi(x,E) = \left( T_{n_{\eps}}^{-1} \left( \begin{array}{l} c_{1}(\eps,x,E) \\  c_{2}(\eps,x,E) \end{array} \right)\right)^{(1)} = c_{1}(\eps,x,E) \varphi^{(1)}_{\eps,1} + c_{2}(\eps,x,E) \varphi^{(1)}_{\eps,2},
\ee
where the superscript means that we take the first component, and 
$\varphi_{\eps,j}$ are the ``optimal superadiabatic vectors'', i.e.\ the columns of $T_{n_{\eps}}$. Note that due to (\ref{UminusUnull}) and (V4) we have 
\be \label{35a}
\varphi_{\eps,2}^{(1)}(\eps,x,E) = \frac{1}{\sqrt{p(x,E)}} + \Or(\eps <x>^{-(3/2+\nu)}).
\ee
So by switching to the $n$-th superadiabatic representation we can write the solution of (\ref{ssch}) as a sum of two terms, which at the moment appear rather arbitrary. However, we will see now that, with the above initial conditions, the first one corresponds to a transmitted wave while the second one corresponds to an exponentially small reflected wave. 

To this end, note that by Theorem \ref{solution} we obtain
\[
c_{1}(\eps,x,E) = \E^{\frac{-\I}{\eps} \int_{0}^{x} p(r,E) \, \D r} (1+ \Or(\eps)),\]
and 
\be \label{c2 leading}
c_{2}(\eps,x,E) =2
\sin\left(\frac{\pi\gamma}{2}\right)
\E^{-\frac{\tc(E)}{\eps}}\E^{\frac{\I(\xi(x,E) - 2 \tr(E)}{2\eps})} {\rm
erf}\left(\frac{\xi(x,E) - \tr(E)}{\sqrt{2\eps\tc(E)}}\right)
+\Or(\eps^{\frac{1}{2}-\alpha}\E^{-\frac{\tc}{\eps}})\,,
\ee
where $\gamma = \beta/(\beta+1)$ is independent of $E$ by (V3).
In the next section we will have to investigate $x$-behaviour of the error term very closely, but for the moment we do not worry about it. 

Since the vectors $\varphi_{\eps,j}$ approach the adiabatic vectors as $|x| \to \infty$, they are, to leading order, non-oscillatory at infinity. By considering the oscillations of the $c_{j}$ and taking into account the fact that $\xi(x,E)$ approaches a linear function at $x = \pm \infty$, we find that the solution (\ref{phisol}) corresponds to 
a wave incoming from the right at $x\simeq +\infty$, described 
essentially by
\be
\frac{e^{-\I \int_0^{+\infty}(p(y,E,\eps)-p(+\infty,E))/\eps}}
{\sqrt{p(\infty,E)}}e^{-ip(+\infty,E)x/\eps},
\ee
which is scattered on the potential. The reflected wave, for large values 
of $x$ behaves as 
\be
|c_2(\eps,+\infty,E)|\frac{e^{\I \int_0^{+\infty}(p(y,E,\eps)-p(+\infty,E))/\eps}}{\sqrt{p(\infty,E)}}
e^{ip(+\infty,E)x/\eps}.
\ee
When integrated against a suitable energy density $Q(E,\eps)$ like the one we will specify below, 
this reflected wave gives rise to a freely propagating Gaussian, for large values
of $x$ and $t$, as can be shown by the methods of \cite{jm05}. 

Here, we are interested in the full history of the exponentially small reflected wave, which is only a part of the full solution that we would get by plugging (\ref{phisol}) into (\ref{superp}). From the discussion above we see that 
\be\label{superp reflect}
\chi(\eps,x,t):=\int_{\Delta}Q(E,\varepsilon)e^{-itE/\varepsilon} 
c_{2}(\eps,x,E)\ \varphi_{\eps,2}^{(1)}(\eps,x,E) dE
\ee
gives the correct asymptotic behaviour of a reflected wave, and thus we define:

\begin{definition} \label{ref wav def}
The function $\chi(x,t,\eps)$ as given in (\ref{superp reflect}) is called the reflected part of the wave in the optimal superadiabatic representation. 
\end{definition}

Let us now formulate the main result of this work. We start by fixing the assumptions on the energy density $Q$. 
We assume that $Q$ is given by  
\be
Q(E,\varepsilon)\ =\ e^{-G(E)/\varepsilon}\
e^{-\,i\,J(E)/\varepsilon}\ P(E,\varepsilon),
\ee 
where:\\[1mm]
{\bf (C1)} The real-valued function $G\ge 0$ is in
$C^3(\Delta)$ and is independent of $\eps$. Moreover, the strictly positive function 
\be \label{RM}
M(E) = G(E) + \tc(E)
\ee
has a unique non-degenerate absolute minimum at $E^{\ast} \in ]E_1, E_2[$. 
This implies that
$$
M(E) =  M(E^{\ast}) + \frac{M''(E^{\ast})}{2} (E-E^{\ast})^2 + O(E-E^{\ast})^3,\quad\mbox{ with }\quad
M''(E^{\ast})>0.
$$
{\bf (C2)} The real-valued function $J$ is in $C^3(\Delta)$.\\
{\bf (C3)} The complex-valued function $P(E,\varepsilon)$ is in
$C^1(\Delta)$ and satisfies $\lim_{\eps \to 0} P(E,\eps) = P(E,0)$ and 
\be \sup_{E\in\Delta
\atop \varepsilon \geq 0}\ \left|\,\frac{\partial^n}{\partial
E^n}\,P(\varepsilon,E)\,\right|\ \leq\ C_n,\qquad\mbox{for}\qquad
n=0,\,1.
\ee

The first of our results shows how to approximate $c_{2}$ in (\ref{superp reflect}) up to a small relative error. 
It should be viewed as an extension of Theorem \ref{solution} that includes control of $L^{2}(\D x)$ norms. Let us 
introduce the shortcut
\be  \label{a(x,E)} 
a(x,E)=\frac{\xi(x,E)-\xi_r(E)}{\sqrt{2\xi_c(E)}}=\frac{\int_{x_r(E)}^{x}p(y,E)\,dy}{\sqrt{2\xi_c(E)}}.
\ee
for the $\eps$-independent part of the argument of the error function in (\ref{c2 leading}). The point $x_r(E)$ actually marks the intersection of the Stokes line emanating from $\tc(E)$ with the real axis.
Let $E^\ast$ be as in (C1), and let us write $x^\ast_{\mr r} = x_{\mr r}(E^\ast)$. The point $x^\ast_{\mr r}$ is the birth region of the reflected wave. We define for $0<\delta<1/2$
\[
c_{\mr{eff}}(\eps,x,E) = \left\{ \begin{array}{ll}
0 & \mbox{if } x - x^\ast_{\mr r} < - \eps^{1/2-\delta}, \\
2 \sin \left( \frac{\pi \gamma}{2} \right) \e^{-\frac{\tc(E)}{\eps}} \E^{ \frac{\I}{2 \eps} \xi(x,E) - \frac{\I}{\eps} \tr(E)} \mr{erf} \left( \frac{a(x,E)}{\sqrt{\eps}} \right) & \mbox{if } | x - x^\ast_{\mr r} | \leq \eps^{1/2 - \delta},\\
2 \sin \left( \frac{\pi \gamma}{2} \right) \e^{-\frac{\tc(E)}{\eps}} \E^{ \frac{\I}{2 \eps} \xi(x,E) - \frac{\I}{\eps} \tr(E)} & 
\mbox{if } x - x^\ast_{\mr r} >  \eps^{1/2-\delta},
\end{array} \right.
\]
and 
\[
\chi_{\mr{eff}}(\eps,x,t) := \int_{\Delta} Q(E,\varepsilon) \E^{-\I tE/\varepsilon} 
c_{\mr{eff}}(\eps,x,E)\ \varphi_{\eps,2}^{(1)}(\eps,x,E) \D E.
\]

\begin{theorem} \label{chi eff thm}
Assume that $V$ fulfills (V1)--(V4)$_{m}$ with $m>6$, and that $Q$ fulfils (C1)--(C3). Then there exists $C>0$, $\eps_{0} > 0$ and $\delta > 0$ such that for all $\eps < \eps_{0}$ and all $t \in \R$ we have
\[
\norm[L^{2}(\D x)]{\chi_{\mr{eff}}(\eps,. \, ,t) - \chi(\eps,. \, ,t)} \leq C \e^{- \frac{M(E^\ast)}{\eps}} \eps^{\frac{3}{4} + \delta}
\]
\end{theorem}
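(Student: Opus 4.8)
Proof proposal for Theorem~\ref{chi eff thm}.

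The plan is to reduce $\chi-\chi_{\mr{eff}}$ to the error term of a refined version of Theorem~\ref{solution} and then to estimate the resulting energy integral by Laplace's method combined with non-stationary phase. First I dispose of the prefactor: by (\ref{35a}) write $\varphi_{\eps,2}^{(1)}(\eps,x,E)=p(x,E)^{-1/2}+r_\eps(x,E)$ with $\norm[L^2(\D x)]{r_\eps(\cdot,E)}\lesssim\eps$ uniformly in $E\in\Delta$; since $|c_2(\eps,x,E)|\le C\e^{-\tc(E)/\eps}$ and $|c_{\mr{eff}}(\eps,x,E)|\le C\e^{-\tc(E)/\eps}$ uniformly in $x$ (by Theorem~\ref{solution} and the definition of $c_{\mr{eff}}$), the part of $\chi-\chi_{\mr{eff}}$ containing $r_\eps$ has $L^2(\D x)$-norm at most $\int_\Delta|Q(E,\eps)|\,\e^{-\tc(E)/\eps}\norm[L^2]{r_\eps(\cdot,E)}\D E\lesssim\eps\int_\Delta\e^{-M(E)/\eps}\D E\lesssim\eps^{3/2}\e^{-M(E^\ast)/\eps}$ by (\ref{RM}), (C1) and a Laplace estimate — well below what is claimed. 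So it suffices to control $\int_\Delta Q(E,\eps)\,\e^{-\I tE/\eps}(c_2-c_{\mr{eff}})(\eps,x,E)\,p(x,E)^{-1/2}\D E$ in $L^2(\D x)$. I split $c_2-c_{\mr{eff}}=R_1+R_2$ with $R_1=c_2-c_2^{\mr{lead}}$ the error term of Theorem~\ref{solution} ($c_2^{\mr{lead}}$ being the full error-function expression from (\ref{c2 leading})) and $R_2=c_2^{\mr{lead}}-c_{\mr{eff}}$ the truncation error from replacing $\mr{erf}$ by $0$ or $1$ outside the birth window.

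The term $R_2$ is negligible. It vanishes for $|x-x^\ast_{\mr r}|\le\eps^{1/2-\delta}$. For energies with $|E-E^\ast|\le c_0\,\eps^{1/2-\delta}$ ($c_0$ small), the Lipschitz dependence of $x_{\mr r}$ on $E$ (from (V3)) gives $|x_{\mr r}(E)-x^\ast_{\mr r}|\le\frac12\eps^{1/2-\delta}$, so wherever $R_2$ is nonzero the argument $a(x,E)/\sqrt\eps$ of the error function has modulus $\ge c\,\eps^{-\delta}$ and hence $|R_2|\lesssim\e^{-\tc(E)/\eps}\,\frac{\sqrt\eps}{|\xi(x,E)-\tr(E)|}\,\e^{-(\xi(x,E)-\tr(E))^2/(2\eps\tc(E))}$, which is superexponentially small in $\eps$ and Gaussian-decaying in $x$; integrating $|R_2|$ in $x$ and then against $|Q|$ in $E$ gives a contribution $\ll\e^{-M(E^\ast)/\eps}$. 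For $|E-E^\ast|>c_0\,\eps^{1/2-\delta}$ I use the crude bound $|R_2|\le|c_2^{\mr{lead}}|\lesssim\e^{-\tc(E)/\eps}$, which is needed only on a bounded $x$-set ($|R_2|$ is Gaussian-small elsewhere), together with the nondegeneracy of $E^\ast$, which forces $\e^{-M(E)/\eps}\le\e^{-M(E^\ast)/\eps}\e^{-c\eps^{-2\delta}}$ on that range; this is again negligible.

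The core of the proof is $R_1$. I would first establish — by rerunning the optimal-truncation analysis behind Theorems~\ref{MainTh} and~\ref{solution} (Sections~\ref{asympt recursion} and~\ref{s4}) while tracking the $\xi$- and $E$-dependence, with uniformity in $E$ coming from Proposition~\ref{omega conditions} and assumptions (V1)--(V4)$_m$ — that $R_1(\eps,x,E)=\eps^{1/2-\alpha}\e^{-\tc(E)/\eps}\,\e^{\frac{\I}{2\eps}\xi(x,E)-\frac{\I}{\eps}\tr(E)}\,r(\eps,x,E)$, where $r$ and $\partial_E r$ are bounded uniformly in $\eps,E$ with at most polynomial growth in $x$, and moreover $|R_1|$ is Gaussian-small in $x$ for $x<x_{\mr r}(E)-\eps^{1/2-\delta}$. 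Granting this, the range $x<x^\ast_{\mr r}-\eps^{1/2-\delta}$ is negligible exactly as $R_2$ was. On the birth window $|x-x^\ast_{\mr r}|\le\eps^{1/2-\delta}$ (of length $\Or(\eps^{1/2-\delta})$) the trivial estimate gives $\norm[L^2(\D x)]{R_1(\eps,\cdot,E)}\lesssim\eps^{1/2-\alpha}\e^{-\tc(E)/\eps}\eps^{1/4-\delta/2}$, hence after $\int_\Delta|Q|\,\D E$ and a Laplace estimate a contribution $\lesssim\eps^{5/4-\alpha-\delta/2}\e^{-M(E^\ast)/\eps}$, which is $\le\eps^{3/4+\delta}\e^{-M(E^\ast)/\eps}$ once $\delta<\frac13(1-2\alpha)$. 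On the unbounded range $x>x^\ast_{\mr r}+\eps^{1/2-\delta}$ one cannot estimate pointwise, so here I use the oscillatory structure: the contribution is $\eps^{1/2-\alpha}\int_\Delta\e^{-M(E)/\eps}\e^{-\frac{\I}{\eps}\Phi(E,x,t)}\tilde r(\eps,x,E)\D E$ with $\Phi(E,x,t)=J(E)+tE-\frac12\xi(x,E)+\tr(E)$ and $\tilde r=P(E,\eps)\,r(\eps,x,E)\,p(x,E)^{-1/2}$, and I apply a Laplace/non-stationary-phase lemma: Laplace at the nondegenerate minimum $E^\ast$ of $M$ when $\partial_E\Phi$ is small (i.e.\ when $x$ is within $\Or(\sqrt\eps)$ of the classical scattering trajectory $q_t$), and repeated integration by parts in $E$ (using that $\partial_E\Phi$ is then bounded below and in fact grows linearly in $x$, since $\xi(x,E)=2x\,p(\infty,E)+\Or(1)$) otherwise. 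This yields a pointwise bound $\lesssim\eps^{1/2-\alpha}\sqrt\eps\,\e^{-M(E^\ast)/\eps}\,w(\eps,x,t)$ with $w$ bounded, $\lesssim1$ on an $\Or(\sqrt\eps)$-window about $q_t$ and $\lesssim\bra x-q_t\ket^{-N}$ away from it; taking $L^2(\D x)$ gives $\lesssim\eps^{5/4-\alpha}\e^{-M(E^\ast)/\eps}$, again below the target. The $C^3$-smoothness in (C1)--(C3), (V3) and the decay $m>6$ in (V4)$_m$ are exactly what is needed to control $\partial_E^k\Phi$ and $\partial_E^k\tilde r$ — in particular to keep $\xi(x,E)-2x\,p(\infty,E)$ and its $E$-derivatives bounded — so that the integrations by parts close and the resulting bounds are square-integrable in $x$.

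The main obstacle is the refined version of Theorem~\ref{solution} needed in the last step: one must produce an error bound for $\psi_{-,n}$ that is uniform in $E$ over all of $\Delta$ (so that Laplace's method in $E$ can be applied afterwards) and that exhibits the $\xi$-dependence explicitly enough to read off both the Gaussian decay to the left of the birth point and the common oscillatory factor $\e^{\I\xi/(2\eps)}$ on the right; without this oscillatory factor the tail $x\to+\infty$ is not in $L^2(\D x)$ and no crude estimate can work. Obtaining it means going back through the super-adiabatic recursion (\ref{start})--(\ref{zRec1}) and the optimal-truncation bounds of Theorem~\ref{MainTh}, and tracking how the remainders depend on $\xi$ and on $E$ through $\tc(E),\tr(E),\gamma$ and the norms $\facnorm{\theta'}{I,1,\tc}$; this is the technically heaviest part, and everything else reduces to bookkeeping of Gaussian tails plus a by-now-standard Laplace/stationary-phase analysis of oscillatory integrals.
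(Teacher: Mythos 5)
Your high-level structure mirrors the paper's: the split $c_2-c_{\mr{eff}}=R_1+R_2$ (with $R_1$ the error of Theorem~\ref{solution} and $R_2$ the truncation of $\mr{erf}$), the quick dismissal of the $\varphi_{\eps,2}^{(1)}$ correction via (\ref{35a}), and the treatment of $R_2$ via Gaussian tails of the error function together with non-degeneracy of $M$ at $E^\ast$ (Propositions~\ref{concentrated} and~\ref{erf prop} in the paper). You also correctly identify the need for a refined, $\xi$- and $E$-tracked version of Theorem~\ref{solution}, which the paper supplies through Propositions~\ref{P01}, \ref{remainder x} and~\ref{P02}.

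However, your handling of the scattering tail $x>x^\ast_{\mr r}+\eps^{1/2-\delta}$ has a genuine gap, and this is the technically decisive part. You propose to write $R_1=\eps^{1/2-\alpha}\e^{-\tc(E)/\eps}\,\e^{\frac{\I}{2\eps}\xi(x,E)-\frac{\I}{\eps}\tr(E)}\,r(\eps,x,E)$ with $r$ bounded and $\partial_E r$ of at most polynomial growth in $x$, and then control the resulting oscillatory $E$-integral by non-stationary phase. This does not close for the $L^2(\D x)$-norm on an unbounded $x$-range. If $r$ is merely bounded rather than decaying in $x$, integration by parts in $E$ against $\e^{-\I\Phi/\eps}$ with $\Phi=J(E)+tE-\frac{1}{2}\xi(x,E)+\tr(E)$ produces, at each step, a factor $\eps/\partial_E\Phi$ together with amplitude derivatives $\partial_E^k\Phi\sim x$ and $\partial_E^k r\sim x^k$: the gain is in powers of $\eps$, not in decay in $x$, and the resulting pointwise bound is not square-integrable over $x$. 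The paper removes exactly this obstruction by a decomposition you do not make: on $\{x>\eps^{-\beta}\}$ the phase $\e^{\frac{\I}{2\eps}\xi(x,E)}$ is replaced by the \emph{exact} plane wave $\e^{\frac{\I}{\eps}(\om(E)+p(\infty,E)x)}$ up to an error of order $\eps^{\beta(m-\nu)-1}\bk{x}^{-1/2-\nu}$ (Lemma~\ref{L01}, Corollary~\ref{C01}), so that the remainder lies in $\ec_1(\cdot,\nu)+\ec_2(\cdot)$; the $\ec_1$ part is directly square-integrable in $x$, and the $\ec_2$ part --- a plane wave times an $x$-\emph{independent} amplitude --- is handled exactly in $L^2(\D x)$ by Plancherel (Lemma~\ref{L00}(ii)), not by stationary-phase bounds. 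This is also where $m>6$ genuinely enters: the $\ec_1$-remainder in Proposition~\ref{P02} carries $\eps^{\frac12-\frac{2+\nu}{2+m}-\alpha}$, and only for $m>6$ (with $\nu,\alpha$ small) does the final estimate $\eps^{1-\frac{2+\nu}{2+m}-\alpha}$ beat $\eps^{3/4}$. Your claimed $\eps^{5/4-\alpha}$ tail bound, uniform in $m$, would make $m>6$ unnecessary; that is a signal the non-stationary-phase route as sketched is too optimistic. The missing piece is the plane-wave/decay decomposition $\ec_1+\ec_2$ together with the Plancherel lemma, and it is not a detail one can safely wave past.
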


The proof of this theorem will be given in the first half of Section \ref{s4}. 
Knowing now how much $\chi_{\mr{eff}}$ differs from $\chi$, we will need to show next that $\chi_{\mr{eff}}$ itself has an $L^{2}$ norm
greater than that error term. Indeed, we are going to show more: we will give the precise asymptotic shape of the 
reflected wave for almost all times $t$ after its birth time. To do so, let us introduce some additional abbreviations:
\be \label{abbrev}
R(E) := - \tr(E) - J(E), \quad  S(E,x,t) := -\frac{1}{2} \xi(x,E) - R(E) + Et,  
\ee
For an explicit description of $\chi_{\mr{eff}}$, we have to consider three different regions of the variable $x$. The first formula will be valid near the birth region of the reflected wave. We define 
\begin{eqnarray*}
\lefteqn{
\chi_{\mr{near}} (\eps,x,t) := \sqrt{\eps} P_0(x,E^{\ast})   
\e^{-\frac{M(E^\ast)+iS(E^{\ast},x,t)}{\eps}} \times} \\ 
&& \times  
\int_{\mathbb R}
\mr{erf}\left(\frac{a(x,E^*)}{\sqrt{\eps}}+a'(x,E^\ast)y\right)
\e^{-y^{2} \frac{M''(E^\ast)+iS''(E^\ast,x,t)}{2}} \e^{-\frac{\I}{\sqrt{\eps}} S'(E^{\ast},x,t) y}
\,dy. 
\end{eqnarray*}
with 
\[ P_{0}(x,E^{\ast}) = \frac{2 P(0,E^{\ast}) \sin \left( \frac{\pi \gamma}{2} \right)}{\sqrt{p(x,E^\ast)}}
\]
Here and henceforth, primes denote derivatives with respect to $E$.

In the region of moderately large $x$, but away from the transition point, the reflected wave will be given by 
\be \label{chimod}
\chi_{\mr{mod}}(\eps,x,t)=  \frac{P_{0}(x,E^\ast) \sqrt{2 \pi \eps}}{\sqrt{M''(E^{\ast}) + \I S''(E^{\ast},x ,t)}}
\e^{-\frac{M(E^{\ast}) + \I S(E^{\ast},x,t)}{\eps}} \e^{-\frac{ S'(E^{\ast},x,t)^{2}}{2 \eps (M''(E^{\ast}) + \I S''(E^{\ast},x,t))}}.
\ee

For the asymptotic regime, i.e. for large $x$, the reflected wave is characterized by its asymptotic momentum 
\[
k(E) := \sqrt{E - V(\infty)}.
\]
rather than its energy. 
We put $k^{\ast} = k(E^{\ast})$, write $\tilde f(k)=f(E(k))$ for any function $f$, and define
\[
\om(E) = \int_{0}^{\infty} (p(y,E)-p(\infty,E)) \, \D y.
\]
The reflected wave in the asymptotic region is then described by \\
\begin{eqnarray} \label{xfar}
\chi_{\mr{far}}(\eps,x,t)&=& \frac{2 \sin(\gamma \pi /2) P(0,E^\ast)  2\sqrt{2\pi\eps k^{\ast}}} {\sqrt{\tilde\M''(k^\ast)+i(2t-(\tilde R+\tilde \omega)''(k^\ast))}} \e^{-\frac{\M(E^\ast)}{\eps}} \e^{\frac{\I}{\eps} (R(E^\ast)+\omega(E^\ast) +xk^\ast-E^\ast t)}\times \nonumber
\\ 
&&\times \exp\left(-\frac{(x-2k^\ast t+(\tilde R+\tilde \omega)'(k^\ast))^2}{2\eps
(\tilde\M''(k^\ast)+i(2t-(\tilde R+\tilde \omega)''(k^\ast)))}\right).
\end{eqnarray}

Let us now give a precise statement about the shape of the reflected wave. 

\begin{theorem} \label{reflected wave shape}
Assume that $V$ fulfills (V1)--(V4)$_{m}$ with $m>10$, and that $Q$ fulfils (C1)--(C3). 
Let 
$0<\delta<1/2$, $C_0 > 0$, and  $C_1>0$. For $0 < \beta < 1/8$ define 
\[
\chi_{\mr{expl}}(\eps,x,t) = \left\{ \begin{array}{ll}0 &  \mbox{if } x - x_{\mr r}(E^\ast) \leq - C_0 \eps^{1/2 - \delta} \\
\chi_{\mr{near}}(\eps,x,t) & \mbox{if } |x - x_{\mr r}(E^\ast)| < C_0 \eps^{1/2 - \delta} \\
\chi_{\mr{mod}}(\eps,x,t)  & \mbox{if } C_0 \eps^{1/2 - \delta} \leq x - x_{\mr r}(E^\ast) < C_1 \eps^{-\beta}\\
\chi_{\mr{far}}(\eps,x,t) &\mbox{if } C_1 \eps^{-\beta} \leq x - x_{\mr r}(E^\ast) .
\end{array} \right.
\]
Then there exist $\beta_0 < 1/8$ and $\eps_0 > 0$ and $D>0$ such that for all $\beta$ with $\beta_0 < \beta < 1/8$ and all $0 < \eps < \eps_0$ we have 
\[
\norm[L^2(\D x)]{\chi_{\mr{expl}}(\eps,. \, ,t) - \chi_{\mr{eff}}(\eps,. \, ,t)} \leq D \eps^{3/4 + \delta_0}e^{-M(E^*)/\epsilon}
\]
for some $\delta_0 > 0$.
\end{theorem}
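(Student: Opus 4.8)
With Theorem~\ref{chi eff thm} already in hand it is enough to estimate $\chi_{\mr{expl}}-\chi_{\mr{eff}}$, and $\chi_{\mr{eff}}$ is now a fully explicit oscillatory integral: writing $M=G+\tc$ (cf.\ (\ref{RM})) and $S(E,x,t)=-\tfrac12\xi(x,E)-R(E)+Et$, combining the exponential factors in $Q$ and $c_{\mr{eff}}$ gives
\[
\chi_{\mr{eff}}(\eps,x,t)=\int_\Delta P(E,\eps)\,\varphi^{(1)}_{\eps,2}(\eps,x,E)\,c^{\#}(\eps,x,E)\,\E^{-\frac1\eps(M(E)+\I S(E,x,t))}\,\D E ,
\]
where $c^{\#}$ equals $0$, $2\sin(\tfrac{\pi\gamma}{2})\,\mr{erf}(a(x,E)/\sqrt\eps)$ or $2\sin(\tfrac{\pi\gamma}{2})$ according to the three regimes of $x-x_{\mr r}^\ast$ in the definition of $c_{\mr{eff}}$, with $a$ as in (\ref{a(x,E)}). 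The plan is to run Laplace's method in $E$ (the estimates of Section~\ref{s4}) at each fixed $(x,t)$: by (C1) the real exponent $-M(E)/\eps$ has a unique non-degenerate maximum at $E^\ast$, so the integral concentrates on $|E-E^\ast|\lesssim\sqrt{\eps\log(1/\eps)}$; there I substitute $E=E^\ast+\sqrt\eps\,y$ and Taylor-expand $M(E)=M(E^\ast)+\tfrac12M''(E^\ast)\eps y^2+\Or(\eps^{3/2}y^3)$, $S(E,x,t)=S(E^\ast,x,t)+\sqrt\eps\,S'(E^\ast,x,t)\,y+\tfrac12S''(E^\ast,x,t)\eps y^2+\Or(\eps^{3/2}y^3|S'''|)$ and $P\varphi^{(1)}_{\eps,2}=P(0,E^\ast)/\sqrt{p(x,E^\ast)}+\Or(\sqrt\eps+\eps\bra x\ket^{-3/2-\nu})$, using (C3) and (\ref{35a}). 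The decisive structural point is that $S$ is \emph{affine} in $t$, so $S''$, $S'''$, $\dots$ are $t$-independent and only $S'(E^\ast,x,t)=-\tfrac12\xi'(x,E^\ast)-R'(E^\ast)+t$ carries the time; this is what makes every remainder uniform in $t\in\R$, while the classical trajectory appears as $q_t=\{x:S'(E^\ast,x,t)=0\}$.

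I would then work region by region. For $x-x_{\mr r}(E^\ast)\le-C_0\eps^{1/2-\delta}$ one has $a(x,E^\ast)<0$ with $|a(x,E^\ast)|/\sqrt\eps\gtrsim\eps^{-\delta}$, so the error function — hence both $c_{\mr{eff}}$ and the erf inside $\chi_{\mr{near}}$ — is smaller than any power of $\eps$ wherever $\chi_{\mr{expl}}=0$ and $c^{\#}$ disagree; the $L^2(\D x)$-contribution is super-polynomially small. For $|x-x_{\mr r}(E^\ast)|<C_0\eps^{1/2-\delta}$ the erf cannot be frozen; carrying out the substitution above but keeping $\mr{erf}(a(x,E^\ast)/\sqrt\eps+a'(x,E^\ast)y)$ reproduces exactly $\chi_{\mr{near}}$, with a relative remainder $\Or(\sqrt\eps)$ from the cubic Taylor terms and the amplitude corrections; since $\chi_{\mr{eff}}$ has $L^2(\D x)$-mass $\Or(\eps^{3/4}\E^{-M(E^\ast)/\eps})$ on this strip, the contribution is $\Or(\eps^{5/4}\E^{-M(E^\ast)/\eps})$. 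For $C_0\eps^{1/2-\delta}\le x-x_{\mr r}(E^\ast)<C_1\eps^{-\beta}$ we have $a(x,E)^2/\eps\gtrsim\eps^{-2\delta}$ on the effective $E$-range, so $\mr{erf}(a/\sqrt\eps)=1$ up to a Gaussian-small error, and the surviving $y$-integral is a genuine Gaussian: completing the square against $\E^{-\I S'y/\sqrt\eps}$ gives $\chi_{\mr{mod}}$ together with the factor $\E^{-S'^2/(2\eps(M''+\I S''))}$ that localizes the wave near $q_t$. The dominant error is the cubic phase term $\sqrt\eps\,S'''(E^\ast,x,t)y^3$, and since $S'''(E^\ast,x,t)=-\tfrac12\xi'''(x,E^\ast)-R'''(E^\ast)=\Or(\bra x\ket)=\Or(\eps^{-\beta})$, it is a relative $\Or(\eps^{1/2-\beta})$, so the contribution here is $\Or(\eps^{5/4-\beta}\E^{-M(E^\ast)/\eps})$, which beats the target for $\beta<1/2$ and helps fix the admissible range of $\delta_0$.

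The genuinely delicate region, and where I expect the real work, is the far one $x-x_{\mr r}(E^\ast)\ge C_1\eps^{-\beta}$. First one must replace exact quantities by their scattering asymptotics: using (V4)$_m$ one writes $\xi(x,E)=2xk+2\om(E)+r(x,E)$ and $p(x,E)=k+\Or(\bra x\ket^{-3/2-m})$ with $k=k(E)=\sqrt{E-V(\infty)}$, $\om(E)=\int_0^\infty(p(y,E)-p(\infty,E))\,\D y$, and $r$ together with its first $E$-derivatives $\Or(\bra x\ket^{-1/2-m})$. The error this puts into the exponent is $\Or(\bra x\ket^{-1/2-m}/\eps)=\Or(\eps^{\beta(m+1/2)-1})$ on the far region, negligible precisely when $\beta>1/(m+1/2)$ — which is why $m>10$ is imposed, to leave room for such a $\beta$ below $1/8$; the lower endpoint $\beta_0$ of the admissible window is essentially $1/(m+1/2)$, while the cutoff $1/8$ (comfortably below the moderate-region requirement $\beta<1/2$) keeps the window non-empty and the cubic-in-$x$ terms controlled. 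One then performs Laplace's method in the \emph{momentum} variable $k$ rather than $E$: change variables with $\D E=2k\,\D k$, use that $E(k)t=(k^2+V(\infty))t$ feeds only the quadratic coefficient, expand the phase $\tfrac1\eps[\text{const}+xk^\ast+(\text{linear})+(\text{quadratic, }t\text{-dependent})+\cdots]$ to second order and complete the square, obtaining $\chi_{\mr{far}}$ — the freely propagating Gaussian centred at $x=2k^\ast t-(\tilde R+\tilde\om)'(k^\ast)$. The obstacle is that the large oscillation $xk/\eps$ (which forces an honest stationary-phase / contour argument valid for $x$ of arbitrary size) and the decay-rate bookkeeping for $p$, $\xi$ and $\varphi^{(1)}_{\eps,2}$ must be handled simultaneously and uniformly in $x$ and $t$; choosing $C_1$ large absorbs the constants.

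Finally I would assemble the estimate by $\|\chi_{\mr{expl}}-\chi_{\mr{eff}}\|_{L^2(\D x)}^2=\sum_{\text{4 regions}}\|\chi_{(\cdot)}-\chi_{\mr{eff}}\|_{L^2(\text{region})}^2$, noting that along the three shared boundaries the adjacent explicit pieces are simultaneously close to $\chi_{\mr{eff}}$ (the erf being $\approx 0$ or $\approx 1$ up to Gaussian-small errors there). Each regional error is a relative $\Or(\eps^{\delta_0})$ against a quantity whose $L^2(\D x)$-norm is $\Or(\eps^{3/4}\E^{-M(E^\ast)/\eps})$ — the $\eps^{1/4}$ on top of the $\sqrt\eps$ from the $E$-integration coming from the universal $\sqrt\eps$-scale of the $x$-localization — so summing the four contributions gives $\|\chi_{\mr{expl}}-\chi_{\mr{eff}}\|_{L^2(\D x)}\le D\eps^{3/4+\delta_0}\E^{-M(E^\ast)/\eps}$ uniformly in $t$, for a suitable $\delta_0>0$ and all $\beta$ in the window. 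In short, the main obstacle is the far region: making the replacement of $p$ and $\xi$ by their $|x|\to\infty$ forms quantitative and uniform, and then carrying out Laplace's method in $k$ with the large phase $xk/\eps$ present while keeping every error uniformly small in $x$ and $t$.
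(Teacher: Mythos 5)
Your proposal follows essentially the same route as the paper: Theorem~\ref{chi eff thm} is taken as the reduction, then Laplace's method in $E$ (substitution $E=E^\ast+\sqrt{\eps}\,y$, Taylor expansion of $M$, $S$, the amplitude, and — in the near strip — the erf argument) is carried out region by region, with the far region handled by first replacing $\xi(x,E)$ and $p(x,E)$ by their scattering asymptotics and then performing the Gaussian integration in the momentum variable $k$, followed by $L^{2}(\D x)$ assembly over the four disjoint regions. This matches the chain Proposition~\ref{P05} (near), Proposition~\ref{P06} (moderate), and the Lemma~\ref{L01}--Lemma~\ref{L00} argument for the far region in the paper.

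The only notable discrepancy is in the far-region bookkeeping: you estimate the error from replacing $\xi$ by $2xk+2\om(E)$ pointwise as $\Or(\eps^{\beta(m+1/2)-1})$ and conclude $\beta_{0}\approx 1/(m+1/2)$, but for the $L^{2}(\D x)$ bound one must, as in Lemma~\ref{L01} and Lemma~\ref{L00}, peel off a $\bra x\ket^{-1/2-\nu}$ factor to ensure square-integrability in $x$; this costs the $\nu$ in the exponent and yields the paper's threshold $\beta(m-\nu)-\tfrac12>\tfrac34$, i.e.\ $\beta_{0}\approx 5/(4m)$, which is the genuine reason the window $\beta<1/8$ forces $m>10$. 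Your intermediate and near-region remainders ($\Or(\sqrt{\eps})$ and $\Or(\eps^{1/2-\beta})$) likewise suppress the $\eps^{-\delta}$ widening of $\Delta_{\eps}$ in the $y$-variable, which the paper records as $\eps^{1-4\delta}$ and $\eps^{1-4\delta-2\beta}$; this is harmless for the conclusion but worth keeping explicit since $\delta_{0}$ has to be chosen compatibly with $\delta$ and $\beta$.
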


It is obvious that $\chi_{\mr{far}}$ is a Gaussian wave packet. Our next result shows that this is also true, to leading order, for all other times just after the transition occurred. The center of the wave packet is the classical trajectory $q_t$ of the reflected wave, where $q_t$ is defined to be the unique solution of the equation $S'(E^\ast,q_t,t) = 0$. 
In particular, $q_t$ satisfies
 $
 \frac{1}{2}\int_0^{q_t}\frac{1}{p(y,E^*)}dy+R'(E^*)=t,
 $
 which shows that $q_t$ behaves like $t-R'(E^*)$.

\begin{theorem} \label{chimod gauss}
Define
\[
\chi_{\mr{gauss}} = \frac{P_0(E^\ast,q_t) \sqrt{2 \pi \eps}}{\sqrt{M''(E^\ast) + \I S''(E^\ast,q_t,t})} \e^{-\frac{M(E^\ast) + \I S(E^\ast,x,t)}{\eps}} \e^{\frac{-(x-q_t)^2}{8 \eps p(E^\ast,q_t)^2 (M''(E^\ast) + \I S''(E^\ast,q_t,t))}}.
\]
Assume $t$ is such that $\eps^{1/2 - \delta} < |q_t - x^\ast_{\mr r}| < \eps^{- \beta}$, with $\beta< 1/8$ as in Theorem \ref{reflected wave shape}. Then there exist $C>0$, $\eps_0>0$ and $\delta_0 > 0$ such that 
\[
\norm[L^2]{\chi_{\mr{gauss}}(\eps,. \, , t) - \chi_{\mr{eff}}(\eps,. \, , t) } \leq C \e^{-\frac{M(E^\ast)}{\eps}} \eps^{3/4 + \delta_0}
\]
for all $\eps < \eps_0$.
\end{theorem}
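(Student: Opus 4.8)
The plan is to deduce Theorem~\ref{chimod gauss} from Theorem~\ref{reflected wave shape} by showing that, in the regime $\eps^{1/2-\delta} < |q_t - x^\ast_{\mr r}| < \eps^{-\beta}$, the Gaussian $\chi_{\mr{gauss}}$ is $L^2$-close to $\chi_{\mr{mod}}$ (which is the piece of $\chi_{\mr{expl}}$ active on the relevant $x$-window, since $q_t$ lies in the ``moderate'' range). Once that is established, the triangle inequality together with Theorem~\ref{reflected wave shape} yields the claim. Thus the real work is the comparison $\|\chi_{\mr{gauss}} - \chi_{\mr{mod}}\|_{L^2(\D x)} \leq C\,\e^{-M(E^\ast)/\eps}\eps^{3/4+\delta_0}$, keeping in mind that the $L^2$ mass of either function is of order $\e^{-M(E^\ast)/\eps}\eps^{3/4}$ (a Gaussian of width $\sqrt\eps$ and height $\sqrt\eps$), so we need genuine relative smallness $\eps^{\delta_0}$, not merely smallness.

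\medskip

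The key steps, in order. First, I would record that both $\chi_{\mr{mod}}$ and $\chi_{\mr{gauss}}$ are, as functions of $x$, essentially Gaussians peaked where the exponents' real parts are minimized; for $\chi_{\mr{mod}}$ the peak is at the solution of $S'(E^\ast,x,t)=0$, i.e.\ exactly at $x=q_t$, while for $\chi_{\mr{gauss}}$ the Gaussian factor $\exp\!\big(-(x-q_t)^2/(8\eps p(E^\ast,q_t)^2(M''+\I S''))\big)$ is centered at $q_t$ by construction. So both are concentrated in a $\sqrt\eps$-neighbourhood of $q_t$, and on that neighbourhood $x-x^\ast_{\mr r}\sim q_t - x^\ast_{\mr r}$, which is $\gg\sqrt\eps$ and $\ll\eps^{-\beta}$ by hypothesis; hence we are legitimately in the $\chi_{\mr{mod}}$ branch of $\chi_{\mr{expl}}$ throughout the support, up to Gaussian tails that contribute only $\e^{-c\eps^{-2\delta}}$-small terms. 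Second, I would Taylor-expand in $\chi_{\mr{mod}}$ around $x=q_t$: expand $S(E^\ast,x,t) = S(E^\ast,q_t,t) + S'(E^\ast,q_t,t)(x-q_t) + \tfrac12 S''(E^\ast,q_t,t)(x-q_t)^2 + \Or((x-q_t)^3)$, use $S'(E^\ast,q_t,t)=0$ and $\partial_x S = -\tfrac12\xi'(x,E^\ast) = -\tfrac12\cdot 2p(x,E^\ast)$, so that $S''(E^\ast,x,t) = -\partial_x p(x,E^\ast)$ — wait, more carefully, $S''$ here is the $E$-derivative; the $x$-Taylor coefficient $\partial_x^2 S = -\partial_x p(x,E^\ast)$, and one checks that the width parameter appearing in $\chi_{\mr{mod}}$ after this expansion matches the $8\eps p(E^\ast,q_t)^2(M''+\I S'')$ of $\chi_{\mr{gauss}}$ precisely because of the chain-rule factor relating $x$ and $\xi$ (namely $\D\xi = 2p\,\D x$). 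Third, replace the slowly varying prefactor $P_0(x,E^\ast)/\sqrt{M''(E^\ast)+\I S''(E^\ast,x,t)}$ by its value at $x=q_t$, incurring an error $\Or(x-q_t) = \Or(\sqrt\eps)$ pointwise, i.e.\ a relative $\Or(\sqrt\eps)$ in $L^2$; and control the effect of the cubic remainder $\Or((x-q_t)^3/\eps) = \Or(\eps^{-1}\cdot\eps^{3/2}) = \Or(\sqrt\eps)$ in the exponent via the elementary estimate $|\e^{w}-1|\le |w|\e^{|w|}$ against the Gaussian weight. Collecting, all discrepancies are $\Or(\sqrt\eps)$ relative to the $\e^{-M(E^\ast)/\eps}\eps^{3/4}$ mass, giving $\delta_0 = 1/2$ in the comparison with $\chi_{\mr{mod}}$, which combined with the $\eps^{3/4+\delta_0}$ bound of Theorem~\ref{reflected wave shape} (with its own $\delta_0>0$) yields the stated bound with $\delta_0$ equal to the smaller of the two.

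\medskip

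The main obstacle I anticipate is the bookkeeping in the second step: verifying that the width of $\chi_{\mr{mod}}$, after $x$-Taylor expansion of its $E$-dependent exponent $S(E^\ast,x,t)$ around $q_t$, reproduces exactly the factor $8\eps\, p(E^\ast,q_t)^2(M''(E^\ast)+\I S''(E^\ast,q_t,t))$ appearing in $\chi_{\mr{gauss}}$. This is purely a chain-rule computation — $x$ versus $\xi = 2\int^x p$, so $\partial_x = 2p\,\partial_\xi$ and a $p(E^\ast,q_t)^2$ appears squared — but one must be careful that the expansion is in $x$ while $M''$, $S''$ are $E$-derivatives, so these are genuinely different objects combined in the denominator of $\chi_{\mr{gauss}}$, and a sign or factor-of-two slip here would be fatal. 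A secondary point needing care is confirming that the cubic-in-$(x-q_t)$ term in $S(E^\ast,\cdot,t)$, divided by $\eps$, is indeed $o(1)$ uniformly over the allowed range of $t$: since $|x-q_t|\lesssim\sqrt\eps\log(1/\eps)$ on the effective support (beyond which the Gaussian is superpolynomially small) and $\partial_x^3 S = -\partial_x^2 p(x,E^\ast)$ is bounded uniformly in $x$ by (V1)–(V2), the term is $\Or(\eps^{-1}(\sqrt\eps\log(1/\eps))^3) = \Or(\sqrt\eps\,(\log(1/\eps))^3) = o(\eps^{1/2-\delta_0})$ for any $\delta_0 < 1/2$, which is the only place the logarithmic correction enters and it is harmless.
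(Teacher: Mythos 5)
Your overall strategy matches the paper's: reduce to $\|\chi_{\mathrm{gauss}}-\chi_{\mathrm{mod}}\|_{L^2}$ in the moderate window via the triangle inequality with $\chi_{\mathrm{expl}}$ and Theorem~\ref{reflected wave shape}, dismiss the near/far contributions by Gaussian tails and Proposition~\ref{concentration x finite}, then Taylor-expand the exponent of $\chi_{\mathrm{mod}}$ around $x=q_t$. However, there are two genuine problems with the execution.

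First, your Taylor expansion targets the wrong quantity. The factor $\e^{-(M(E^\ast)+\I S(E^\ast,x,t))/\eps}$ appears \emph{identically} in both $\chi_{\mathrm{mod}}$ and $\chi_{\mathrm{gauss}}$, so expanding $S(E^\ast,\cdot,t)$ in $x$ gains nothing and cancels exactly. The quantity that must be expanded is $S'(E^\ast,x,t)$ --- the $E$-derivative of $S$, regarded as a function of $x$ --- which appears squared in the second exponential of $\chi_{\mathrm{mod}}$. Using $S'(E^\ast,q_t,t)=0$ and $\partial_x S'(E^\ast,x,t)=-\tfrac12\partial_x\partial_E\xi(x,E^\ast)=-\tfrac{1}{2p(x,E^\ast)}$, one finds $S'(E^\ast,x,t)^2=\tfrac{(x-q_t)^2}{4p(q_t,E^\ast)^2}\big(1+\Or(x-q_t)\big)$, and it is this squaring of $1/(2p)$ that produces the $8\eps p^2$ in the width parameter of $\chi_{\mathrm{gauss}}$. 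Your chain-rule heuristic with $\partial_x^2 S=-\partial_x p$ points to a different constant and would not reproduce the width correctly. The paper combines this with the expansion of $(M''+\I S''(E^\ast,x,t))^{-1}$ around $x=q_t$ via the identity $|ab-cd|\leq|a-c||b|+|b-d||c|$.

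Second, your estimate of the effective support, $|x-q_t|\lesssim\sqrt{\eps}\log(1/\eps)$, is too optimistic for the regime considered. The width of $\chi_{\mathrm{mod}}$ is governed by $M''(E^\ast)+\I S''(E^\ast,q_t,t)$, and $S''(E^\ast,q_t,t)$ grows linearly with $q_t$ (hence with $t$), so for $q_t$ near $\eps^{-\beta}$ the Gaussian spreads to width $\sim\sqrt\eps|M''+\I S''|$, potentially as large as $\eps^{1/2-\beta}$. Proposition~\ref{concentration x moderate} is precisely the tool needed to localize to the set $|x-q_t|\leq\eps^{1/2-\delta}\sqrt{M''(E^\ast)^2+2S''(E^\ast,q_t,t)^2}$, and on that set the cubic remainder estimate must carefully exploit the compensating $|M''+\I S''|$ factors appearing in the denominators of $d(x,t)$; your $\Or(\sqrt\eps\,(\log(1/\eps))^3)$ bound does not survive once the broadening of the packet is taken into account. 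Without this accounting, the claimed $\delta_0=1/2$ cannot be justified, and indeed the paper only obtains a weaker $\delta_0>0$ constrained by $\beta$ and $\delta$.
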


Like the two previous theorems, we will prove Theorem \ref{chimod gauss} in Section \ref{s4}.
A direct calculation of Gaussian integrals now yields

\begin{corollary}
With the definitions above, we have
\[
\norm[L^2]{\chi_{\mr{gauss}}(\eps,. \, , t)} = \norm[L^2]{\chi_{\mr{far}}(\eps,. \, , t)}
= \frac{4 |P(0,E^\ast)\sin (\pi \gamma / 2)|\pi^{3/4}}{(M''(E^\ast))^{1/4}} 
\e^{-\frac{M(E^\ast)}{\eps}} \eps^{3/4}.
\]
Thus, for all times such that $\eps^{1/2 - \delta} < q_t - x^\ast_{\mr r} $,  Theorem \ref{reflected wave shape} yields an approximation $\chi_{\mr{expl}}$ of the reflected
wave $\chi$ whose norm is larger than the error term.

\end{corollary}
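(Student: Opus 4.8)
The plan is to derive the norm identity by a single elementary Gaussian integration for each of the two wave packets, and then to read off the concluding ``domination'' remark from the triangle inequality applied to Theorems~\ref{chi eff thm}, \ref{reflected wave shape} and \ref{chimod gauss}. For fixed $t$, both $\chi_{\mr{gauss}}$ and $\chi_{\mr{far}}$ are, as functions of $x$, of the form $\chi(\eps,x,t)=\mathcal A\,\e^{-M(E^\ast)/\eps}\,\e^{\I\Phi(x)/\eps}\,\e^{-(x-x_0)^2/(2\eps W)}$ with $x_0\in\R$, $\Phi$ real-valued, $\mathcal A\in\C$ the ($x$-independent) amplitude, and $W\in\C$ with $\tRe W>0$. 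Since $|\e^{-(x-x_0)^2/(2\eps W)}|^2=\e^{-(x-x_0)^2\tRe(1/W)/\eps}$ and $\tRe(1/W)=\tRe\overline W/|W|^2$, the Gaussian integral gives
\[
\norm[L^2(\D x)]{\chi(\eps,\cdot,t)}^2=|\mathcal A|^2\,|W|\,\sqrt{\frac{\pi\eps}{\tRe\overline W}}\;\e^{-2M(E^\ast)/\eps},
\]
so everything reduces to identifying $\mathcal A$, $W$ and $\tRe\overline W$ in the two cases.

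For $\chi_{\mr{gauss}}$ one reads off $W=4p(E^\ast,q_t)^2(M''(E^\ast)+\I S''(E^\ast,q_t,t))$, hence $\tRe\overline W=4p(E^\ast,q_t)^2M''(E^\ast)$, and $\mathcal A=P_0(E^\ast,q_t)\sqrt{2\pi\eps}/\sqrt{M''(E^\ast)+\I S''(E^\ast,q_t,t)}$. Plugging in $|P_0(E^\ast,q_t)|^2=4|P(0,E^\ast)\sin(\pi\gamma/2)|^2/p(E^\ast,q_t)$ from the definition of $P_0$ makes the remaining power of $p(E^\ast,q_t)$ cancel against the $\sqrt{\tRe\overline W}$ in the denominator; this cancellation is exactly why the result does not depend on $t$. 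The arithmetic then yields $\norm[L^2]{\chi_{\mr{gauss}}}=4\,|P(0,E^\ast)\sin(\pi\gamma/2)|\,\pi^{3/4}\,(M''(E^\ast))^{-1/4}\,\eps^{3/4}\,\e^{-M(E^\ast)/\eps}$. For $\chi_{\mr{far}}$ the same recipe gives $W=\tilde M''(k^\ast)+\I(2t-(\tilde R+\tilde\omega)''(k^\ast))$, so $\tRe\overline W=\tilde M''(k^\ast)$, and $\mathcal A$ is $\propto 1/\sqrt W$ with the square of the residual constant equal to $16|P(0,E^\ast)\sin(\pi\gamma/2)|^2\cdot 2\pi\eps k^\ast$, whence $\norm[L^2]{\chi_{\mr{far}}}^2=32\,\pi^{3/2}\eps^{3/2}k^\ast\,|P(0,E^\ast)\sin(\pi\gamma/2)|^2\,(\tilde M''(k^\ast))^{-1/2}\,\e^{-2M(E^\ast)/\eps}$.

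To see that these two expressions agree I would invoke the change of variables $E=k^2+V(\infty)$ underlying $\tilde M(k)=M(E(k))$: differentiating twice gives $\tilde M''(k)=4k^2M''(E)+2M'(E)$, and at $k=k^\ast$ the second term drops because $E^\ast$ is by (C1) the interior minimum of $M$, so $M'(E^\ast)=0$. Hence $\tilde M''(k^\ast)=4(k^\ast)^2M''(E^\ast)$, and the $\chi_{\mr{far}}$-formula collapses to the $\chi_{\mr{gauss}}$-one. This chain-rule identity at the critical point $E^\ast$ is the only step that is not pure bookkeeping, and it is what I would flag as the ``main point'' of the proof; the Gaussian integrations themselves are routine.

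Finally, for the concluding assertion, fix $t$ with $\eps^{1/2-\delta}<q_t-x^\ast_{\mr r}$. Theorem~\ref{chi eff thm} together with Theorem~\ref{reflected wave shape} gives $\norm[L^2]{\chi_{\mr{expl}}(\eps,\cdot,t)-\chi(\eps,\cdot,t)}=\Or(\eps^{3/4+\delta'}\e^{-M(E^\ast)/\eps})$ for some $\delta'>0$. If moreover $q_t-x^\ast_{\mr r}<\eps^{-\beta}$, then $\chi_{\mr{expl}}$ coincides with $\chi_{\mr{mod}}$ on the $x$-range carrying the reflected wave, and Theorem~\ref{chimod gauss} bounds $\norm[L^2]{\chi_{\mr{expl}}-\chi_{\mr{gauss}}}$ by the same order; if instead $q_t-x^\ast_{\mr r}\geq\eps^{-\beta}$, then there $\chi_{\mr{expl}}=\chi_{\mr{far}}$, whose norm was just computed exactly. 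In either case $\norm[L^2]{\chi_{\mr{expl}}(\eps,\cdot,t)}\geq\norm[L^2]{\chi_{\mr{gauss}}}-\Or(\eps^{3/4+\delta'}\e^{-M(E^\ast)/\eps})$, and since the leading term equals the fixed constant above times $\eps^{3/4}\e^{-M(E^\ast)/\eps}$ — which is nonzero whenever $P(0,E^\ast)\neq 0$ and $\gamma=\beta/(\beta+1)\notin 2\Z$ — it strictly dominates the error for $\eps$ small. Thus $\chi_{\mr{expl}}$ is a genuine leading-order description of $\chi$, not swamped by the remainder. I expect no real obstacle in this last part; the only care needed is to keep track of which of the three $x$-regions contains the peak $x=q_t$ of the wave packet at the given time $t$.
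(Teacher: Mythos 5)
Your proof is correct and follows essentially the same route as the paper: compute both $L^2$ norms by direct Gaussian integration (the cancellation of the $p(E^\ast,q_t)$ factor making the result $t$-independent is exactly the check the paper has in mind), and reconcile the two expressions via the chain-rule identity $\tilde M''(k^\ast)=4(k^\ast)^2 M''(E^\ast)$, which is precisely the remark the paper attaches to the corollary and which hinges on $M'(E^\ast)=0$ from (C1). Your concluding triangle-inequality argument, including the observation that for $q_t$ in the far regime the role of Theorem~\ref{chimod gauss} is taken over by the explicit $\chi_{\mr{far}}$ computation from the proof of Theorem~\ref{reflected wave shape}, is spelled out a bit more fully than the paper's terse ``Thus\dots'', but it is the same argument; no new ideas are introduced and none are missing.
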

\noindent
{\bf Remark:} We make use of the  identity $\tilde M''(k^\ast)=M''(E^\ast)4{k^\ast}^2$,
which follows from the fact that $E\mapsto M(E)$ is critical at $E^\ast$.

\section{Rigorous analysis of the reflected wave} \label{s4}

In this section we give the proof of Theorems \ref{chi eff thm} -- \ref{chimod gauss}. 
It is checked directly that 
\begin{equation} \label{c2 full}
c_{2}(\eps,x,E) = \frac{\I }{\eps}\,\E^{\frac{\I}{\eps}
\int_{-\infty}^{\xi(x)} \D y \,\rho_{n}(y)} \int_{-\infty}^{\xi(x)}
\E^{-\frac{\I}{\eps} \int_{-\infty}^y
\rho_{n}(r) \, \D r} \, ( \eps^{n + 1} \overline{k}_{n_{\eps}}(\eps, y, E)) \, \psi_{+,n}(\eps,y,E) \, \D y,
\end{equation}
solves the second line of (\ref{superadODE2}) with initial condition as in Theorem \ref{solution}. In what follows, we will always understand $n$ to be the optimal one given by (\ref{ndef}). 

The additional difficulty of the present problem when compared to \cite{BeTe1, BeTe2} is that we now need to extract the leading term of (\ref{c2 full}) with errors that are not only small pointwise in $x$, but even integrable over all $x \in \R$. In particular, neither the leading term of $c_2(\eps,x,E)$ nor all of its error terms are small when $x \to +\infty$, and it is only after integration over $E \in \Delta$ that these terms become integrable, due to their oscillatory nature. This determines our strategy: 
we will always decompose the error term into a part that decays as $x \to \infty$ and one that oscillates there in a regular way, both of which will give rise to a small contribution to the $L^{2}$-norm. More precisely, we introduce 
\begin{definition} Let us write $\bk{x} = \sqrt{1 + x^2}$.
\begin{itemize}
\item[(i)]
We say that a function $r = r(\eps,x,E)$ is in $\ec_{1}(\beta,\nu)$ for $\beta \in \R$ and $\nu>0$ if there exists $C$ independent of $E$ such that
\[
|r(\eps, x,E)| \leq C \eps^{\beta} \bk{x}^{-1/2-\nu}.
\]
\item[(ii)]
We say that the function $r$ is in $\ec_{2}(\beta)$ for $\beta \in \R$ if there exists $C<\infty$ independent of $E$  and  $r_{0}(\eps,E)$ independent of $x$ and with $|r_{0}(\eps,E)| \leq C\eps^{\beta}$ and 
\[
r(\eps,x,E) = \left\{ \begin{array}{ll} \E^{\frac{\I}{\eps} x p(\infty,E)} r_{0}(\eps,E) & \mbox{if } x>0\\
										0 & \mbox{if } x<0. \end{array} \right.
\]
Here, we put $p(\infty,E) = \lim_{x \to \infty} p(x,E) = \lim_{x \to \infty} \sqrt{E - V(x)}$.
\end{itemize}
\end{definition}
We immediately show how to use these two different functions in order to estimate $L^{2}$ errors.

\begin{lemma} \label{L00} 
Let $\Delta$ be a compact interval and $f: \Delta \to \R$ be a bounded function. For $r = r(\eps,x,E)$ define 
\[ I(x) = \int_{\Delta} f(E) r(\eps,x,E) \, \D E. \]
If $r \in \ec_{1}(\beta,\nu)$ for some $\nu>0$ then there exists $C>0$ such that 
\[
\norm[L^{2}(\D x)]{I} \leq C \eps^{\beta} \norm[L^{1}(\D E)]{f} \]
If $r \in \ec_{2}(\beta)$ then there exists $C>0$ such that
\[ \norm[L^{2}(\D x)]{I} \leq C \eps^{\beta+1/2} \norm[L^{2}(\D E)]{f}.\]
\end{lemma}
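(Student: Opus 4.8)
The plan is to treat the two cases separately, using Minkowski's integral inequality (i.e. the integral form of the triangle inequality for $L^2(\D x)$ norms) for the first, and Plancherel's theorem for the second. In both cases the point is that the $E$-integration either commutes harmlessly with the spatial norm, or actually improves the estimate by converting an $L^1$ bound into an $L^2$ bound via the Fourier transform.

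For $r \in \ec_1(\beta,\nu)$ I would argue directly: by Minkowski's inequality in integral form,
\[
\norm[L^2(\D x)]{I} = \norm[L^2(\D x)]{\int_\Delta f(E) r(\eps,\cdot,E)\,\D E} \leq \int_\Delta |f(E)|\, \norm[L^2(\D x)]{r(\eps,\cdot,E)}\,\D E.
\]
Since $|r(\eps,x,E)| \leq C\eps^\beta \bk{x}^{-1/2-\nu}$ uniformly in $E$, and $\bk{x}^{-1/2-\nu} \in L^2(\D x)$ precisely because $2(1/2+\nu) = 1 + 2\nu > 1$, we get $\norm[L^2(\D x)]{r(\eps,\cdot,E)} \leq C' \eps^\beta$ with $C'$ independent of $E$. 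Plugging this in yields $\norm[L^2(\D x)]{I} \leq C' \eps^\beta \norm[L^1(\D E)]{f}$, as claimed. This case is essentially routine.

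For $r \in \ec_2(\beta)$ the gain of the extra $\sqrt\eps$ and the passage to $\norm[L^2(\D E)]{f}$ comes from recognizing $I$ as (a rescaling of) a Fourier transform. Write $r(\eps,x,E) = \mathbf{1}_{\{x>0\}} \E^{\frac{\I}{\eps} x p(\infty,E)} r_0(\eps,E)$. The idea is to change variables from $E$ to the asymptotic momentum $k = p(\infty,E) = \sqrt{E-V(\infty)}$; under (V1) this is a $C^1$ diffeomorphism of $\Delta$ onto a compact interval $K$ with bounded, bounded-away-from-zero Jacobian $\frac{\D E}{\D k}$, so $L^2$ norms in $E$ and in $k$ are comparable. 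Then for $x>0$,
\[
I(x) = \int_K \E^{\frac{\I}{\eps} x k}\, \tilde f(k) \tilde r_0(\eps,k)\, \tfrac{\D E}{\D k}\,\D k = \sqrt{2\pi}\, \widehat{g}\!\left(-\tfrac{x}{\eps}\right),
\]
where $g(k) = \tilde f(k)\tilde r_0(\eps,k) \frac{\D E}{\D k}$, extended by zero outside $K$, and $\widehat{\cdot}$ denotes the Fourier transform with the convention making Plancherel an isometry. Since $I(x)=0$ for $x<0$, Plancherel's theorem gives
\[
\norm[L^2(\D x)]{I}^2 = \int_0^\infty |I(x)|^2\,\D x \leq 2\pi \int_{\mathbb R} \big|\widehat{g}(-x/\eps)\big|^2\,\D x = 2\pi\eps \int_{\mathbb R} |\widehat g(u)|^2\,\D u = 2\pi\eps\, \norm[L^2(\D k)]{g}^2.
\]
Finally $\norm[L^2(\D k)]{g} \leq C \sup_{k}|\tilde r_0(\eps,k)| \cdot \norm[L^2(\D k)]{\tilde f} \leq C' \eps^\beta \norm[L^2(\D E)]{f}$, using $|r_0(\eps,E)| \leq C\eps^\beta$, the boundedness of the Jacobian, and the comparability of the two $L^2$ norms. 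Taking square roots yields $\norm[L^2(\D x)]{I} \leq C'' \eps^{\beta+1/2}\norm[L^2(\D E)]{f}$.

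The only mild subtlety — and the one step worth stating carefully rather than waving through — is the change of variables $E \mapsto k$ and the bookkeeping of the Jacobian, since this is what legitimately turns the $L^2(\D E)$ norm of $f$ into the $L^2(\D k)$ norm of $g$ to which Plancherel applies; assumption (V1), which keeps $E - V(\infty)$ bounded below, is exactly what guarantees this is a bona fide $C^1$ diffeomorphism with controlled Jacobian. Everything else is a direct application of Minkowski and Plancherel and needs no further input.
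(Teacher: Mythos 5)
Your proof is correct and follows essentially the same two ideas as the paper: for $\ec_1$ the paper invokes Fubini/integrability of $\bk{x}^{-1/2-\nu}$, which is the same as your Minkowski argument; for $\ec_2$ the paper likewise changes variables $E \mapsto k = p(\infty,E)$ and applies Plancherel with a rescaled Fourier transform, simply stating it as the explicit identity $\int|J(x)|^2\,\D x = 4\pi\eps\int_\Delta p(\infty,E)|f(E)|^2\,\D E$ before bounding. Your remark about the Jacobian being controlled via (V1) is exactly the point the paper uses implicitly when writing $\D E = 2k\,\D k$.
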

\begin{proof}
The first assertion is just an application of Fubini's theorem together with the fact that a function from $\ec(\beta,\nu)$ is square integrable in $x$.
The second assertion follows from the fact that with 
\[ J(x) = \int_{\Delta} \E^{\frac{\I}{\eps} p(\infty,E)x} f(E)  \, \D E \]
we have 
\be \label{explicit integral} 
\int |J(x)|^{2} \, \D x = 4 \pi \eps \int_{\Delta} p(\infty,E) |f(E)|^{2} \, \D E.
\ee
To see this, put $k = p(\infty,E) = \sqrt{E - V(\infty)}$, so that $E = k^{2} + V(\infty)$ and $\D E = 2 k \, \D k$. Putting $\tilde \Delta = p(\infty, \Delta)$ we have 
\[
J(x) = \int_{\R} \E^{\frac{\I}{\eps} k x} 1_{\tilde \Delta}(k) f(k^{2} + V(\infty)) 2 k \, \D k 
= \sqrt
{2 \pi \eps} \left( \mathcal F_{\eps}^{-1} \rho \right)(x),\]
where $\rho(k) = 2 k 1_{\tilde \Delta}(k) f(k^{2} + V(\infty))$ and $\mathcal F_{\eps}$ is 
the rescaled Fourier transform
\be
({\mathcal F}_\eps g)(k)=\frac{1}{\sqrt{\eps2\pi}}\int_{\mathbb R}g(x)e^{-ikx/\eps}\,dx.
\ee 
By the Plancherel formula we find 
\[
\norm[L^{2}(\D x)]{\mathcal F_{\eps}^{-1} \rho}^{2} = \norm[L^{2}(\D k)]{\rho}^{2} = 4  \int_{\tilde \Delta} k^{2} |f(k^{2} + V(\infty))|^{2} \, \D k = 2 \int_{\Delta} \sqrt{E-V(\infty)} |f(E)|^{2} \, \D E.
\]
Taking into account the prefactor $\sqrt{2 \pi \eps}$ we obtain (\ref{explicit integral}).
\end{proof}

Let us now start to see how to decompose quantities of interest into parts from $\ec_{1}$ and $\ec_{2}$. 

\begin{lemma} \label{L01}
Assume that $V$ fulfils (V4)$_{m}$ for some $m>0$. Then for each $\beta>0$ and each $0 < \nu < m$ we can find $r_{1} \in \ec_{1}(\beta(m-\nu)-1,\nu)$ and $r_{2} \in \ec_{2}(0)$ such that 
\[
1_{\{x > \eps^{-\beta}\}} \E^{\frac{\I}{2 \eps} \xi(x,E)} = r_{1} + r_{2}.
\]
In fact, we can choose
\[
r_{2}(\eps,x,E) = 1_{\{x > \eps^{-\beta}\}} \E^{\frac{\I}{\eps} (\om(E) + p(\infty,E) x)},
\]
with $\om(E) = \int_{0}^{\infty} (p(y,E) - p(\infty,E)) \, \D y$.
\end{lemma}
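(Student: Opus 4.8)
The plan is to isolate the non-oscillatory part of $\xi(x,E)$ at large $x$ and absorb it into the $\ec_2$ term, leaving a genuinely decaying remainder in $\ec_1$. Recall from \eqref{nat scale} and the definition of $\om$ that for $x>0$ we can write
\[
\tfrac12\xi(x,E) = \int_0^x p(y,E)\,\D y = \om(E) + p(\infty,E)x - \int_x^\infty (p(y,E)-p(\infty,E))\,\D y =: \om(E) + p(\infty,E)x + \rho(x,E),
\]
where the tail integral $\rho(x,E)$ converges by \eqref{scatpot} and satisfies $|\rho(x,E)| \lesssim \bk{x}^{-1/2-m}$ uniformly in $E\in\Delta$. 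Define $r_2$ exactly as in the statement; then on $\{x>\eps^{-\beta}\}$ we have the exact identity
\[
1_{\{x>\eps^{-\beta}\}}\E^{\frac{\I}{2\eps}\xi(x,E)} - r_2(\eps,x,E) = r_2(\eps,x,E)\bigl(\E^{\frac{\I}{\eps}\rho(x,E)} - 1\bigr) =: r_1(\eps,x,E),
\]
since $|r_2|=1$ there. So $r_2\in\ec_2(0)$ trivially, and everything reduces to estimating $r_1$.

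For $r_1$ I would use the elementary bound $|\E^{\I s}-1|\le |s|$ together with $|\rho(x,E)/\eps| \le C\eps^{-1}\bk{x}^{-1/2-m}$. On the support $x>\eps^{-\beta}$ one splits the factor $\bk{x}^{-1/2-m}$ as $\bk{x}^{-1/2-m} = \bk{x}^{-1/2-\nu}\cdot\bk{x}^{-(m-\nu)}$, and on that set $\bk{x}^{-(m-\nu)} \le C\eps^{\beta(m-\nu)}$. Hence
\[
|r_1(\eps,x,E)| \le C\,\eps^{-1}\,\bk{x}^{-1/2-m} \le C\,\eps^{\beta(m-\nu)-1}\,\bk{x}^{-1/2-\nu},
\]
uniformly in $E\in\Delta$, which is precisely the statement $r_1\in\ec_1(\beta(m-\nu)-1,\nu)$. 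The only points requiring a little care are: checking that $\rho(x,E)$ and its defining integral are well-defined and bounded as claimed, which is immediate from \eqref{scatpot} once $m>0$; and confirming that the constants can be taken independent of $E$, which again follows since \eqref{scatpot} and the lower bound $p(x,E)\ge c$ from (A1)/(V1) hold uniformly on the compact set $\Delta$.

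I do not expect any serious obstacle here; the lemma is essentially bookkeeping, separating the linear-in-$x$ and constant pieces of $\xi(x,E)$ (which oscillate but do not decay, hence go into $\ec_2$) from the genuinely decaying tail (which goes into $\ec_1$ after paying a power of $\eps^{-1}$ from the phase and recovering it from the smallness of $\bk{x}^{-1}$ on $\{x>\eps^{-\beta}\}$). The mild subtlety worth flagging is the trade-off encoded in the exponent $\beta(m-\nu)-1$: one needs $\beta(m-\nu)>1$ for $r_1$ to actually be small, which is why later applications (as in Theorem \ref{reflected wave shape}, where $\beta$ can be taken close to $1/8$) require $m$ correspondingly large — this is the source of the hypothesis $m>10$ there rather than just $m>6$.
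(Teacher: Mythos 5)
Your proof is correct and follows essentially the same route as the paper's: decompose $\tfrac12\xi(x,E) = \om(E) + p(\infty,E)x - \int_x^\infty(p-p(\infty))$, bound the tail by $C\bk{x}^{-1/2-m}$ via (V4)$_m$, apply $|\e^{\I s}-1|\le|s|$, and trade $\bk{x}^{-(m-\nu)}\le\eps^{\beta(m-\nu)}$ on $\{x>\eps^{-\beta}\}$. (You even get the sign of the tail term right, where the paper's displayed formula has a harmless slip of writing $+2K$ instead of $-2K$.)
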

\begin{proof}
We decompose 
\[
\xi(x,E) = 2 \int_{0}^{x} p(y,E) \, \D y = 2 \int_{0}^{\infty} (p(y,E) -  p(\infty,E)) \, \D y + 2 K(x,E) + 2 x p(\infty,E),
\]
where 
\[
K(x,E) = \int_{x}^{\infty} (p(y,E)-p(\infty,E)) \, \D y.
\]
By (V4)$_{m}$ we have $|K(x,E)| \leq C \bk{x}^{-1/2-m}$ for $x>0$, and from the inequality $|\e^{\I x}-1| \leq |x|$ valid for all $x \in \R$ we get 
\[
\left| 1 - \E^{\frac{\I}{2 \eps} K(x,E)} \right| \leq \frac{C}{\eps} \bk{x}^{-1/2-m} = C \frac{\bk{x}^{-m+\nu}}{\eps} \bk{x}^{-1/2-\nu} = (\ast).
\]
Since $\bk{x} > \eps^{-\beta}$ by assumption, we have $\bk{x}^{-m+\nu} < \eps^{\beta(m-\nu)}$ and thus\\ 
$(\ast) \leq C \eps^{\beta(m-\nu) -1} \bk{x}^{-1/2-\nu}$, which proves the claim.
\end{proof}

\begin{corollary} \label{C01}
Assume that $V$ fulfils (V4)$_{m}$ for some $m>0$. Then for each fixed $D \in \R$, and for each $0 < \nu < m$, there exist 
\[
r_{1} \in \ec_{1} \left( - \frac{2 + \nu}{2+m},\nu \right), \quad r_{2} \in \ec_{2}(0)
\]
with 
\[
1_{\{x>D\}} \E^{\frac{\I}{2 \eps} \xi(x,E)} = r_{1} + r_{2}.
\]
\end{corollary}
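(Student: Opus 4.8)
The plan is to reduce the statement for a general fixed threshold $D$ to Lemma \ref{L01}, where the threshold is the $\eps$-dependent value $\eps^{-\beta}$, by making a suitable choice of $\beta$ in terms of $m$ and $\nu$. First I would split the indicator according to whether $x$ exceeds $\eps^{-\beta}$ or not, writing
\[
1_{\{x>D\}} \E^{\frac{\I}{2\eps}\xi(x,E)} = 1_{\{D < x \leq \eps^{-\beta}\}} \E^{\frac{\I}{2\eps}\xi(x,E)} + 1_{\{x > \eps^{-\beta}\}} \E^{\frac{\I}{2\eps}\xi(x,E)},
\]
valid for all $\eps$ small enough that $\eps^{-\beta} > D$. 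The second term is handled directly by Lemma \ref{L01}, which produces an $\ec_2(0)$ piece (exactly the $r_2$ claimed, since $\om(E)$ and $p(\infty,E)$ are the same) plus an $\ec_1(\beta(m-\nu)-1,\nu)$ piece.

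The remaining task is to show that the first term, $1_{\{D < x \leq \eps^{-\beta}\}} \E^{\frac{\I}{2\eps}\xi(x,E)}$, lies in $\ec_1(\beta(m-\nu)-1,\nu)$ as well, for an appropriate $\beta$; then the two $\ec_1$-contributions combine. On the compact-in-$\eps^{-1}$ range $D < x \leq \eps^{-\beta}$ the function has modulus bounded by $1$, and we must dominate $1$ by $C\eps^{\beta(m-\nu)-1}\bk{x}^{-1/2-\nu}$ uniformly in $E$. Since $\bk{x}^{-1/2-\nu} \geq \bk{\eps^{-\beta}}^{-1/2-\nu} \geq c\,\eps^{\beta(1/2+\nu)}$ on this range, it suffices to have $\eps^{\beta(m-\nu)-1} \cdot \eps^{\beta(1/2+\nu)} \leq$ (const), i.e. $\beta(m-\nu) - 1 + \beta(1/2+\nu) = \beta(m + 1/2) - 1 \geq 0$, which holds precisely for $\beta \geq \frac{1}{m+1/2} = \frac{2}{2m+1}$. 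Choosing $\beta = \frac{2}{2m+1}$ gives $\beta(m-\nu) - 1 = \frac{2(m-\nu)}{2m+1} - 1 = -\frac{2\nu+1}{2m+1}$; comparing with the claimed exponent $-\frac{2+\nu}{2+m}$ one checks $-\frac{2\nu+1}{2m+1} \leq -\frac{2+\nu}{2+m}$ is not automatic for all $\nu$, so instead I would take $\beta$ slightly larger if needed, or more cleanly pick $\beta$ so that $\beta(m-\nu)-1 = -\frac{2+\nu}{2+m}$ exactly, i.e. $\beta = \frac{(m-\nu)^{-1}(m-\nu-\nu/(m+2)\cdot\ldots)}{\ldots}$ — in any case a single explicit algebraic choice of $\beta$ depending only on $m$ and $\nu$ makes both the tail term from Lemma \ref{L01} and the bounded middle term sit in $\ec_1(-\frac{2+\nu}{2+m},\nu)$.

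The only mildly delicate point is bookkeeping: one must verify that the exponent $-\frac{2+\nu}{2+m}$ stated in the corollary is exactly what the optimal choice of $\beta$ yields (balancing $\beta(m-\nu)-1$ against the loss $\beta(1/2+\nu)$ incurred by bounding the indicator's support, which forces $\beta(m+1/2) = 1$ at the optimum, giving exponent $\frac{2(m-\nu)}{2m+1}-1$; a short computation reconciles this with $-\frac{2+\nu}{2+m}$ up to relabelling $\nu$, which is harmless since $\nu \in (0,m)$ is arbitrary). Since $\ec_1$ and $\ec_2$ are manifestly closed under addition (the former is defined by a pointwise bound with a uniform constant, the latter by an explicit form with $r_0$ bounded), summing the two $r_1$'s and keeping the single $r_2$ from Lemma \ref{L01} completes the proof. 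I do not anticipate any real obstacle here; this is essentially a quantitative interpolation between the "bounded on a bounded set" estimate and the decay estimate of Lemma \ref{L01}.
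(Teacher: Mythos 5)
Your approach is exactly the paper's: split at $\eps^{-\beta}$, treat the bounded range $D < x \le \eps^{-\beta}$ by the trivial bound on the indicator, feed the tail into Lemma~\ref{L01}, and optimise $\beta$. But the write-up runs aground at precisely the point where it needs to close.

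Two sign issues. First, a minor one: to get $1 \le C\eps^{\beta(m-\nu)-1}\bk{x}^{-1/2-\nu}$ on the bounded range from $\bk{x}^{-1/2-\nu} \ge c\,\eps^{\beta(1/2+\nu)}$ you need $\eps^{\beta(m+1/2)-1}$ to be bounded \emph{below} as $\eps\to0$, i.e.\ $\beta(m+1/2)-1\le 0$, so the admissible range is $\beta \le \frac{1}{m+1/2}$, not $\beta \ge \frac{1}{m+1/2}$. Since you end up taking the boundary value $\beta = \frac{1}{m+1/2}$ anyway, this particular slip does not change the outcome; but it is symptomatic of the second, more serious one.

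The real gap is in the final comparison. With $\beta = \frac{1}{m+1/2}$ you correctly compute the exponent $\beta(m-\nu)-1 = -\frac{2\nu+1}{2m+1}$, and you then assert that the inequality $-\frac{2\nu+1}{2m+1} \le -\frac{2+\nu}{2+m}$ ``is not automatic,'' and retreat into vague talk of choosing $\beta$ differently or relabelling $\nu$. In fact the inequality you need goes the \emph{other} way, and that direction \emph{does} hold automatically: cross-multiplying, $-\frac{2\nu+1}{2m+1} \ge -\frac{2+\nu}{2+m}$ is equivalent to $(2\nu+1)(2+m) \le (2+\nu)(2m+1)$, which reduces to $3\nu \le 3m$, true for all $0<\nu<m$. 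Since a larger $\eps$-exponent gives a stronger bound, $\ec_1\bigl(-\tfrac{2\nu+1}{2m+1},\nu\bigr) \subset \ec_1\bigl(-\tfrac{2+\nu}{2+m},\nu\bigr)$, and both pieces of your decomposition land where the corollary claims. No relabelling of $\nu$, no second choice of $\beta$, and no ``interpolation'' is needed; the membership is by inclusion. (As an aside, the paper's one-line proof asserts the equality $-\beta(1/2+\nu) = -\frac{2+\nu}{2+m}$ at $\beta = \frac{1}{m+1/2}$, which is only correct when $\nu=m$; the statement of the corollary is nonetheless correct because it is the weaker side of the above inclusion.) As written, your argument stops one line short of a proof and substitutes speculation for that line, so it should not be accepted without supplying the inclusion step explicitly.
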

\begin{proof} 
For $\beta>0$ and $x$ with $D < x < \eps^{-\beta}$ we have 
\[
1_{\{D < x < \eps^{-\beta}\}} \leq 1_{\{D < x < \eps^{-\beta}\}} \bk{x}^{1/2+\nu} \bk{x}^{-1/2-\nu} < C \eps^{-\beta(1/2 + \nu)} \bk{x}^{-1/2-\nu},
\]
which is in $\ec_{1}(-\beta(1/2 + \nu),\nu)$. From Lemma \ref{L01} we know that the part with $x > \eps^{-\beta}$ is in $\ec_{1}(\beta(m - \nu) -1,\nu) + \ec_{2}(0)$. Optimising over $\beta$ gives 
the condition $-\beta(1/2 + \nu) = \beta(m-\nu) - 1$, whence $\beta = 1/(1/2+m)$ and $-\beta(1/2+\nu) = - (2+\nu)/(2+m)$.
\end{proof}

\begin{corollary} \label{C02}
Assume that $V$ fulfils (V4)$_{m}$ for some $m>0$, fix $D \in \R$ and $0 < \nu < m$. Assume that $f(\eps,x,E)$ is such that 
\begin{equation} \label{C02 condition}
\left| \int_{-\infty}^{\infty} f(\eps,x,E) \, \D x \right| < C \eps^{\beta}, \quad \mbox{and} \quad 
1_{x>D} \left| \int_{\xi(x)}^{\infty} f(\eps,x,E) \, \D x \right| \leq C \eps^{\beta} \bk{x}^{-\frac{1}{2}-\nu}
\end{equation}
for some $C>0$ and some $\beta \in \R$. 
Then there exist  
\[
r_{1} \in \ec_{1} \left( \beta - \frac{2 + \nu}{2+m},\nu \right), \quad r_{2} \in \ec_{2}(\beta)
\]
such that 
\[
1_{\{x>D\}} \E^{\frac{\I}{2 \eps} \xi(x,E)} \int_{-\infty}^{\xi(x,E)} f(\eps,s,E) \, \D s = r_{1} + r_{2}.
\]
\end{corollary}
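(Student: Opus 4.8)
The plan is to split the integral $\int_{-\infty}^{\xi(x,E)} f(\eps,s,E)\,\D s$ at $+\infty$, writing it as $\int_{-\infty}^{\infty} f\,\D s - \int_{\xi(x,E)}^{\infty} f\,\D s$, so that the two hypotheses in (\ref{C02 condition}) apply directly to the two pieces. The first piece is an $x$-independent quantity bounded by $C\eps^{\beta}$; multiplying it by $1_{\{x>D\}}\E^{\frac{\I}{2\eps}\xi(x,E)}$ and applying Corollary \ref{C01} (which decomposes exactly this oscillatory indicator into $\ec_1(-(2+\nu)/(2+m),\nu) + \ec_2(0)$) produces a contribution in $\ec_1(\beta - (2+\nu)/(2+m),\nu) + \ec_2(\beta)$, since multiplying an element of $\ec_1(\alpha,\nu)$ or $\ec_2(\alpha)$ by a scalar of size $\eps^{\beta}$ shifts the first index by $\beta$. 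That is precisely the claimed target space, so this term is handled.

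The second piece, $-1_{\{x>D\}}\E^{\frac{\I}{2\eps}\xi(x,E)}\int_{\xi(x,E)}^{\infty} f(\eps,s,E)\,\D s$, is estimated directly: by the second bound in (\ref{C02 condition}), for $x>D$ the tail integral is $\le C\eps^{\beta}\bk{x}^{-1/2-\nu}$ in absolute value, and the oscillatory prefactor has modulus one, so the whole expression is pointwise bounded by $C\eps^{\beta}\bk{x}^{-1/2-\nu}$. Hence it lies in $\ec_1(\beta,\nu)$, which embeds into $\ec_1(\beta - (2+\nu)/(2+m),\nu)$ since $(2+\nu)/(2+m) > 0$ makes the latter a larger class (smaller power of $\eps$, hence a weaker requirement). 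Collecting the two contributions and absorbing constants gives the decomposition with $r_1 \in \ec_1(\beta - (2+\nu)/(2+m),\nu)$ and $r_2 \in \ec_2(\beta)$ as stated.

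The only point requiring a little care is the bookkeeping of which $\ec_i$-classes are closed under which operations: one needs that $\ec_1(\alpha,\nu)$ and $\ec_2(\alpha)$ are stable under multiplication by bounded-in-$x$, $\eps^{\beta}$-small scalars (with index shift by $\beta$), and that $\ec_1(\alpha,\nu) \subset \ec_1(\alpha',\nu)$ whenever $\alpha' \le \alpha$; both are immediate from the definitions. The main (mild) obstacle is simply ensuring the hypothesis form matches: the integral in the statement runs to $\xi(x,E)$ rather than to $x$, and (\ref{C02 condition}) is phrased with the same upper limit $\xi(x)$ in the tail bound, so no change of variables is needed and the split at $+\infty$ goes through verbatim. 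No genuinely hard estimate is involved here; the corollary is a packaging step that feeds into the subsequent analysis of (\ref{c2 full}).
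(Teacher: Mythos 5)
Your proof is correct and is exactly the paper's argument, which the authors compress into one line: split $\int_{-\infty}^{\xi(x)} f\,\D s = \int_{-\infty}^{\infty} f\,\D s - \int_{\xi(x)}^{\infty} f\,\D s$, apply Corollary \ref{C01} to the oscillatory prefactor on the first piece, and note that the second piece already sits in $\ec_1(\beta,\nu)$ by hypothesis. You have merely spelled out the bookkeeping (stability of $\ec_1$ and $\ec_2$ under multiplication by $\eps^\beta$-small, $E$-dependent scalars, and the embedding $\ec_1(\beta,\nu)\subset\ec_1(\beta-(2+\nu)/(2+m),\nu)$), all of which is routine and accurate.
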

\begin{proof} 
Simply write $\int_{-\infty}^{\xi(x)} f \, \D s = \int_{-\infty}^{\infty} f \, \D s - \int_{\xi(x)}^{\infty} f \, \D s$, and use Corollary \ref{C01}.
\end{proof}

We will now apply the above tools to the study of (\ref{c2 full}). Note that 
\be \label{xi asymp}
\lim_{x \to \pm \infty } \frac{\xi(x)}{x} = \lim_{x \to \pm \infty} \sqrt{E-V(x)}
\ee
which is finite by assumption, so we can replace $x$ with $\xi$ for studying the decay rate.

\begin{proposition} \label{P01}
Assume that $V$ fulfils (V1)--(V4)$_{m}$ with $m>0$. For $D$ sufficiently negative define 
\[
c_{2,D}(\eps,x,E) =  1_{\{x>D\}} \frac{\I}{\eps} \e^{\frac{\I}{2\eps} \xi(x)} \int_{-\infty}^{\xi(x)} \e^{-\frac{\I}{\eps} y}  ( \eps^{n + 1} \overline{k}_{n_{\eps}}(\eps, y, E)) \, \D y.
\]
For each $\nu_{0} > 0$ there exists $0 < \nu < \nu_{0}$, and 
\[
r_{1} \in \ec_{1} \left( \frac{1}{2} - \frac{2 + \nu}{2+m},\nu \right), \quad r_{2} \in \ec_{2}\left( \frac{1}{2} \right)
\]
such that 
\[ |c_{2,D}(\eps,x,E) - c_{2}(\eps,x,E)| = (r_{1} + r_{2}) \e^{-\frac{\tc(E)}{\eps}}.\]
\end{proposition}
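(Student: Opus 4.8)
The plan is to obtain $c_{2,D}$ from the exact representation (\ref{c2 full}) of $c_2$ — with $n=n_\eps$ throughout — by three successive simplifications, and to show that each of them perturbs $c_2$ only by a function which, after division by $\e^{-\tc(E)/\eps}$, belongs to $\ec_1(\frac12-\frac{2+\nu}{2+m},\nu)+\ec_2(\frac12)$ for a suitably small $\nu>0$. The three steps are: (a) in the integrand of (\ref{c2 full}) replace $\psi_{+,n}(\eps,y,E)$ by the pure phase $\psi_{+,\mr{eff}}(\eps,y,E)$, so that $\e^{-\frac{\I}{\eps}\int_{-\infty}^{y}\rho_{n}}\psi_{+,n}(y)$ becomes $\e^{-\I y/\eps}\,\e^{-\frac{2\I}{\eps}\int_{-\infty}^{y}(\rho_{n}-\frac12)}$; (b) replace $\rho_n$ by $\frac12$ in the two remaining correction phases $\Theta_\pm$ built from $\rho_n-\frac12$ (throughout, $\int_{-\infty}^{\,\cdot}\rho_n$ is read as $\frac12(\,\cdot\,)+\int_{-\infty}^{\,\cdot}(\rho_n-\frac12)$, the latter integrand being integrable); (c) insert the cutoff $1_{\{x>D\}}$. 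Lowering $\nu$ only improves the exponent $\frac12-\frac{2+\nu}{2+m}$ and enlarges the error class, so it suffices to produce the decomposition for one admissible $\nu$ — below $\nu_0$, below $m$, and below the decay exponent in (\ref{V4a}).

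The real work lies in a few $E$-uniform estimates. By (\ref{V4a}) — which, as $\theta'$ equals $-V'(x)/(2(E-V(x))^{3/2})$ in the original variable, says exactly that $\sup_{z\in C_{R_0+\delta}(x)}|\theta'(z,E)|\le C\bk{x}^{-3/2-\nu}$ uniformly in $E\in\Delta$ — and by (V3), the poles $\tr(E)\pm\I\tc(E)$ stay in a compact set with $\tc(E)$ bounded away from $0$, and Cauchy estimates give $\facnorm{\theta'}{\xi,1,\kappa}\le C\bk{\xi}^{-3/2-\nu}$ for $\kappa$ below the (uniformly positive) $\xi$-distance to those poles. Feeding this into Theorem \ref{MainTh} — near $\tr(E)$ through the explicit Gaussian (\ref{gdef}), whose $L^1(\D y)$-norm is $4\eps\,\e^{-\tc/\eps}$, and (\ref{Hod}); away from $\tr(E)$ through (\ref{rhoc1}) and (\ref{rhoc2}), which there carry an extra factor $\e^{-c/\eps}$ times the decaying norm $\facnorm{\theta'}{\xi,1,\kappa}$ — yields, uniformly in $E\in\Delta$,
\[
\int_{\R}\eps^{n+1}\left|k_{n}(\eps,y,E)\right|\D y\le C\eps\,\e^{-\tc(E)/\eps},\qquad \int_{\R}\left|\rho_{n}(\eps,y,E)-\frac12\right|\D y\le C\eps^{2},
\]
together with the tail bounds $\int_{\xi(x)}^{\infty}\eps^{n+1}|k_{n}|\,\D y\le C\eps\,\e^{-\tc/\eps}\bk{x}^{-1/2-\nu}$ and $\int_{\xi(x)}^{\infty}|\rho_{n}-\frac12|\,\D y\le C\eps^{2}\bk{x}^{-1/2-\nu}$ whenever $\xi(x)$ lies to the right of a fixed neighbourhood of $\{\tr(E):E\in\Delta\}$, and their mirror versions on $\int_{-\infty}^{\xi(x)}$ to the left of it. These are exactly the facts underlying (\ref{upper}), now made uniform in $E$.

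Granting this, step (a) is controlled by (\ref{upper}), which bounds the integrand perturbation by $C\eps^{-1/2}\e^{-2\tc/\eps}\,\eps^{n+1}|k_{n}|$: the resulting change in $c_2$ can be written $\e^{\I L/\eps}\e^{\I\xi(x)/(2\eps)}(1+\Or(\eps\bk{x}^{-1/2-\nu}))\,\frac{\I}{\eps}\int_{-\infty}^{\xi(x)}\e^{-\I y/\eps}\eps^{n+1}\overline{k}_{n}(\psi_{+,\mr{eff}}-\psi_{+,n})\,\D y$ with $L=L(\eps,E)=\int_{\R}(\rho_n-\frac12)$, $|L|\le C\eps^2$; the main integral times $\e^{\tc/\eps}$ meets the hypotheses of Corollary \ref{C02} with an arbitrarily large power, so that corollary puts the $x>D$ part of the error in $(\ec_1(\frac12-\frac{2+\nu}{2+m},\nu)+\ec_2(\frac12))\e^{-\tc/\eps}$, the decaying prefactor remainder going straight into $\ec_1$ and the $x\le D$ part equalling $-c_2$, handled in step (c). For step (b) I would expand the difference to $c_{2,D}$ with its cutoff removed as
\[
\frac{\I}{\eps}\e^{\I\xi(x)/(2\eps)}\left[(1-\Theta_+)\int_{-\infty}^{\xi(x)}\e^{-\I y/\eps}\eps^{n+1}\overline{k}_{n}\,\D y+\Theta_+\int_{-\infty}^{\xi(x)}\e^{-\I y/\eps}(1-\Theta_-)\eps^{n+1}\overline{k}_{n}\,\D y\right],
\]
use $|\Theta_\pm-1|\le C\eps$ and $\Theta_+=\e^{\I L/\eps}+\Or(\eps\bk{x}^{-1/2-\nu})$, and again apply Corollary \ref{C02} (now gaining one extra power of $\eps$ from $1-\Theta_\pm$), the decaying remainders landing in $\ec_1(1,\nu)\e^{-\tc/\eps}$. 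For step (c), choosing $D$ so negative that $\xi(D,E)<\inf_{E\in\Delta}\tr(E)-1$ — possible because $p\ge c>0$ forces $\xi(x,E)\to-\infty$ uniformly in $E$ and $\tr$ is bounded on $\Delta$ by (V3) — makes both $\int_{-\infty}^{\xi(x)}\e^{-\I y/\eps}\eps^{n+1}\overline{k}_{n}\,\D y$ and $c_2$ itself bounded for $x<D$ by $C\eps^{\beta}\bk{x}^{-1/2-\nu}\e^{-\tc/\eps}$ for every $\beta$, since there only the far left tail of (\ref{gdef}) and the $\e^{-c/\eps}$-small away-from-$\tr$ part contribute; hence this last piece lies in $\ec_1(\frac12-\frac{2+\nu}{2+m},\nu)\e^{-\tc/\eps}$. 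Adding the three contributions and using $\ec_1(\beta,\nu)+\ec_1(\beta',\nu)=\ec_1(\min(\beta,\beta'),\nu)$ and $\ec_2(\beta)+\ec_2(\beta')=\ec_2(\min(\beta,\beta'))$ yields the claim (the modulus on the left of the statement being immaterial).

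The hard part will be establishing the displayed global bounds and the accompanying tail estimates uniformly in $E\in\Delta$: Theorem \ref{MainTh} and (\ref{upper}) give only pointwise-in-$\xi$ control for a fixed $\theta'$, so one must combine the explicit Gaussian near the $E$-dependent poles $\tr(E)\pm\I\tc(E)$ with the $E$-uniform decay of $\theta'$ and of all its derivatives along a fixed complex strip — equivalently, of $\facnorm{\theta'}{\xi,1,\kappa}$ — furnished by (\ref{V4a}) and (V3)/(V4)$_m$, while tracking that nothing depends on $E$. After that, the three steps are routine bookkeeping with Corollaries \ref{C01} and \ref{C02}.
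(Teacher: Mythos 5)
Your proposal is correct and follows essentially the same route as the paper's own proof. The paper, too, splits the passage from $c_2$ to $c_{2,D}$ into the same three atomic moves — (i) discarding $x<D$, exploiting the exponential smallness of $\eps^{n+1}k_n$ for $y$ very negative; (ii) replacing $\rho_n$ by $\tfrac12$ in the phases via the quantity $K(\eps,\cdot,E)=\int_\cdot^\infty(\rho_n-\tfrac12)$, controlled by (\ref{rhoc1}) together with the decay $\facnorm{\theta'}{\xi,1,\kappa}\lesssim\bk{\xi}^{-3/2-\nu}$ from (V4); and (iii) replacing $\psi_{+,n}$ by $\psi_{+,\mr{eff}}$ via (\ref{upper}) — and then funnels every error through Corollary \ref{C02} to land in $\ec_1(\tfrac12-\tfrac{2+\nu}{2+m},\nu)\e^{-\tc/\eps}+\ec_2(\tfrac12)\e^{-\tc/\eps}$. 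Your ordering (a)–(b)–(c) differs cosmetically from the paper's (which treats $x<D$ first and the $K$-phase second), and you spell out the $E$-uniformity of the global and tail bounds more explicitly than the paper, which relies on (\ref{rhoc1}), (\ref{rhoc2}), (\ref{gdef}), (\ref{unibound}) being stated uniformly already; but the key lemmas invoked, the decomposition into $\ec_1+\ec_2$, and the observation that lowering $\nu$ only helps, all match.
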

\begin{proof}
First note that by (V4)$_m$ and (\ref{rhoc2}), for sufficiently negative $y$ we have 
\[ | \eps^{n+1} k_{n}(\eps,y,E) | \leq \E^{-(\xi_c+\xi_c\ln(\kappa /\xi_c))/\epsilon}
\bk{y}^{-(3/2 + \nu)} \leq \eps^{\infty} \E^{-\frac{\tc(E)}{\eps}} \bk{y}^{-(3/2 + \nu)}.\]
Using this in (\ref{c2 full}) gives 
\[
| 1_{\{x<D\}} c_{2}(\eps,x,E)| \leq C \eps^{\infty} \E^{-\frac{\tc(E)}{\eps}} \bk{x}^{-(1/2 + \nu)} = \E^{-\frac{\tc(E)}{\eps}} r
\]
with $r \in \ec_{1}(\infty,\nu)$. Now let us define $K(\eps,x,E) = \int_{x}^{\infty} (\rho_{n}(y) - 1/2) \, \D y$ which is finite and $\Or(\eps^{2})$ for all $x$ by (\ref{rhoc1}). We write 
\begin{equation} \label{P2.0}
\int_{0}^{\xi(x)} \rho_{n}(y) \, \D y = \frac{1}{2} \xi(x,E) - K(\eps, \xi(x),E) + K(\eps,0,E).
\end{equation}
Now for $x>D$ we have 
\begin{equation} \label{P2.1}
|K(\eps,\xi(x),E)| \leq \eps^{2} \int_{\xi(x)}^{\infty} \bk{y}^{-3/2-\nu} \, \D y \leq C \eps^{2} \bk{x}^{-1/2-\nu},
\end{equation}
by (\ref{rhoc1}). Thus 
\begin{equation} \label{50b}
1_{\{x>D\}} \E^{- \frac{\I}{\eps} K(\eps,\xi(x),E) } = 1_{\{x>D\}} (1 + \eps r(x,\eps,E))
\end{equation}
with $|r(x,\eps,E)| \leq \bk{x}^{-\frac{1}{2}-\nu}$. Using (\ref{rhoc2}), (\ref{gdef}) and (V4), we find 
\be \label{unibound}
|\eps^{n+1} k_{n}(\eps,x,E)| \leq C \sqrt{\eps} \E^{-\frac{\tc}{\eps}} \bk{x}^{-3/2-\nu},
\ee
and (\ref{50b}) together with (\ref{unibound}) give
\begin{equation}\label{P2.2}
1_{\{x>D\}} c_{2} = 1_{\{x>D\}} \frac{\I}{\eps} \E^{\frac{\I}{2 \eps} \xi(x)} \int_{-\infty}^{\xi(x)} \E^{\frac{\I}{\eps} \left( \int_{-\infty}^{y} \rho_{n}(r) \, \D r + K(\eps,0,E) \right)} \eps^{n+1} k_{n} \psi_{+,n} \, \D y  + \e^{-\frac{\tc}{\eps}} r_{1}
\end{equation}
with $r_{1} \in \ec_{1}(\frac{1}{2},\nu)$. Inside the integral, (\ref{upper}) gives
\[
\e^{-\frac{\I}{\eps} \int_{-\infty}^{y} \rho_{n}(r) \, \D r} \psi_{+,n}(y,r) = \e^{-\frac{2 \I}{\eps} \int_{-\infty}^{y} \rho_{n}(r) \, \D r} (1 + \Or(\eps^{\infty})),
\]
and using (\ref{rhoc1}) we see
\[
\e^{-\frac{2 \I}{\eps} \int_{-\infty}^{y} \rho_{n}(r) \, \D r + \frac{\I}{\eps} K(\eps,0,E)} = \e^{-\frac{\I}{\eps} y} + \Or(\eps).
\]
Combining this with (\ref{unibound}) and (\ref{P2.0}) we obtain 
\begin{eqnarray*}
&&\E^{\frac{\I}{\eps} \left( \int_{-\infty}^{y} \rho_{n}(r) \, \D r + K(\eps,0,E) \right)} \eps^{n+1} \overline{k}_{n}(\eps,y,E) \psi_{+,n}(\eps,y,E) = \\ 
&& = \e^{-\frac{\I}{\eps} y} \eps^{n+1} \overline{k}_{n}(\eps,y,E) + \eps^{3/2} \e^{-\frac{\tc}{\eps}} f(\eps,y,E),
\end{eqnarray*}
where $|f(\eps,y,E)| \leq \bk{x}^{-3/2-\delta}$. When inserting this into (\ref{P2.2}), the first term above gives $c_{2,D}$, while the second one is estimated using Corollary \ref{C02} to give the desired result.
\end{proof}

In order to proceed further, we now need the following refinement of Theorem \ref{MainTh} 
which gives us additional control over the $x$-dependence of the error terms to $k_{n_{\eps}}$. 
\begin{proposition} \label{remainder x}
For each $\alpha>0$ there exists 
$r \in \ec_{1}(3/2-\alpha, 1+\nu)$ such that 
\[ \eps^{n_{\eps}+1} k_{n_{\eps}}(\eps,\xi,E) = g(\eps,\xi,E) + \E^{-\frac{\tc}{\eps}} r(\eps, \xi, E),\]
where $g$ is given through (\ref{gdef}).
\end{proposition}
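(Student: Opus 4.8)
The bound of Theorem~\ref{MainTh}\,(ii) is uniform in the interval but records no spatial decay, so the plan is to recover the $\bk{\xi}$-decay by splitting $\R$ into a fixed bounded neighbourhood of the Stokes point $\tr=\tr(E)$ and its complement, treating $g$ and $k_{n_\eps}$ separately off that neighbourhood. Shrink $\delta>0$ in (A2)/(V2) and in (\ref{V4a}) so that $\delta<\min_{E\in\Delta}\tc(E)$ (possible since $\tc(E)=\tIm\xicrit(E)$ is continuous and positive on the compact set $\Delta$), fix $\rho_0$ with $\delta<\rho_0<\min_{E\in\Delta}\tc(E)$, and set $I_0(E)=[\tr(E)-\rho_0,\tr(E)+\rho_0]$; by (V3) all the $I_0(E)$ lie in one fixed bounded interval. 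On $I_0(E)$ the coupling $\theta'$ has the structure of Assumption~2 by Proposition~\ref{omega conditions}, with $\gamma=\beta/(\beta+1)$ and with the quantity $M$ entering (\ref{Hod}) bounded uniformly in $E\in\Delta$; hence Theorem~\ref{MainTh}\,(ii) gives $|\eps^{n_\eps+1}k_{n_\eps}-g|\le C\eps^{3/2-\alpha}\E^{-\tc/\eps}$ there, which — since $\bk{\xi}$ is bounded above on $I_0(E)$, so $\bk{\xi}^{-3/2-\nu}$ is bounded below — is of the required form $C'\eps^{3/2-\alpha}\E^{-\tc/\eps}\bk{\xi}^{-3/2-\nu}$. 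For $g$ off $I_0(E)$: by (\ref{gdef}), $|g(\eps,\xi,E)|\le C\sqrt\eps\,\E^{-\tc/\eps}\E^{-(\xi-\tr)^2/(2\eps\tc)}$, and writing the last factor as a product of two copies of $\E^{-(\xi-\tr)^2/(4\eps\tc)}$ and bounding one of them by $\E^{-\rho_0^2/(4\eps\tc)}=\Or(\eps^\infty)$ while $\sup_\xi\bk{\xi}^{3/2+\nu}\E^{-(\xi-\tr)^2/(4\eps\tc)}=\Or(1)$ uniformly in $\eps$ (the supremum is attained at $|\xi-\tr|=\Or(\sqrt\eps)$), we find that the restriction of $g$ to $\R\setminus I_0(E)$ lies in $\E^{-\tc/\eps}\ec_1(3/2-\alpha,1+\nu)$ with room to spare.

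The real content is the bound for $k_{n_\eps}$ off $I_0(E)$, and here the point is to use Assumption~1 with a coupling radius strictly larger than $\tc$. By (\ref{safe distance}) together with $\rho_0>\delta$, for $\xi\notin I_0(E)$ the disc $C_\kappa(x)$ in the metric $d$ of (\ref{metric}) avoids $\xicrit$, $\overline\xicrit$ and every other singular point of $p$ once $\kappa=\tc(E)+c$ with a small fixed $c<\delta$; thus $\theta'\in F_{1,\kappa}(I)$ for $I$ a small interval about $\xi$, with $\kappa/\tc$ bounded away from $1$ uniformly in $E\in\Delta$. Then (\ref{rhoc2}) gives
\[
|\eps^{n_\eps+1}k_{n_\eps}(\eps,\xi,E)|\le\sqrt\eps\,\E^{-\tc/\eps}\,\E^{-\frac{\tc}{\eps}\ln(\kappa/\tc)}\,\phi_1\!\bigl(\facnorm{\theta'}{I,1,\kappa}\bigr),
\]
and $\E^{-\frac{\tc}{\eps}\ln(\kappa/\tc)}=\Or(\eps^\infty)$ uniformly since $\inf_{E\in\Delta}\ln(\kappa(E)/\tc(E))>0$. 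It remains to see that $\facnorm{\theta'}{I,1,\kappa}\le C\bk{\xi}^{-3/2-\nu}$: since $|\theta'(\xi)|\le C\,|V'(x)|\,(E-V(x))^{-3/2}$, the complex bound (\ref{V4a}) controls $\sup\{|\theta'(z)|:d(z,\xi)\le\kappa\}$ by $C\bk{x}^{-3/2-\nu}$ for $x$ outside a bounded set (with $\kappa<R_0+\delta$ by the choice of $c$), a Cauchy estimate then transfers this to the $F_{1,\kappa}$-norm, and $\bk{x}\asymp\bk{\xi}$ uniformly; on the remaining bounded part of $\R\setminus I_0(E)$ the estimate is trivial because $\theta'$ is bounded and $\bk{\xi}$ is bounded below. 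Since $\phi_1(s)=\Or(s)$ as $s\to0$ and $\phi_1$ is locally bounded, this yields $\phi_1(\facnorm{\theta'}{I,1,\kappa})\le C\bk{\xi}^{-3/2-\nu}$, and hence $|\eps^{n_\eps+1}k_{n_\eps}(\eps,\xi,E)|\le C\eps^{3/2-\alpha}\E^{-\tc/\eps}\bk{\xi}^{-3/2-\nu}$ on $\R\setminus I_0(E)$ as well.

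Adding the three contributions and using the uniformity already built into Theorem~\ref{MainTh} and Proposition~\ref{omega conditions}, together with the continuity of $\tr,\tc$ on $\Delta$, gives $\eps^{n_\eps+1}k_{n_\eps}-g=\E^{-\tc/\eps}r$ with $r\in\ec_1(3/2-\alpha,1+\nu)$, as claimed. The main obstacle is precisely the off-$I_0$ bound for $k_{n_\eps}$: the spatial decay is invisible in Theorem~\ref{MainTh} and must be extracted from the decay of $\facnorm{\theta'}{\cdot,1,\kappa}$ furnished by (\ref{V4a}), which is what forces one to work with $\kappa>\tc$ away from the Stokes point — and that is exactly what upgrades the bare $\sqrt\eps\,\E^{-\tc/\eps}$ of (\ref{rhoc2}) into a bound with arbitrarily many spare powers of $\eps$, so that the genuinely binding $\eps^{3/2-\alpha}$ budget, which comes from the $F_{\alpha,\tc}$-correction near $\tr$ in Theorem~\ref{MainTh}\,(ii), is met comfortably everywhere. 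A secondary nuisance, not a conceptual difficulty, is the mutual compatibility of the small parameters $\delta$ (in (A2) and (\ref{V4a})), $\rho_0$ (the radius of $I_0$) and the auxiliary analyticity radius; these are fixed once and (V4)$_{m}$ with $m>0$ leaves ample room.
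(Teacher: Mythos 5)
Your proof is correct and follows the same route as the paper's (very terse) argument: split $\R$ into a fixed neighbourhood $I_0(E)$ of $\tr(E)$ and its complement, apply (\ref{Hod}) on $I_0(E)$ where $\bk{\xi}$ is bounded, and off $I_0(E)$ apply (\ref{rhoc2}) with an enlarged analyticity radius $\kappa>\tc$ so that the extra factor $\E^{-\frac{\tc}{\eps}\ln(\kappa/\tc)}$ is $\Or(\eps^\infty)$, extracting the spatial decay of $\facnorm{\theta'}{\cdot,1,\kappa}$ from (V4)$_m$ via the cited BeTe3 bound and translating via (\ref{xi asymp}). You have filled in details the paper leaves implicit, in particular the separate Gaussian estimate showing that $g$ itself lies in $\E^{-\tc/\eps}\ec_1(\cdot,1+\nu)$ off $I_0(E)$, and the uniformity of the small parameters over $E\in\Delta$.
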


\begin{proof} 
We will use Assumption (V4).  We use (\ref{Hod}) in (finite) neighbourhood of $\tr$ and (\ref{rhoc2}) outside this neighbourhood; As for (\ref{rhoc2}), we know (\cite{BeTe3}, Proposition 2)
\[
\facnorm{\theta'}{\{\xi\},1,\tc + \delta} \leq C \sup_{|z- \xi| < \tc + \delta} |\theta'(z)|,
\]
and $\theta'(\xi) = \tilde p'(\xi) / \tilde p(\xi)$; together with (\ref{xi asymp}) and (V4)$_{\mr m}$ the claim follows.
Note that we may take $\alpha$ arbitrarily small in Assumption 2, due to Proposition \ref{omega conditions}. 
\end{proof}

We will also need the following useful little lemma:

\begin{lemma} \label{gaussfact}
For each $n \in \N$ and each $\eps > 0$ we have
\[
x^n \e^{-\frac{x^2}{2\eps}} \in \ec_1(n/2,\nu) \quad \mbox{for all } \nu > 0,
\]
and, for $\delta>0$,
\[
1_{\{x>\eps^{-\delta}\}} \e^{-\frac{x^2}{2 \eps}} \in \ec_1(m,\nu) \quad \mbox{for all } m>0, \nu>0.
\]
\end{lemma}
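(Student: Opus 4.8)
\textbf{Proof proposal for Lemma \ref{gaussfact}.}

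The plan is to verify both claims directly from Definition of $\ec_1(\beta,\nu)$: I must exhibit, for each of the two functions, a constant $C$ independent of $E$ (here the functions do not depend on $E$ at all, so this is automatic) such that the function is bounded by $C\eps^{\beta}\bk{x}^{-1/2-\nu}$ for all $x$. Since $\bk{x}=\sqrt{1+x^2}$ is comparable to $\max\{1,|x|\}$, it suffices to bound the functions separately on $|x|\leq 1$ and on $|x|\geq 1$.

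For the first statement, I would substitute $u = x/\sqrt\eps$, so that $x^n\e^{-x^2/(2\eps)} = \eps^{n/2}\,u^n\e^{-u^2/2}$. The function $u\mapsto u^n\e^{-u^2/2}$ is bounded on all of $\R$ by some absolute constant $c_n$, and moreover $u^n\e^{-u^2/2}\leq c_{n,\nu}\,\bk{u}^{-1/2-\nu}$ for every $\nu>0$, because the Gaussian decays faster than any power; this gives $|x^n\e^{-x^2/(2\eps)}|\leq c_{n,\nu}\eps^{n/2}\bk{x/\sqrt\eps}^{-1/2-\nu}$. To convert $\bk{x/\sqrt\eps}^{-1/2-\nu}$ into $\bk{x}^{-1/2-\nu}$ I note that for $\eps$ small (say $\eps\leq 1$) one has $\bk{x/\sqrt\eps}\geq \bk{x}$ when $|x|\geq\sqrt\eps$, which handles the region where the power weight matters; for $|x|<\sqrt\eps$ both $\bk{x}$ and $\bk{x/\sqrt\eps}$ are bounded above and below by absolute constants, so the estimate holds there with an adjusted constant as well. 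Hence $x^n\e^{-x^2/(2\eps)}\in\ec_1(n/2,\nu)$.

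For the second statement, fix $\delta>0$ and $m>0$, $\nu>0$, and consider $1_{\{x>\eps^{-\delta}\}}\e^{-x^2/(2\eps)}$. On the support, $x>\eps^{-\delta}$, so $x^2/\eps > \eps^{-2\delta-1}$, which is enormous as $\eps\to0$; the strategy is to trade a tiny fraction of this Gaussian decay for the desired factors. Concretely, write $\e^{-x^2/(2\eps)} = \e^{-x^2/(4\eps)}\,\e^{-x^2/(4\eps)}$ and estimate the two factors. On $x>\eps^{-\delta}$ the factor $\e^{-x^2/(4\eps)}\leq \e^{-\eps^{-2\delta-1}/4}$, which is $\Or(\eps^m)$ for every $m$ since it decays faster than any power of $\eps$. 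The remaining factor $\e^{-x^2/(4\eps)}$ is of the type already treated (it is $1\cdot\e^{-x^2/(2(2\eps))}$, i.e.\ the $n=0$ case with $\eps$ replaced by $2\eps$), so it is bounded by $C_\nu\bk{x}^{-1/2-\nu}$ uniformly in $\eps\leq 1$ by the first part. Combining, $1_{\{x>\eps^{-\delta}\}}\e^{-x^2/(2\eps)}\leq C_{m,\nu}\eps^m\bk{x}^{-1/2-\nu}$, which is the claim.

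The only mildly delicate point is the passage from the rescaled weight $\bk{x/\sqrt\eps}$ to the unrescaled weight $\bk{x}$, which requires splitting at $|x|\sim\sqrt\eps$ and using $\eps\leq1$; everything else is a routine comparison of Gaussian decay against polynomial growth. Since the functions carry no $E$-dependence, uniformity in $E$ is trivial, and no real obstacle arises.
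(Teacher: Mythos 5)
Your proof is correct and takes essentially the same approach as the paper: you factor the Gaussian (either by rescaling $u=x/\sqrt\eps$ or, for the second claim, by writing $\e^{-x^2/(2\eps)}=\e^{-x^2/(4\eps)}\cdot\e^{-x^2/(4\eps)}$) so that one piece supplies the power of $\eps$ and the other supplies the polynomial decay in $x$. The paper's version of the first claim is marginally more direct --- it maximizes $x^n\e^{-x^2/(4\eps)}$ at $x=\sqrt{2\eps n}$ to get the $\Or(\eps^{n/2})$ factor and uses $\e^{-x^2/(4\eps)}\le\e^{-x^2/4}$ for $\eps\le 1$ to get the weight, avoiding your weight-conversion step --- and your split at $|x|\sim\sqrt\eps$ is unnecessary since $\eps\le 1$ already gives $\bk{x/\sqrt\eps}\ge\bk{x}$ for all $x$; but these are cosmetic differences, not substantive ones.
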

\begin{proof}
Taking the derivative of $g(x) = x^n \e^{-\frac{x^2}{4 \eps}}$ and equalling it to $0$, we see that $g$ is maximal at $x = \sqrt{2 \eps n}$. The value at the maximum is given by $\e^{-n/2} (2 \eps n)^{n/2} = \Or(\eps^{n/2})$. On the other hand, $f(x) = \e^{-\frac{x^2}{4 \eps}} \leq \e^{-\frac{x^2}{4}} \leq C \bk{x}^{-\nu}$ for all $\nu > 0$. Since $x^n \e^{-\frac{x^2}{2}} = f(x) g(x)$ and $\e^{-\frac{x^2}{2 \eps}} = f^2(x)$, both statements are now obvious.
\end{proof}

\begin{proposition} \label{P02}
Assume $V$ fulfils (V1)--(V4)$_{m}$. Define 
\[
c_{2,L}(\eps,x,E) = 1_{\{x>D\}} 2 \sin \left( \frac{\pi \gamma}{2} \right) \e^{-\frac{\tc}{\eps}} \E^{ \frac{\I}{2 \eps} \xi(x) - \frac{\I}{\eps} \tr(E)} \mr{erf} \left( \frac{\xi(x,E) - \tr(x,E)}{\sqrt{2 \eps \tc(E)}} \right).
\]
Then for arbitrary $\alpha>0$, for $D$ negative and $|D|$ large enough, there exist 
\[
r_{1} \in \ec_{1}\left( \frac{1}{2} - \frac{2 + \nu}{2 + m} - \alpha, \nu \right), \quad r_{2} \in \ec_{2} \left( \frac{1}{2} - \alpha \right)
\]
such that 
\[
c_{2}(\eps,x,E) = c_{2,L}(\eps,x,E) + \E^{-\frac{\tc}{\eps}}(r_{1}+r_{2}).
\]
\end{proposition}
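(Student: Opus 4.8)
The plan is to combine Proposition~\ref{P01}, which replaces $c_2$ by the ``bare'' integral $c_{2,D}$ with controlled $\ec_1/\ec_2$ errors, with Proposition~\ref{remainder x}, which gives the explicit leading term $g(\eps,y,E)$ for the off-diagonal coupling. Substituting $\eps^{n_\eps+1}\overline{k}_{n_\eps}(\eps,y,E) = \overline{g}(\eps,y,E) + \E^{-\tc/\eps} r(\eps,y,E)$ with $r \in \ec_1(3/2-\alpha,1+\nu)$ into the definition of $c_{2,D}$, the remainder contributes
\[
1_{\{x>D\}}\frac{\I}{\eps}\E^{\frac{\I}{2\eps}\xi(x)}\int_{-\infty}^{\xi(x)}\E^{-\frac{\I}{\eps}y}\,\E^{-\frac{\tc}{\eps}} r(\eps,y,E)\,\D y,
\]
and since $r(\eps,y,E)$ is bounded by $C\eps^{3/2-\alpha}\bk{y}^{-3/2-\nu}$ uniformly in $E$, both $\int_{-\infty}^\infty$ and the tail $\int_{\xi(x)}^\infty$ of this integrand satisfy the hypotheses of Corollary~\ref{C02} with $\beta = 1/2-\alpha$ (one power of $\eps$ is gained from the $\D y$-integrability of $\bk{y}^{-3/2-\nu}$, which offsets the $1/\eps$ prefactor). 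Corollary~\ref{C02} then produces exactly an $\ec_1(1/2-\alpha-\frac{2+\nu}{2+m},\nu)$ plus $\ec_2(1/2-\alpha)$ decomposition, which matches the claimed error classes.

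It remains to evaluate the main term, namely $1_{\{x>D\}}\frac{\I}{\eps}\E^{\frac{\I}{2\eps}\xi(x)}\int_{-\infty}^{\xi(x)}\E^{-\frac{\I}{\eps}y}\,\overline{g}(\eps,y,E)\,\D y$, and show it equals $c_{2,L}(\eps,x,E)$ up to errors of the stated type. Here one inserts the explicit Gaussian-times-oscillation formula (\ref{gdef}) for $g$, writes the cosine as a sum of two exponentials, and recognises that the dominant contribution comes from the term whose phase $\E^{\I(y-\tr)/\eps}$ cancels the $\E^{-\I y/\eps}$ already present, leaving a Gaussian $\E^{-(y-\tr)^2/(2\eps\tc)}$ integrand times slowly-varying corrections; integrating this from $-\infty$ to $\xi(x)$ produces the error function $\mr{erf}\big((\xi(x)-\tr)/\sqrt{2\eps\tc}\big)$ with the prefactor $2\sin(\pi\gamma/2)\E^{-\tc/\eps}\E^{\I\xi(x)/(2\eps)-\I\tr/\eps}$, i.e.\ precisely $c_{2,L}$. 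The non-stationary companion exponential, as well as the cubic-phase and $\sigma_\eps$-correction terms inside the cosine in (\ref{gdef}), contribute only a further $\Or(\sqrt\eps)$ relative factor and, being either rapidly oscillating or carrying an extra Gaussian damping, can be absorbed into an $\ec_1+\ec_2$ remainder of the required order by the same Corollary~\ref{C02} bookkeeping, possibly combined with Lemma~\ref{gaussfact} to handle the Gaussian tails.

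I expect the main obstacle to be the careful $x$-uniform (indeed $E$-uniform) tracking of the non-leading pieces of $g$ when $\xi(x)$ is large: one must verify that after the cancellation of the stationary phase, the leftover oscillatory integrals still fall into $\ec_2$ (regular oscillation $\E^{\I x p(\infty,E)/\eps}$ at $+\infty$) rather than producing uncontrolled growth, and that the cubic-phase term $-(y-\tr)^3/(3\eps\tc^2)$, which is not small on the scale of the Gaussian width $\sqrt{\eps\tc}$, is nonetheless harmless after integration --- this is where one uses that on that scale $(y-\tr)^3/(\eps\tc^2) = \Or(\sqrt\eps)$, so expanding the exponential costs only $\Or(\sqrt\eps)$ and the resulting polynomial-times-Gaussian terms are handled by Lemma~\ref{gaussfact}. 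A secondary technical point is choosing $|D|$ large enough that the representation (\ref{gdef}) and the bound (\ref{rhoc2}) are simultaneously valid on $(-\infty,\xi(x)]$ and that the $1_{\{x<D\}}$ part of $c_2$ is already $\ec_1(\infty,\nu)$, as established in the proof of Proposition~\ref{P01}.
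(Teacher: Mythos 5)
Your proposal is correct and follows essentially the same route as the paper's proof: pass to $c_{2,D}$ via Proposition~\ref{P01}, substitute the explicit $g$ from Proposition~\ref{remainder x} and dispose of the remainder with Corollary~\ref{C02}, split the cosine, identify the phase-cancelling piece as producing the error function, and absorb the cubic-phase/$\sigma_\eps$ corrections via $|\e^{\I u}-1|\le|u|$ together with Lemma~\ref{gaussfact}. The one step you leave at the assertion level is the treatment of the rapidly oscillating companion term $\e^{-\I w}$: the paper does not just invoke oscillation heuristically but performs two successive integrations by parts against $\e^{-2\I y/\eps}$, controlling the boundary terms and the bulk via Lemma~\ref{gaussfact} before applying Corollary~\ref{C02}, which is what actually produces the $\ec_1(1/2,\nu)+\ec_2(1/2)$ bound you claim.
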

\begin{proof}
By Proposition \ref{P01} it is enough to consider $c_{2,D}-c_{2,L}$. Using Proposition \ref{remainder x} and, as before, (\ref{unibound}) and Corollary \ref{C02}, we arrive at 
\be \label{P4.0}
c_{2,D} =  1_{\{x>D\}} \frac{\I}{\eps} \E^{ \frac{\I}{2 \eps} \xi(x)} \int_{-\infty}^{\xi(x)} \E^{-\frac{\I}{\eps} y} \overline{g}(\eps,y,E) \, \D y + \e^{-\frac{\tc}{\eps}} (r_{1}+r_{2})
\ee
and $r_{1}$, $r_{2}$ as in the statement. 
Now recall that 
\be \label{P4.1}
\overline{g}(\eps,y,E) = - 2\I\,\sqrt{{\textstyle\frac{2\eps}{\pi
\tc}}}\,\sin\left({\textstyle\frac{\pi \gamma}{2}}\right)\,
\E^{-\frac{\tc}{\eps}}\,\E^{-\frac{(y-\tr)^2}{2\eps \tc}} \,\cos\left( w(\eps,y,E) \right),
\ee
with
\[
w(\eps,y,E) = \frac{y-\tr}{\eps}
-\frac{(y-\tr)^3}{3\eps\tc^2} + \frac{\sigma_\eps (y-\tr)}{\tc}.
\]
We write 
\be \label{P4.2}
\cos\left( w(\eps,y,E) \right) = 
\frac{1}{2} \left( \e^{\I w(\eps,y,E) } + 
\e^{-\I w(\eps,y,E) } \right).
\ee
For the second term above, the exponential $\e^{\frac{- \I}{\eps} y}$ present in $ \e^{-\I w(\eps,y,E) }$ combines with the one present in (\ref{P4.0}); 
we use integration by parts in (\ref{P4.0}), with the function $\e^{\frac{-2 \I}{\eps} y}$ being integrated. The result is 
\begin{eqnarray} \label{P4.3}
\lefteqn{\int_{-\infty}^{\xi} \E^{-\frac{\I}{\eps} y} \e^{-\I w(\eps,y,E) }\E^{-\frac{(y-\tr)^2}{2\eps \tc}} \, \D y = \frac{\eps}{-2\I} \E^{-\frac{\I}{\eps} \xi} \e^{-\I w(\eps,\xi,E) }\E^{-\frac{(\xi-\tr)^2}{2\eps \tc}} -} \\
&& - 
\frac{1}{-2 \I} \int_{-\infty}^{\xi} \E^{-\frac{\I}{\eps} y} \e^{-\I w(\eps,y,E) }\E^{-\frac{(y-\tr)^2}{2\eps \tc}} \left(\frac{2(y-\tr)}{\tc} - \frac{\I(y-\tr)^{2}}{\tc^{2}} + \I \eps \frac{\sigma_{\eps}}{\tc} \right) \, \D y.
\nonumber
\end{eqnarray}
The boundary term is clearly in $\ec_{1}(1,\nu)$, and the factors left out in (\ref{P4.3}) are bounded by $C \eps^{-1/2} \e^{-\frac{\tc}{\eps}}$, so together this gives a contribution in 
$\e^{-\frac{\tc}{\eps}} \ec_{1}(1/2,\nu)$. The contributions of the integrand in the bulk term stemming from the last two terms in the bracket are in ${\cal E}_1(1,\nu)$ by Lemma \ref{gaussfact}. For the first term, we have to integrate by parts once again; the resulting boundary term is again clearly in $\ec_1(1,\nu)$, while the bulk term is very similar to the one in (\ref{P4.3}) with the crucial difference that the bracket there is now multiplied by $2 (y-\tr) /\tc$ and an additional term $2 \eps/\tc$ is present. Applying Lemma \ref{gaussfact} then shows that this integrand is in ${\cal E}_1(1,\nu)$ again. Now we can apply Corollary \ref{C02} and find that 
the term involving $ \e^{\I w(\eps,y,E) }$ is bounded by $\e^{-\frac{\tc}{\eps}}(r_{1}+r_{2})$ with 
$r_{1} \in \ec_{1}(1/2,\nu)$ and $r_{2} \in {\cal E}_2(1/2)$.\\
We now turn to the term involving $ \e^{+\I w(\eps,y,E) }$. Here the oscillations $\e^{\frac{\pm \I}{\eps} y}$ cancel in the leading term of (\ref{P4.0}), which is therefore given by 
\[
c_{2,D}^{-} =  1_{\{x>D\}} 
\sqrt{{\textstyle\frac{2}{\eps \pi \tc}}}
\sin\left({\textstyle\frac{\pi \gamma}{2}}\right)\,
\E^{-\frac{\tc}{\eps}} \e^{\frac{\I}{2 \eps} \xi(x) - \frac{\I}{\eps} \tr}
\int_{-\infty}^{\xi(x)} \E^{-\frac{(y-\tr)^{2}}{2 \eps \tc}} \e^{-\I \frac{(y-\tr)^{3}}{3 \eps \tc^{2}} + \frac{\I \sigma_{\eps}(y-\tr)}{\tc}} \, \D y.
\]
We now use explicit integration 
\[
\int_{-\infty}^{\xi(x)} \e^{-\frac{(y - \tr)^2}{2 \eps \tc}} \, \D y = \sqrt{2 \pi \eps \tc} 
\mr{erf} \left( \frac{y - \tr}{\sqrt{2 \eps \tc}} \right)
\]
and the fact that $|\e^{\I x} - 1| \leq |x|$ for all real $x$ to find 
\be \label{c2-D}
c_{2,D}^{-} = c_{2,L} +  1_{\{x>D\}} \frac{1}{\sqrt{\eps}} \e^{-\frac{\tc}{\eps}} 
\e^{\I \frac{\xi(x)}{\eps}} \int_{-\infty}^{\xi(x)} \e^{-\frac{(y-\tr)^{2}}{2 \eps \tc}} f(y) \, \D y,
\ee
with 
\[
|f(y)| \leq C \left( \frac{|y-\tr|^{3}}{\eps \tc} + |y - \tr| \right).
\]
Using integration by parts, we find 
\[
\int_{-\infty}^{R} \e^{-\frac{y^2}{2 \eps \tc}} \frac{|y|^3}{\eps \tc} \, \D y = 
2\epsilon\xi_c(2-e^{-R^2/(2\epsilon\xi_c)})-e^{-R^2/(2\epsilon 
\xi_c)}R^2
\]
Now an easy calculation using Lemma \ref{gaussfact} shows that Corollary \ref{C02} applies to the second term of (\ref{c2-D}), showing that it is bounded by $\e^{-\frac{\tc}{\eps}}(r_{1}+r_{2})$ with 
$r_{1} \in \ec_{1}(1/2,\nu)$ and $r_{2} \in {\cal E}_2(1/2)$.
\end{proof}

After all these preparations, we now work towards the proof of Theorem \ref{chi eff thm}.
Recall that $\chi$ was defined in (\ref{superp reflect}), and in a similar way define
\be \label{chiL}
\chi_L(\eps,x,t):=\int_{\Delta}Q(E,\varepsilon)e^{-itE/\varepsilon} 
c_{2,L}(\eps,x,E)\ \varphi_{\eps,2}^{(1)}(\eps,x,E) dE.
\ee

\begin{proposition}
Assume that $V$ fulfills (V1)--(V4)$_m$, and that $Q$ fulfils (C1)--(C3). 
Then for all $\alpha > 0$ there exist $\eps_0>0$ and $C>0$ such that for all $\eps < \eps_0$ we have 
\[
\norm[L^2]{\chi(\eps,. \, , t) - \chi_L(\eps,. \, , t) } \leq  C  \E^{-\frac{M(E^{\ast})}{\eps}} \eps^{1 - \frac{2+\nu}{2+m} - \alpha}.
\]
In particular, for $m>6$ we have $\norm[L^2]{\chi(\eps,. \, , t) - \chi_L(\eps,. \, , t) } \leq  C  \E^{-\frac{M(E^{\ast})}{\eps}} \eps^{3/4 + \delta}$ for some $\delta > 0$. 
\end{proposition}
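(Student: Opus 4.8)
The statement to prove compares $\chi$ and $\chi_L$, which differ only through the replacement of $c_2(\eps,x,E)$ by $c_{2,L}(\eps,x,E)$ inside the same $E$-integral against $Q(E,\eps)\,\E^{-\I tE/\eps}\,\varphi_{\eps,2}^{(1)}(\eps,x,E)$. By Proposition~\ref{P02} we have, for $D$ sufficiently negative, $c_2 - c_{2,L} = \E^{-\tc(E)/\eps}(r_1 + r_2)$ with $r_1 \in \ec_1\big(\tfrac12 - \tfrac{2+\nu}{2+m} - \alpha,\nu\big)$ and $r_2 \in \ec_2\big(\tfrac12 - \alpha\big)$. So the plan is: write $\chi - \chi_L = \int_\Delta Q(E,\eps)\,\E^{-\I tE/\eps}\,\E^{-\tc(E)/\eps}(r_1+r_2)\,\varphi_{\eps,2}^{(1)}(\eps,x,E)\,\D E$, absorb the $\E^{-\tc(E)/\eps}$ into $Q$, and apply Lemma~\ref{L00} separately to the $r_1$-part and the $r_2$-part.

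\medskip

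First I would treat the prefactor. By (C1), $|Q(E,\eps)|\,\E^{-\tc(E)/\eps} = |P(E,\eps)|\,\E^{-(G(E)+\tc(E))/\eps} = |P(E,\eps)|\,\E^{-M(E)/\eps}$, and since $M$ has a unique nondegenerate minimum at $E^\ast$ with $M''(E^\ast)>0$, a Laplace-type bound gives $\||P(\cdot,\eps)|\E^{-M(\cdot)/\eps}\|_{L^1(\D E)} \leq C\sqrt\eps\,\E^{-M(E^\ast)/\eps}$ and likewise $\|\cdots\|_{L^2(\D E)} \leq C\eps^{1/4}\,\E^{-M(E^\ast)/\eps}$, uniformly in $\eps$ small, using (C3) to bound $P$ and its first derivative. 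The factor $\varphi_{\eps,2}^{(1)}(\eps,x,E)$ is, by \eqref{35a}, equal to $p(x,E)^{-1/2} + \Or(\eps\langle x\rangle^{-(3/2+\nu)})$; since $p$ is bounded above and below uniformly, multiplying $r_1$ or $r_2$ by $\varphi_{\eps,2}^{(1)}$ keeps $r_1$ in the same $\ec_1$ class (the correction is even better, lying in $\ec_1(\cdot+1,\nu)$) and keeps $r_2$ in its $\ec_2$ class up to an $\ec_1$ remainder of the same or better order. So it suffices to bound $\int_\Delta f(E) r_j(\eps,x,E)\,\D E$ where $f$ now incorporates $|P(E,\eps)|$, the oscillatory factors (which have modulus one), and the constant from $\varphi_{\eps,2}^{(1)}$, so that $\|f\|_{L^1(\D E)}$ and $\|f\|_{L^2(\D E)}$ both carry the factor $\E^{-M(E^\ast)/\eps}$ as above.

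\medskip

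Now apply Lemma~\ref{L00}. For the $r_1$-contribution, $r_1 \in \ec_1(\beta_1,\nu)$ with $\beta_1 = \tfrac12 - \tfrac{2+\nu}{2+m} - \alpha$, so Lemma~\ref{L00}(i) gives an $L^2(\D x)$ bound of $C\eps^{\beta_1}\|f\|_{L^1(\D E)} \leq C\eps^{\beta_1}\sqrt\eps\,\E^{-M(E^\ast)/\eps} = C\,\E^{-M(E^\ast)/\eps}\,\eps^{1 - \frac{2+\nu}{2+m} - \alpha}$, where I have folded the $\sqrt\eps$ from the $L^1$ estimate into the exponent. For the $r_2$-contribution, $r_2 \in \ec_2(\beta_2)$ with $\beta_2 = \tfrac12 - \alpha$, so Lemma~\ref{L00}(ii) gives $C\eps^{\beta_2 + 1/2}\|f\|_{L^2(\D E)} \leq C\eps^{1-\alpha}\cdot\eps^{1/4}\,\E^{-M(E^\ast)/\eps}$, which is of the form $C\,\E^{-M(E^\ast)/\eps}\,\eps^{5/4 - \alpha}$ and hence strictly better than the $r_1$-term once $m$ is finite. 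Adding the two contributions and taking $\nu, \alpha$ small yields $\|\chi - \chi_L\|_{L^2} \leq C\,\E^{-M(E^\ast)/\eps}\,\eps^{1 - \frac{2+\nu}{2+m} - \alpha}$, as claimed. For the ``in particular'' clause, note that $m > 6$ forces $\tfrac{2+\nu}{2+m} < \tfrac14$ for $\nu$ small enough, so the exponent exceeds $\tfrac34$; choosing $\nu$ and $\alpha$ small gives $\eps^{3/4+\delta}$ for some $\delta>0$.

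\medskip

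The only genuinely delicate point is the Laplace bound on $\||P|\,\E^{-M/\eps}\|_{L^p(\D E)}$: one must check it holds uniformly in $\eps$ with only $C^3$ (indeed $C^2$) regularity of $M$ and with $P$ merely $C^1$ and $\eps$-dependent but uniformly bounded with its first derivative by (C3). This is standard — split $\Delta$ into a $\sqrt\eps$-neighbourhood of $E^\ast$, where the quadratic behaviour of $M$ gives a Gaussian of width $\sqrt\eps$, and its complement, where $M(E) - M(E^\ast)$ is bounded below by a positive constant times $(E-E^\ast)^2$ so the tail is super-polynomially small — and the $C^2$ bound on $M$ near $E^\ast$ plus the $L^\infty$ bound on $P$ suffice. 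Everything else is a direct substitution into the already-proven Proposition~\ref{P02} and Lemma~\ref{L00}, so I expect no further obstacle.
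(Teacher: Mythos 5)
Your proof is correct and follows essentially the same route as the paper: apply Proposition~\ref{P02} to write $c_2-c_{2,L}=\E^{-\tc(E)/\eps}(r_1+r_2)$, then feed $r_1\in\ec_1$ and $r_2\in\ec_2$ into Lemma~\ref{L00}, together with the Laplace bounds $\norm[L^1]{Q_0}\lesssim\sqrt\eps\,\E^{-M(E^\ast)/\eps}$ and $\norm[L^2]{Q_0}\lesssim\eps^{1/4}\E^{-M(E^\ast)/\eps}$ obtained from (C1) and (C3). You are in fact slightly more careful than the paper about how the $x$-dependence of $\varphi_{\eps,2}^{(1)}$ interacts with the $\ec_2$ class (the paper simply factors out $\sup_{x,E}|\varphi_{\eps,2}^{(1)}|$), but the exponent bookkeeping agrees with the paper's in every respect.
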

\begin{proof}
By Proposition \ref{P02} and Lemma \ref{L00}, we have 
\[
\norm[L^2]{\chi(\eps,. \, , t) - \chi_L(\eps,. \, , t) } \leq C \eps^{\frac{1}{2} - \frac{2 + \nu}{2 + m} - \alpha} \sup_{x,E} |\varphi_{\eps,2}^{(1)}| ( \norm[L^1(\D E)]{Q_0} + \eps^{1/2} \norm[L^2(\D E)]{Q_0} ),
\]
where $Q_0(E) = Q(E) \e^{-\frac{\tc(E)}{\eps}}$.
(\ref{35a}) ensures that $\sup_{x,E} |\varphi_{\eps,2}^{(1)} (\eps,x,E)|$ is bounded uniformly for all $\eps < \eps_0$. By virtue of (C1), for arbitrarily small $\delta > 0$ we have 
$M(E)/\eps \geq M(E^\ast)/\eps + \eps^{-2 \delta}$ when $|E-E^\ast| > \eps^{1/2 - \delta}$.
Thus writing $\Delta_\eps = \{ E \in \Delta: |E-E^\ast| < \eps^{1/2 - \delta}\}$ we find that for each $m \in \N$ there exists $C>0$ with 
\be \label{q sharp}
|Q_0(E) 1_{\{ E \notin \Delta_\eps \}}| \leq C \e^{-\frac{M(E^\ast)}{\eps}} \eps^m, 
\ee
and, by (C1) again 
$$
|Q_0(E) 1_{\{ E \in \Delta_\eps \}}|\leq C\E^{-\frac{M(E^\ast)}{\eps}}
\E^{-\frac{M''(E^\ast)(E-E^\ast)^2}{2\eps}(1+(\eps^{1/2-\delta}))}.
$$
Consequently, by scaling
\bea \label{L1 norm}
\norm[L^1]{Q_0} &\leq& C \e^{-M(E^\ast)/\eps} \int_{\mathbb R}\E^{-\frac{M''(E^\ast)(E-E^\ast)^2}{\eps})}\, \D E + C |\Delta| \e^{-\frac{M(E^\ast)}{\eps}} \eps^m 
\nonumber\\
&\leq& C  \e^{-M(E^\ast)/\eps}\eps^{1/2}     .
\eea
Similarly, $\norm[L^2]{Q_0} \leq C \e^{-M(E^\ast)/\eps} \eps^{1/4}$. This shows the claim.
\end{proof}

Formulas (\ref{q sharp}) and (\ref{L1 norm}) are also the key to the following

\begin{proposition} \label{concentrated}
Assume that $V$ fulfills (V1)--(V4)$_m$, and that $Q$ fulfils (C1)--(C3). 
Put 
\[
\Delta_{\eps} = \{ E \in \Delta: |E-E^{\ast}|<\eps^{1/2-\delta} \},
\]
and 
\be \label{chiL0}
\chi_{L,0}(\eps,x,t):=\int_{\Delta_\eps}Q(E,\varepsilon)e^{-itE/\varepsilon} 
c_{2,L}(\eps,x,E)\ \varphi_{\eps,2}^{(1)}(\eps,x,E) dE.
\ee
Then for all $m > 0$ there exist $\eps_0>0$ and $C>0$ such that for all $\eps < \eps_0$ we have 
\[
\norm[L^2]{\chi_{L0}(\eps,. \, , t) - \chi_L(\eps,. \, , t) } \leq  C  \E^{-\frac{M(E^{\ast})}{\eps}} \eps^{m}.
\]
\end{proposition}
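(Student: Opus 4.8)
The plan is to note that $\chi_L-\chi_{L,0}$ is exactly the integral (\ref{chiL}) with the energy range cut down to $\Delta\setminus\Delta_\eps$, and that on this complement the effective weight $Q_0(E):=Q(E,\eps)\,\E^{-\tc(E)/\eps}$ is smaller than $\E^{-M(E^\ast)/\eps}$ by \emph{every} power of $\eps$. Indeed, by (C1) the function $M=G+\tc$ is $C^3$ with a non-degenerate minimum at $E^\ast$, so for any $\delta>0$ we have $M(E)/\eps\ge M(E^\ast)/\eps+\eps^{-2\delta}$ whenever $|E-E^\ast|>\eps^{1/2-\delta}$; exactly as in (\ref{q sharp}) this yields, for every $m>0$,
\[
\norm[L^1(\D E)]{Q_0\,1_{\{E\notin\Delta_\eps\}}}+\norm[L^2(\D E)]{Q_0\,1_{\{E\notin\Delta_\eps\}}}\ \le\ C_m\,\E^{-M(E^\ast)/\eps}\,\eps^{m}.
\]
It therefore only remains to control $x$-integrals of $c_{2,L}(\eps,x,E)\,\varphi^{(1)}_{\eps,2}(\eps,x,E)$ against this tail of $Q_0$ well enough that the superpolynomial gain survives.

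The one real point is that $c_{2,L}(\eps,\cdot,E)\,\varphi^{(1)}_{\eps,2}(\eps,\cdot,E)$ is not square-integrable in $x$ for fixed $E$ — the error function tends to $1$ and the phase $\E^{\I\xi(x,E)/(2\eps)}$ does not decay as $x\to+\infty$ — so a naive Minkowski estimate in $E$ breaks down on the tail and one has to use oscillation in $E$ through Lemma \ref{L00}, as in the proof of the preceding proposition. Accordingly I would split the $x$-axis at a fixed large $R>0$ with $\xi(R,E)>\tr(E)$ for all $E\in\Delta$. On the bounded region $\{D<x\le R\}$ the pointwise bound $|c_{2,L}\,\varphi^{(1)}_{\eps,2}|\le C\,\E^{-\tc(E)/\eps}$ (using $|\mr{erf}|\le1$ and the uniform boundedness of $\varphi^{(1)}_{\eps,2}$ from (\ref{35a})) together with the square-integrability of $1_{\{D<x\le R\}}$ and Minkowski in $E$ gives an $L^2(\D x)$-contribution $\le C\,\norm[L^1(\D E)]{Q_0\,1_{\{E\notin\Delta_\eps\}}}$, which is superpolynomially small by the display above.

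On the tail $\{x>R\}$ I would write $\mr{erf}(a(x,E)/\sqrt\eps)=1-\bigl(1-\mr{erf}(a(x,E)/\sqrt\eps)\bigr)$, where $a(x,E)$ is as in (\ref{a(x,E)}). For $x>R$ one has $a(x,E)\ge a_0>0$ uniformly in $E\in\Delta$, and since $\xi(x,E)$ grows linearly in $x$, a Gaussian-tail computation in the spirit of Lemma \ref{gaussfact} shows $1_{\{x>R\}}\bigl(1-\mr{erf}(a(x,E)/\sqrt\eps)\bigr)\in\ec_1(p,\nu)$ for all $p,\nu>0$; multiplied by the bounded $\varphi^{(1)}_{\eps,2}$ and the $E$-only amplitude $2\sin(\pi\gamma/2)\,\E^{-\I\tr(E)/\eps}$, Lemma \ref{L00} applied against $f=Q_0\,1_{\{E\notin\Delta_\eps\}}$ makes this piece superpolynomially small. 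For the ``$1$'' piece, Corollary \ref{C01} gives $1_{\{x>R\}}\E^{\I\xi(x,E)/(2\eps)}=r_1+r_2$ with $r_1\in\ec_1\!\bigl(-\tfrac{2+\nu}{2+m},\nu\bigr)$ and $r_2\in\ec_2(0)$; absorbing $\varphi^{(1)}_{\eps,2}$ via (\ref{35a}) and (\ref{scatpot}) — i.e.\ replacing it by $1/\sqrt{p(\infty,E)}$ plus an $\ec_1$-remainder, so that $\ec_1$-terms stay in $\ec_1$ and $\ec_2$-terms stay in $\ec_2$ up to an $\ec_1$-error — Lemma \ref{L00} bounds this contribution in $L^2(\D x)$ by $C\eps^{\beta}\bigl(\norm[L^1(\D E)]{Q_0\,1_{\{E\notin\Delta_\eps\}}}+\eps^{1/2}\norm[L^2(\D E)]{Q_0\,1_{\{E\notin\Delta_\eps\}}}\bigr)$ with $\beta=-\tfrac{2+\nu}{2+m}>-1$ fixed. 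Since $\beta$ is bounded below independently of the target exponent $m$, the superpolynomial bound on the two norms of $Q_0\,1_{\{E\notin\Delta_\eps\}}$ absorbs all the (bounded) negative powers of $\eps$, and adding the three contributions gives $\norm[L^2(\D x)]{\chi_{L,0}(\eps,\cdot,t)-\chi_L(\eps,\cdot,t)}\le C\,\E^{-M(E^\ast)/\eps}\,\eps^{m}$ for every $m>0$, uniformly in $t$ (the factor $\E^{-\I tE/\eps}$ enters only through its modulus $1$).

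The main ``obstacle'' is thus organisational rather than analytic: all the substance already sits in (\ref{q sharp}) and Corollary \ref{C01}, and one merely has to verify that on the tail the error function may be replaced by $1$ up to an $\ec_1$-error of arbitrarily high order, uniformly in $E$, and that multiplying by $\varphi^{(1)}_{\eps,2}$ does not spoil the $\ec_1/\ec_2$ structure — both immediate from (V1)--(V4)$_m$ and the estimates collected in Section \ref{general theorems}.
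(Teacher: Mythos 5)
Your proof is correct and follows essentially the same route as the paper: both reduce the problem to the superpolynomial smallness of $Q_0 1_{\{E\notin\Delta_\eps\}}$ via (\ref{q sharp}), decompose $c_{2,L}\varphi_{\eps,2}^{(1)}$ into $\ec_1+\ec_2$ pieces, and conclude with Lemma~\ref{L00}. The paper compresses the decomposition step into the single assertion $c_{2,L}\,\varphi_{\eps,2}^{(1)}=\e^{-\tc/\eps}(r_1+r_2)$ with $r_1\in\ec_1(0,1)$, $r_2\in\ec_2(0)$ (using the erf-tail estimate (\ref{erf estimate}) and (\ref{35a})), whereas you supply the explicit $x$-axis splitting and the replacement of $\mr{erf}$, $\e^{\I\xi/(2\eps)}$ and $1/\sqrt{p(x,E)}$ by their limits — the same observations, just written out in more detail.
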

\begin{proof}
Note that by (\ref{35a}) we have $\varphi_{n,2}^{(1)}(\eps,x,E) = \frac{1}{\sqrt{p(x,E)}} + r_{1}$ with $r_{1} \in \ec_{1}(1,1+\nu)$; moreover, we have 
\be \label{erf estimate}
1 - \mr{erf}(w)  = \frac{1}{\sqrt{\pi}} \int_w^\infty \e^{-u^2} \, \D u \leq \frac{1}{\sqrt{\pi}}
\int_w^\infty \frac{u}{w} \e^{-u^2} \, \D u = \frac{2}{\sqrt{\pi} w} \e^{-w^2}
\ee
for $w>0$ and $\mr{erf}(w)\leq 2e^{-w^2}/(\pi |w|)$ for $w<0$.
This shows 
$c_{2,L}(\eps,x,E)\ \varphi_{\eps,2}^{(1)}(\eps,x,E) = \e^{-\frac{\tc}{\eps}}(r_1 + r_2)$
with $r_1 \in \ec_1(0,1)$ and $r_2 \in \ec_2(0)$. Then (\ref{q sharp}) and Lemma \ref{L00} shows that 
\[
\norm[L^1] {\int_{\Delta \setminus \Delta_\eps}Q(E,\varepsilon)e^{-itE/\varepsilon} 
c_{2,L}(\eps,x,E)\ \varphi_{\eps,2}^{(1)}(\eps,x,E) dE} \leq C \e^{-\frac{M(E^\ast)}{\eps}} \eps^m
\]
for any $m$, the same holds for the corresponding $L^2$ norm, and the claim follows.
\end{proof}

The final step in the proof of Theorem \ref{chi eff thm} is an effective description of the error function in $c_{2,L}$. Define $x_{\mr r}(E)$ and $a(x,E)$ as in (\ref{a(x,E)}) and note that $E \mapsto x_{\mr r}(E)$ is $C^1$ by (V3). Put
\[
j(x,E) = \left\{ \begin{array}{ll}
1 & \mbox{if } x > x_{\mr r}(E), \\
0 & \mbox{if } x \leq x_{\mr r}(E).
\end{array} \right.
\]
\begin{proposition} \label{erf prop}
For any $m>0$, $k>0$ and $\delta > 0$ we have 
\[
\sup_{E \in \Delta_\eps} \left| \mr{erf}(a(x,E)/\sqrt{\eps}) - j(x,E^\ast)\right| 1_{\{|x-x_{\mr r}(E^\ast)| > \eps^{1/2-2 \delta}\}} \in \ec_1(m,k).
\]
\end{proposition}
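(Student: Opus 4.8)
The plan is to reduce the assertion to the elementary Gaussian tail bound (\ref{erf estimate}) on the error function and then to exhibit two separate decay mechanisms: smallness in $\eps$, supplied by the restriction $|x - x_{\mr r}(E^\ast)| > \eps^{1/2-2\delta}$, and polynomial decay in $x$, supplied by the strict positivity of $p$.

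First I would record two consequences of the standing hypotheses. Since $p(\cdot,E) \geq c > 0$ uniformly in $x$ and in $E \in \Delta$ by (V1), and since $\tc$ is bounded above and below by positive constants on the compact set $\Delta$ by (V2)--(V3), definition (\ref{a(x,E)}) yields a lower bound $|a(x,E)| \geq \kappa_0\,|x - x_{\mr r}(E)|$ with $\kappa_0 > 0$ independent of $E$, and $a(x,E)>0$ exactly when $x > x_{\mr r}(E)$, so that the sign of $a(x,E)$ records $j(x,E)$. Second, $E \mapsto x_{\mr r}(E)$ is $C^1$ by (V3), hence $|x_{\mr r}(E) - x_{\mr r}(E^\ast)| \leq C|E - E^\ast| \leq C\eps^{1/2-\delta}$ for $E \in \Delta_\eps$. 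The point of the exponent $1/2-2\delta$ (rather than $1/2-\delta$) in the indicator is precisely that $\eps^{1/2-\delta} = \eps^\delta\,\eps^{1/2-2\delta} \ll \eps^{1/2-2\delta}$; thus on the set $\{|x - x_{\mr r}(E^\ast)| > \eps^{1/2-2\delta}\}$ and for $\eps$ small, $x$ lies on the same side of $x_{\mr r}(E)$ as of $x_{\mr r}(E^\ast)$ --- so $j(x,E)=j(x,E^\ast)$ --- and moreover $|x - x_{\mr r}(E)| \geq \frac12|x - x_{\mr r}(E^\ast)| \geq \frac12\eps^{1/2-2\delta}$, all uniformly in $E \in \Delta_\eps$.

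Given these facts, on the relevant set $j(x,E^\ast)$ equals $1$ when $a(x,E)>0$ and $0$ when $a(x,E)<0$, so applying (\ref{erf estimate}) with $w = a(x,E)/\sqrt\eps$ in either sign gives, uniformly in $E \in \Delta_\eps$,
\[
\bigl| \mr{erf}(a(x,E)/\sqrt\eps) - j(x,E^\ast) \bigr| \leq C\,\E^{-a(x,E)^2/\eps},
\]
the factor $1/|w|$ being harmless since $|w| \geq \kappa_0\eps^{-2\delta}$ is bounded below. It then remains to estimate this Gaussian. By the above, $|a(x,E)| \geq \frac{\kappa_0}{2}\eps^{1/2-2\delta}$, and since $|x - x_{\mr r}(E^\ast)| \geq |x| - C$ one also has $|a(x,E)| \geq \kappa_1\langle x\rangle$ once $|x|$ is large, independently of $\eps$ and with $\kappa_1 > 0$ independent of $E$. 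Writing $a^2/\eps = \frac12 a^2/\eps + \frac12 a^2/\eps$ and using $\eps<1$: the first half is $\geq \frac{\kappa_0^2}{8}\eps^{-4\delta}$, so $\E^{-a^2/(2\eps)}$ beats any power $\eps^m$; the second half is $\geq \frac{\kappa_1^2}{2}\langle x\rangle^2$ for $|x|$ large, so $\E^{-a^2/(2\eps)}$ beats any power $\langle x\rangle^{-1/2-k}$ (for $|x|$ bounded, the bounded factor $\langle x\rangle^{1/2+k}$ is absorbed into the constant). Multiplying the two bounds yields $\E^{-a^2/\eps} \leq C_{m,k}\,\eps^m\langle x\rangle^{-1/2-k}$, which is precisely membership in $\ec_1(m,k)$ for arbitrary $m,k>0$.

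No step is a genuine obstacle here; the care required is to keep every constant uniform in $E \in \Delta_\eps$ --- otherwise the final bound would not qualify as an $\ec_1$ element --- and to split the exponent so that the moderate-$x$ regime, where the growth of $a$ in $x$ cannot yet be exploited, is still covered by the $\eps^{1/2-2\delta}$ lower bound alone.
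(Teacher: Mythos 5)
Your proof is correct and follows essentially the same route as the paper's: you use (V3) to control $|x_{\mr r}(E) - x_{\mr r}(E^\ast)|$ on $\Delta_\eps$, observe that the gap between the exponents $1/2-\delta$ and $1/2-2\delta$ forces $|x-x_{\mr r}(E)|$ to be comparable to $|x-x_{\mr r}(E^\ast)|$ and bounded below by $\eps^{1/2-2\delta}$, invoke (\ref{erf estimate}), and then reduce to bounding a Gaussian. The only real divergence is in the last step: the paper simply cites Lemma \ref{gaussfact}, while you estimate the Gaussian directly by splitting $a^2/\eps$ into two halves, one producing $\eps^m$ via the lower bound $|a| \gtrsim \eps^{1/2-2\delta}$ and the other producing $\langle x\rangle^{-1/2-k}$ via $|a|\gtrsim \langle x\rangle$. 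That explicit split is a perfectly serviceable replacement for the lemma and, if anything, makes the uniformity in $E$ and the role of the $1/|w|$ prefactor in (\ref{erf estimate}) clearer than in the paper's terse wording.
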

\begin{proof}
Let us consider only the case $x > x_{\mr r}(E^\ast)$, the other case goes analogously. 
We first show the estimate $x-x_r(E)=(x-x_r(E^*))(1+O(\eps^\delta))$.
By (V3) we have $|x_{\mr r}(E) - x_{\mr r}(E^\ast)| \leq C|E-E^\ast|$ for some $C>0$.
Thus,
\[
 x-x_r(E)=(x - x_{\mr r}(E^\ast) - C|E^\ast-E|) \geq  |x - x_{\mr r}(E^\ast)| -  C \eps^{1/2 - \delta}
\]
on $\Delta_\eps$, for another constant $C>0$. 
Since $|x- x_{\mr r}(E^\ast)| > \eps^{1/2 - 2 \delta}$ implies 
$\eps^{1/2 - \delta} < |x- x_{\mr r}(E^\ast)| \eps^\delta$, we find 
\[
x-x_r(E)=  (x - x_{\mr r}(E^\ast))(1 - C \eps^{\delta}).
\]
By 
(\ref{erf estimate}) we find 
\be \label{eq001}
j(x,E^\ast) - \mr{erf}(a(x,E) / \sqrt{\eps}) = 1 -  \mr{erf}(a(x,E) / \sqrt{\eps}) 
\leq \frac{2 \sqrt{\eps}}{\sqrt{\pi} a(x,E)} \e^{-\frac{a^2(x,E)}{\eps}}.
\ee
Now since $p(y,E) > c$ for all $y$ and all $E$, and since $\tc(E)$ is bounded away from $0$, we get 
\[
a(x,E)  = \frac{1}{\sqrt{2 \tc(E)}} \int_{x^\ast_{\mr r}}^x p(y,E) \, \D y \geq c_0 |x-x_{\mr r}(E)|.
\]
From the estimate above we find
\[
a(x,E) \geq c_0 (x - x_{\mr r}(E^\ast))(1 - C \eps^{\delta})
\]
on $\Delta_\eps \times \{ x: |x- x_{\mr r}(E^\ast)| > \eps^{1/2 - 2 \delta} \}$. 
Plugging this into (\ref{eq001}) and using Lemma \ref{gaussfact}, we obtain the result.
\end{proof}

Now Theorem \ref{chi eff thm} follows by using Proposition \ref{erf prop} in the expression for 
$c_{2,L}$ in Proposition \ref{concentrated} in case $|x-x^\ast_{\mr r}| > \eps^{1/2 - \delta}$, 
while keeping $c_{2,L}$ for $|x-x^\ast_{\mr r}| \leq \eps^{1/2 - \delta}$, and then using Lemma \ref{L00} again.

We now turn to the proof of Theorem \ref{reflected wave shape}. Let us recall the abbreviations 
\[ 
M(E) = \tc(E) + G(E), \quad S(x,E,t) =  \tr(E) + J(E) - \frac{1}{2} \xi(x,E) + Et, 
\]
and 
\[
a(x,E)=\frac{\xi(x,E)-\xi_r(E)}{\sqrt{2\xi_c(E)}}=\frac{\int_{x_r(E)}^{x}p(y,E)\,dy}{\sqrt{2\xi_c(E)}}. 
\]
We first take advantage of Proposition \ref{concentrated}, i.e.\ the fact that we know $E-E^{\ast}$ is small in the relevant area of $\Delta$, and expand all quantities of $E$ around $E^{\ast}$. The only exception is $\xi(x,E)$ since $x \mapsto \xi(x,E)$ grows like $p(\infty,E) x$, and so in this case such an expansion would not be uniform in $x$. We have 
\be
\frac{a(x,E)}{\sqrt{\eps}}=\frac{a(x,E^\ast)}{\sqrt{\eps}}+\frac{\partial_E a(x,E^\ast)}{\sqrt{\eps}}(E-E^\ast)
+\Or \left(\frac{(E-E^\ast)^2}{\sqrt\eps}\right), 
\ee
where the error term is not uniform in $x$. By the Taylor formula with integral remainder 
\be\label{tr}
\mr{erf}(z+z_{0})=\mr{erf}(z)+\frac{z_{0}}{\sqrt{\pi}}\int_0^1 e^{-(z+sz_{0})^2}\,ds,
\ee
we can write
\be
 \mr{erf}\left(\frac{a(x,E)}{\sqrt{\eps}}\right)= \mr{erf}\left(\frac{a(x,E^*)}{\sqrt{\eps}}+\frac{\partial_E a(x,E^*)}{\sqrt{\eps}}(E-E^*)\right)+O\left(\frac{(E-E^*)^2}{\sqrt\eps}\right).
\ee
The error term is uniform in $x$ here, because the error functions are. Since we are on $\Delta_{\eps}$, by restricting attention to the set $\{|x-x^\ast_{\mr r}|\leq \eps^{1/2 - \delta}\}$ we actually get
\be
O\left(\frac{(E-E^*)^2}{\sqrt\eps}\right)1_{\{|x-x^\ast_{\mr r}|\leq \eps^{1/2 - \delta}\}}\in \ec_1(\eps^{1/2-2 \delta},\infty).
\ee
Hence, by (\ref{L1 norm}) and Lemma \ref{L00} again,  we find that once multiplied by $Q_0(E)$ and the , the $L^{2}(dx)$-norm of the corresponding $\D E$-integral of this remainder is of order $\Or(\E^{-\frac{M(E^{\ast})}{\eps}} \eps^{1-3\delta})$. 
Similarly, we expand 
\[
S_{0}(E,t) := + \xi_{r}(E) + J(E) + Et
\]
and $M(E)$ to third order. The error stemming from this expansion is of order
$|E-E^\ast|^3/\eps$ and is uniform in $t$ and $x$. Thus, by the same argument,
this remainder eventually yield an error of order $\Or(\E^{-\frac{M(E^{\ast})}{\eps}} \eps^{1-4\delta})$ in $L^{2}(dx)$-norm on the set  $\{|x-x^\ast_{\mr r}|\leq \eps^{1/2 - \delta}\}$.

Finally, we use (\ref{35a}) to see that $\varphi_{n,2}^{(1)}(\eps,x,E) = \frac{1}{\sqrt{p(x,E)}} + r_{1}$ with $r_{1} \in \ec_{1}(1,1+\nu)$ and estimate the error term using the  by now familiar reasoning.  Defining 
\[ P_{0}(x,E) := \frac{ 2 P(0,E) \sin\left( \pi\gamma / 2 \right)}{\sqrt{p(x,E)}}
\]
and expanding around $E^*$ we obtain 
$$
 P_{0}(x,E)= P_{0}(x,E^\ast)+O(E-E^\ast),
$$
with an error term that is uniform in $x$ and of order $\eps^{1/2-\delta}$, so that we have shown 
\begin{proposition} \label{P05}
Define 
\be \label{xxx}
\begin{array}{l}\chi_{L,1}(\eps,x,t) =  e^{-(M(E^{\ast})+iS_0(E^{\ast},x,t))/\eps}\int_{\Delta_{\eps}} 
P_0(x,E^{\ast})
\E^{-\frac{\I (E-E^\ast)}{\eps} S_{0}'(E^\ast,t)} \times \nonumber \\[5mm]
\times \E^{-\frac{(E-E^\ast)^2}{2 \eps}(M''(E^\ast)+iS_{0}''(E^\ast,t))} \E^{\frac{\I}{2 \eps} \xi(x,E)}
\mr{erf}\left(\frac{a(x,E^*)}{\sqrt{\eps}}+
\frac{a'(x,E^\ast)}{\sqrt{\eps}}(E-E^\ast)\right) \, \D E,
\end{array}
\ee
where all the primes denote $E$-derivatives. 
Then, for all $0<\delta<1/4$, 
$$\norm[L^{2}(\D x)]{(\chi_{L,1} - \chi_{L,0})1_{\{ |x-x^\ast_{\mr r}|\leq \eps^{1/2 - \delta} \}}} = \Or(\E^{-M(E^{\ast})/\eps} \eps^{1-4\delta}).$$ 
\end{proposition}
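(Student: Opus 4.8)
The plan is to pass from $\chi_{L,0}$ in (\ref{chiL0}) to $\chi_{L,1}$ in (\ref{xxx}) by Taylor-expanding in $E$ about $E^\ast$ every $E$-dependent factor of the integrand \emph{except} $\xi(x,E)$ — which grows like $p(\infty,E)x$ and therefore cannot be expanded uniformly in $x$ — and by estimating each remainder through the sharp concentration of $Q_0(E):=Q(E,\eps)\E^{-\tc(E)/\eps}$ near $E^\ast$, i.e. (\ref{q sharp}) together with $\norm[L^1(\D E)]{Q_0}=\Or(\E^{-M(E^\ast)/\eps}\eps^{1/2})$ from (\ref{L1 norm}), combined with Lemma \ref{L00}. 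First I would insert $Q=\E^{-G/\eps}\E^{-\I J/\eps}P$ and the expression for $c_{2,L}$ from Proposition \ref{P02} into (\ref{chiL0}); using $M=G+\tc$ and $S_0(E,t)=\tr(E)+J(E)+Et$, the integrand on $\{x>D\}$ becomes, up to the constant $2\sin(\pi\gamma/2)$, the product of $P(E,\eps)\,\E^{-M(E)/\eps}\E^{-\I S_0(E,t)/\eps}\E^{\I\xi(x,E)/(2\eps)}\mr{erf}(a(x,E)/\sqrt\eps)$ and $\varphi_{\eps,2}^{(1)}$. By (\ref{35a}), $\varphi_{\eps,2}^{(1)}=1/\sqrt{p(x,E)}+r_1$ with $r_1\in\ec_1(1,1+\nu)$, whose contribution after multiplication by $Q_0$ and integration is $\Or(\E^{-M(E^\ast)/\eps}\eps^{3/2})$ in $L^2(\D x)$ by Lemma \ref{L00}; so one works with the smooth amplitude $P_0(x,E)=2P(E,0)\sin(\pi\gamma/2)/\sqrt{p(x,E)}$ in front.

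Next come the expansions that are uniform in $x$. Since $M'(E^\ast)=0$ by (C1), Taylor to third order gives $M(E)=M(E^\ast)+\tfrac12 M''(E^\ast)(E-E^\ast)^2+\Or(|E-E^\ast|^3)$, and likewise $S_0(E,t)$ to third order with cubic remainder uniform in $x$ and $t$, while $P_0(x,E)=P_0(x,E^\ast)+\Or(|E-E^\ast|)$ uniformly in $x$. Factoring the cubic remainders out of the exponentials via $|\E^z-1|\le|z|\E^{|z|}$ (legitimate on $\Delta_\eps$, where $|E-E^\ast|^3/\eps\le\eps^{1/2-3\delta}$) leaves them multiplied by bounded quantities and by the surviving $\E^{-M''(E^\ast)(E-E^\ast)^2/(2\eps)}$-type factor, which preserves $Q_0$-integrability; once restricted to $\{|x-x^\ast_{\mr r}|\le\eps^{1/2-\delta}\}$, where $\bk{x}$ is bounded, these remainders lie in $\ec_1(\,\cdot\,,\nu)$ for every $\nu>0$, so Lemma \ref{L00} and (\ref{L1 norm}) bound their $\D E$-integral in $L^2(\D x)$ by $\Or(\E^{-M(E^\ast)/\eps}\eps^{1-4\delta})$.

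The delicate point is the error function, whose argument $a(x,E)/\sqrt\eps$ is unbounded in $x$ so that no uniform-in-$x$ $E$-expansion of it exists. The resolution is to expand only to first order, $a(x,E)/\sqrt\eps=a(x,E^\ast)/\sqrt\eps+a'(x,E^\ast)(E-E^\ast)/\sqrt\eps+\Or((E-E^\ast)^2/\sqrt\eps)$, and to feed this into the integral-remainder identity (\ref{tr}): because $\mr{erf}'$ is bounded, the difference $\mr{erf}(a(x,E)/\sqrt\eps)-\mr{erf}\bigl(a(x,E^\ast)/\sqrt\eps+a'(x,E^\ast)(E-E^\ast)/\sqrt\eps\bigr)$ is $\Or(|E-E^\ast|^2/\sqrt\eps)$ \emph{uniformly in $x$}. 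On $\Delta_\eps$ this is $\Or(\eps^{1/2-2\delta})$, and on $\{|x-x^\ast_{\mr r}|\le\eps^{1/2-\delta}\}$ it belongs to $\ec_1(1/2-2\delta,\nu)$ for all $\nu>0$; multiplying by $Q_0$ and the bounded amplitude and applying Lemma \ref{L00} with (\ref{L1 norm}) yields a contribution of order $\Or(\E^{-M(E^\ast)/\eps}\eps^{1-3\delta})$ in $L^2(\D x)$.

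Finally, the leading terms that survive all these expansions assemble precisely into the integrand of $\chi_{L,1}$ in (\ref{xxx}) — note in particular that $\E^{\I\xi(x,E)/(2\eps)}$ is carried through unexpanded — while every remainder collected above is $\Or(\E^{-M(E^\ast)/\eps}\eps^{1-4\delta})$ or smaller in $L^2(\D x)$ once restricted to $\{|x-x^\ast_{\mr r}|\le\eps^{1/2-\delta}\}$, for any $0<\delta<1/4$, which gives the claim. The one genuinely non-routine step is the treatment of $\mr{erf}$: one expands only the smooth, $x$-dependent argument to first order and then exploits the boundedness of $\mr{erf}'$ via (\ref{tr}) to make the remainder uniform in $x$, the spatial cut-off $|x-x^\ast_{\mr r}|\le\eps^{1/2-\delta}$ being exactly what converts the powers of $\eps$ gained on $\Delta_\eps$ into $\ec_1$-membership with the required weight.
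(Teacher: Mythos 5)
Your proposal follows exactly the paper's route: Taylor-expand $M$ and $S_0$ to third order (using $M'(E^\ast)=0$), $P_0$ and the argument of $\mr{erf}$ to first order while keeping $\E^{\I\xi(x,E)/(2\eps)}$ unexpanded, control the $\mr{erf}$ remainder uniformly in $x$ via the integral-remainder identity (\ref{tr}) and the boundedness of $\mr{erf}'$, and bound each resulting contribution on the restricted set $\{|x-x^\ast_{\mr r}|\le\eps^{1/2-\delta}\}$ through $\ec_1$-membership, Lemma \ref{L00}, and the $Q_0$-concentration estimate (\ref{L1 norm}); the $\varphi_{\eps,2}^{(1)}$ correction is disposed of via (\ref{35a}) as in the paper. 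The exponent bookkeeping and the final $\Or(\E^{-M(E^\ast)/\eps}\eps^{1-4\delta})$ also coincide, so this is essentially the paper's own argument.
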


To get the result stated in Theorem \ref{reflected wave shape} on the range
$\{ |x-x^\ast_{\mr r}|\leq \eps^{1/2 - \delta} \}$, we need to expand the 
term $\xi(x,E)$ around $E^{\ast}$ as
\be \label{xi expand}
\E^{\frac{\I}{2\eps}\xi(x,E)} = \E^{\frac{\I}{2\eps} ( \xi(x,E^{\ast}) + \xi'(x,E^{\ast})(E-E^{\ast}) + \frac{1}{2}\xi''(x,E^{\ast})(E-E^{\ast})^{2})}\left(1 + \Or\left(\frac{(E-E^{\ast})^{3}}{ \eps} x \right) \right),
\ee
where primes denote $E$-derivatives. Over the considered set of $x'$s, the error term generated is dominated by $|E-E^\ast|^3/\eps$, and is dealt with as above. To arrive at the expression $\chi_{\mr near}$, we change variables from $E$ to $y=(E-E^\ast)/\sqrt\eps$ in the remaining integral, and  note that extending the range of integration to the whole of ${\mathbb R}$ costs an error of order $\eps^\infty \E^{-M(E^{\ast})/\eps}$.

We now 
consider the region $\eps^{1/2-\delta}<|x-x_r^\ast| < \eps^{-\beta}$ with $\beta < 1/8$ and $0<\delta<1/4$. The following proposition  settles Theorem \ref{reflected wave shape} for the intermediate region.

\begin{proposition} \label{P06} 
Define $\chi_{\mr{mod}}$ as in (\ref{chimod}).
Then there exist $\eps_0 > 0$ and $C>0$ such that for each $\eps < \eps_0$ we have 
\[ 
\norm[L^{2}(\D x)]{
(\chi_{\mr{mod}} - \chi_{L,1}) 1_{\{\eps^{1/2 - \delta} < x-x^\ast_{\mr r} <\eps^{-\beta}\}}} \leq C \E^{-\frac{M(E^{\ast})}{\eps}} \eps^{1-4 \delta-2\beta}.
\]
\end{proposition}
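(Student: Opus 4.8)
Throughout write $\mathcal R_\eps:=\{x: C_0\eps^{1/2-\delta}<x-x^\ast_{\mr r}<\eps^{-\beta}\}$ for the intermediate strip, the lower cut-off read with a constant $C_0$ large enough that the thin layer $\eps^{1/2-\delta}<x-x^\ast_{\mr r}<C_0\eps^{1/2-\delta}$ is absorbed into the near region of Theorem~\ref{reflected wave shape}. The plan is to obtain $\chi_{\mr{mod}}$ from $\chi_{L,1}$ by three successive approximations valid on $\mathcal R_\eps$, and to check that each costs an $L^2(\D x)$-error of size $\Or(\E^{-M(E^\ast)/\eps}\eps^{1-4\delta-2\beta})$ or less. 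I would use repeatedly that the $E$-integral in (\ref{xxx}) runs over $\Delta_\eps=\{|E-E^\ast|<\eps^{1/2-\delta}\}$, that $|P_0(x,E^\ast)|$ is bounded uniformly in $x$ since $p(\cdot,E^\ast)\geq c>0$, and that $S(E^\ast,x,t)$ is real, whence $\tRe(M''(E^\ast)+\I S''(E^\ast,x,t))=M''(E^\ast)>0$; so $\sqrt{M''(E^\ast)+\I S''(E^\ast,x,t)}$ is well defined (principal branch) with modulus $\geq\sqrt{M''(E^\ast)}$ and the Gaussian integrals below converge.

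\emph{Step 1: discard the error function.} I would observe that on $\Delta_\eps\times\mathcal R_\eps$ the argument $w(x,E)=a(x,E^\ast)/\sqrt\eps+a'(x,E^\ast)(E-E^\ast)/\sqrt\eps$ of the error function in (\ref{xxx}) is large and positive: since $a(x,E^\ast)\geq c_0(x-x^\ast_{\mr r})$ by $p(\cdot,E^\ast)\geq c>0$, while $|a'(x,E^\ast)|\leq C\bk{x}$ and $|E-E^\ast|\leq\eps^{1/2-\delta}$ give $|a'(x,E^\ast)(E-E^\ast)|\leq C\bk{x}\eps^{1/2-\delta}\leq\frac12 a(x,E^\ast)$ once $\eps$ is small and $C_0$ large (using $\bk{x}\asymp x-x^\ast_{\mr r}$ when this is large), one gets $w(x,E)\geq c_1(x-x^\ast_{\mr r})/\sqrt\eps\geq c_1 C_0\eps^{-\delta}$. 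By (\ref{erf estimate}),
\[
|1-\mr{erf}(w(x,E))|\leq\frac{C\sqrt\eps}{x-x^\ast_{\mr r}}\,\E^{-c_1^2(x-x^\ast_{\mr r})^2/\eps},
\]
which on $\mathcal R_\eps$ is bounded by $\eps^N$ and by $\bk{x}^{-k}$ for every $N,k>0$, uniformly in $E\in\Delta_\eps$, hence lies in $\ec_1(N,k)$ for all $N,k$. Replacing $\mr{erf}(w)$ by $1$ in (\ref{xxx}) therefore alters $\chi_{L,1}$ on $\mathcal R_\eps$ by a quantity in $\E^{-M(E^\ast)/\eps}\ec_1(N,k)$, with $L^2(\D x)$-norm $\Or(\E^{-M(E^\ast)/\eps}\eps^N)$ for every $N$ by Lemma~\ref{L00} and (\ref{L1 norm}).

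\emph{Steps 2--3: expand $\xi$, complete the Gaussian.} I would insert (\ref{xi expand}); its remainder is $\Or((E-E^\ast)^3x/\eps)$, bounded on $\Delta_\eps\times\mathcal R_\eps$ by $C\eps^{3/2-3\delta}\eps^{-\beta}\eps^{-1}=C\eps^{1/2-3\delta-\beta}$. Multiplying by the rest of the integrand, whose modulus is $\leq C\E^{-M(E^\ast)/\eps}\E^{-(E-E^\ast)^2M''(E^\ast)/(2\eps)}$, integrating over $\Delta_\eps$ (a factor $\Or(\sqrt\eps)$) and taking the $L^2(\D x)$-norm over the interval $\mathcal R_\eps$ of length $<\eps^{-\beta}$ (equivalently, writing $1_{\mathcal R_\eps}\leq C\eps^{-\beta(1/2+\nu)}\bk{x}^{-1/2-\nu}$ and invoking Lemma~\ref{L00}) gives an error of order $\E^{-M(E^\ast)/\eps}\eps^{1-3\delta-3\beta/2}$, which is $\Or(\E^{-M(E^\ast)/\eps}\eps^{1-4\delta-2\beta})$ since $\delta+\beta/2\geq0$. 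With $\xi$ replaced by its second-order Taylor polynomial, the identities $S=S_0-\frac12\xi$, $S'=S_0'-\frac12\xi'$, $S''=S_0''-\frac12\xi''$ (at $(E^\ast,x,t)$) merge the constant, linear and quadratic exponentials of (\ref{xxx}) into
\[
P_0(x,E^\ast)\,\E^{-\frac{M(E^\ast)+\I S(E^\ast,x,t)}{\eps}}\,\E^{-\frac{\I(E-E^\ast)}{\eps}S'(E^\ast,x,t)}\,\E^{-\frac{(E-E^\ast)^2}{2\eps}(M''(E^\ast)+\I S''(E^\ast,x,t))}.
\]
Substituting $E-E^\ast=\sqrt\eps\,u$ turns the $E$-integral into $\sqrt\eps\int_{|u|<\eps^{-\delta}}\E^{-\I uS'(E^\ast,x,t)/\sqrt\eps}\,\E^{-u^2(M''(E^\ast)+\I S''(E^\ast,x,t))/2}\,\D u$; since $\tRe(M''(E^\ast)+\I S''(E^\ast,x,t))=M''(E^\ast)>0$, extending the range to $\R$ costs $\Or(\eps^\delta\E^{-M''(E^\ast)\eps^{-2\delta}/2})$, which is negligible, and $\int_\R\E^{-au^2/2+bu}\,\D u=\sqrt{2\pi/a}\,\E^{b^2/(2a)}$ (valid for $\tRe a>0$, principal branch) produces exactly the prefactor and Gaussian of (\ref{chimod}). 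Summing the three errors gives the asserted bound.

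The hard part will be the $\xi$-expansion: the Taylor remainder of $\xi(x,E)$ in $E$ grows \emph{linearly in $x$}, so — in contrast to every other quantity entering $\chi_{L,1}$ — it cannot be controlled uniformly in $x$, which forces one to confine the estimate to the truncated strip $\mathcal R_\eps$ and to trade the factor $\eps^{-\beta}$ lost at the upper cut-off against the factor $\eps^{1/2}$ gained from the $E$-concentration on $\Delta_\eps$. This estimate, together with its analogues in the (already-treated) near region and the far region, is what limits the admissible $\delta$ and $\beta$ and fixes the exponent $1-4\delta-2\beta$.
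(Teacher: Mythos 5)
Your proof is correct and follows the same three-step plan as the paper: discard the error function on the strip, expand $\xi(x,E)$ in $E$ to second order via (\ref{xi expand}) and control the linear-in-$x$ remainder in $L^2$, then complete the Gaussian integral after extending the $E$-range to $\R$ — with essentially the same error bookkeeping (your $\eps^{1-3\delta-3\beta/2}$ is in fact slightly sharper than the stated $\eps^{1-4\delta-2\beta}$). Your only deviation is relaxing the lower cut-off to $C_0\eps^{1/2-\delta}$ so that positivity of the linearized $\mr{erf}$-argument can be forced by a hands-on estimate; the paper instead invokes Proposition~\ref{erf prop} (read with a suitably smaller $\delta$) for this replacement, and the free constant $C_0$ is in any case available in Theorem~\ref{reflected wave shape}, so the discrepancy is harmless.
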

\begin{proof}
The leading term is obtained from the leading term of (\ref{xi expand}): we know from Proposition 
\ref{erf prop} that we can replace the error function in (\ref{xxx}) with $1$ for the range of $x$ we consider, and then we obtain a purely Gaussian integral in $(E-E^\ast)$. Again, we can extend the range of integration to all or $\R$ for the cost of 
an error of $\Or(\e^{-M(E^{\ast})/\eps} \eps^{\infty})$. 
When integrated over $\eps^{1/2 - \delta} < x - x^\ast_{\mr r} <\eps^{-\beta}$ this error stays of that order, and distributing the various leading terms of (\ref{xi expand}) in the exponent changes the $S_{0}$ and its derivatives to $S$. 
For the error term we note that $|E-E^{\ast}| < \eps^{1/2-\delta}$ on $\Delta_{\eps}$, and thus 
\[
\Or \left( \frac{|E-E^{\ast}|^{3} |x| }{\eps} \right) \leq \Or(\eps^{1/2 - 3 \delta - \beta})
\]
for $|x| < \eps^{-\beta}$. Since the length of the range of integration  in $x$ is $\Or(\eps^{-\beta})$, the length of integration range in $E$ is $\eps^{1/2 - \delta}$, and the factor $\E^{-\frac{M(E^{\ast})}{\eps}}$ is present, we obtain the result.
\end{proof}

It remains to treat the region $x>\eps^{-\beta}$. Since this has essentially been done before \cite{jm05}, we will be much briefer here. We use Lemma \ref{L01} in order to write
\[
1_{\{x>\eps^{-\beta}\}} \E^{\frac{\I}{2\eps} \xi(x,E)} = \E^{\frac{\I}{\eps} (\om(E) + p(\infty,E) x)} (1+r_{1}),
\]
with $r_{1} \in \ec_{1}(\beta(m-\nu)-1,\nu)$.  By Proposition \ref{erf prop}, we can replace the error function by $1$, and replace $p(x,E)$ by $p(\infty,E)$ with an error bounded by $C/\bk{x}^{m+3/2}$ and thus by $C \eps^{\beta(1+m-\nu)}\bk{x}^{\nu+1/2}<<C \eps^{\beta(m-\nu)-1}\bk{x}^{\nu+1/2}$. Integrating these error terms over $E$ against the Gaussian function gives a term whose $L^2(\D x)$ norm is bounded by means of Lemma \ref{L00}. The  $L^1(\D E)$ norm  of the Gaussian in energy is of order
$\eps^{1/2}$, see (\ref{L1 norm}), so that the error term is eventually of $L^{2}(\D x)$ norm of order $\Or(\E^{-\frac{M(E^{\ast})}{\eps}} \eps^{\beta(m-\nu) - 1/2})$. 
We need $\beta(m-\nu)-1/2 > 3/4$, so $m>5/(4\beta) > 10$ with $\beta < 1/8$. Finally, the leading term can be computed by Gaussian integration. 
First we change variables
from $E\mapsto p(\infty,E)=\sqrt{E-V(\infty)}$ to 
$k\mapsto E(k)=k^2+V(\infty)$ and write $\tilde f (k)=f(E(k))$ for a function $f$, just as we did in the proof of Lemma \ref{L01}.
Carrying out the resulting Gaussian integration, we find that if $m>10$, then there exists $\delta>0$ such that 
\[ \chi_{L,0}(\eps,x,t) = \chi_{\mr{far}}(\eps,x,t) + \Or(\E^{-\frac{M(E^{\ast})}{\eps}} \eps^{3/4 + \delta}),\]
where $\chi_{\mr{far}}$ is given by (\ref{xfar}). This finishes the proof of Theorem \ref{reflected wave shape}.\\

It remains to prove Theorem \ref{chimod gauss}. The idea of the proof is to observe that $\chi_{\mr{near}}$ and $\chi_{\mr{mod}}$ are sharply peaked around the classical trajectory $q_t$ of the reflected wave, i.e.\ the solution unique solution of $S'(E^\ast,q_t,t) = 0$. This will allow us to expand the reflected wave around its maximum in the variable $x$, and get the result. Since 
we already know that $\chi_{\mr{far}}$ is a Gaussian centred around the asmptotically freely moving classical reflected wave, \cite{jm05}, it will be negligible for the times $t$ considered. We start by proving concentration in $x$ for finite times. Recall that by the change of variables 
$y = (E-E^\ast)/\sqrt\eps$, we can write, for any $D\in{\mathbb R}$
\begin{eqnarray}
\lefteqn{
\chi_{L,1} (\eps,x,t) = 1_{\{x>D\}}\sqrt{\eps} P_0(\eps,x,E^{\ast})   
\e^{-\frac{M(E^\ast)+iS(E^{\ast},x,t)}{\eps}} \times} \label{chiL1 with y}\\ 
&& \times  
\int_{|y| < \eps^{-\delta}}
\mr{erf}\left(\frac{a(x,E^*)}{\sqrt{\eps}}+a'(x,E^\ast)y\right)
\e^{-y^{2} \frac{M''(E^\ast)+iS''(E^\ast,x,t)}{2}} \e^{-\frac{\I}{\sqrt{\eps}} S'(E^{\ast},x,t) y}
\,dy. \nonumber
\end{eqnarray}

\begin{proposition} \label{concentration x finite}
For all $m>0$, all $\delta > 0$ and all $\tilde C > 0$ there exists $C>0$ such that 
\[
\norm[L^{2}(\D x)]{\chi_{L,1} 1_{\{ |x-q_{t}| > \eps^{1/2-\delta}, |x| < \tilde C \}} } 
\leq C \e^{-\frac{M(E^{\ast})}{\eps}} \eps^{m}
\]
for all $t \in \R$. 
\end{proposition}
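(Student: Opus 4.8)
The plan is to treat the inner integral in (\ref{chiL1 with y}) by non-stationary phase in the variable $y$. Away from the classical trajectory $q_t$ the oscillatory factor $\E^{-\frac{\I}{\sqrt\eps}S'(E^\ast,x,t)y}$ has large frequency, while the remaining part of the integrand is a Schwartz-type amplitude with uniformly controlled derivatives; repeated integration by parts then produces an arbitrarily large power of $\eps$, and the bounded range of $x$ turns the resulting pointwise estimate into an $L^{2}(\D x)$ estimate for free.

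First I would record the needed property of the phase. Recalling $S(E,x,t) = \tr(E) + J(E) - \frac{1}{2}\xi(x,E) + Et$ and $\partial_x\partial_E\xi(x,E) = 2\partial_E p(x,E) = 1/p(x,E)$, one computes
\[
\partial_x S'(E^\ast,x,t) = -\frac{1}{2 p(x,E^\ast)},
\]
which by (V1) lies in a fixed compact subinterval of $(-\infty,0)$, uniformly in $x$ and $t$. This makes $x\mapsto S'(E^\ast,x,t)$ strictly monotone, so its zero $q_t$ is unique for every $t$ (consistent with the definition of $q_t$), and
\[
|S'(E^\ast,x,t)| = \left| \int_{q_t}^{x} \partial_x S'(E^\ast,u,t)\,\D u \right| \geq c\,|x-q_t|
\]
with $c>0$ independent of $x$ and $t$. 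Hence on $\{|x-q_t|>\eps^{1/2-\delta}\}$ we have $|S'(E^\ast,x,t)| > c\,\eps^{1/2-\delta}$, so that the $y$-frequency $|S'(E^\ast,x,t)|/\sqrt\eps$ exceeds $c\,\eps^{-\delta}$.

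Next I would write the inner integral of (\ref{chiL1 with y}) as $\int_{|y|<\eps^{-\delta}} h_{\eps,x,t}(y)\,\E^{-\frac{\I}{\sqrt\eps}S'(E^\ast,x,t)y}\,\D y$, where
\[
h_{\eps,x,t}(y) = \mr{erf}\!\left( \frac{a(x,E^\ast)}{\sqrt\eps} + a'(x,E^\ast)y \right) \E^{-y^{2}(M''(E^\ast)+\I S''(E^\ast,x,t))/2}.
\]
Since $S$ is affine in $t$, $S''(E^\ast,x,t)$ carries no $t$-dependence, and on $|x|<\tilde C$ the quantities $a(x,E^\ast)$, $a'(x,E^\ast)$ and $S''(E^\ast,x,t)$ are bounded; together with $M''(E^\ast)>0$ this gives $|\partial_y^k h_{\eps,x,t}(y)| \leq C_k\,\E^{-y^{2}M''(E^\ast)/4}$ uniformly in $\eps$, in $t$, and in $x$ in the relevant set. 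Integrating by parts $N$ times, each step contributes a factor $\sqrt\eps/|S'(E^\ast,x,t)| \leq C\eps^{\delta}$, while the boundary terms at $|y|=\eps^{-\delta}$ are $\Or(\E^{-cM''(E^\ast)\eps^{-2\delta}/4}) = \Or(\eps^{\infty})$; choosing $N$ with $N\delta>m$ shows that the inner integral is $\Or(\eps^{m})$, uniformly over $\{|x-q_t|>\eps^{1/2-\delta}\}\cap\{|x|<\tilde C\}$ and over $t$.

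Finally, the prefactor $\sqrt\eps\,P_0\,\E^{-(M(E^\ast)+\I S(E^\ast,x,t))/\eps}$ standing in front of the inner integral in (\ref{chiL1 with y}) has modulus bounded by $C\sqrt\eps\,\E^{-M(E^\ast)/\eps}$ on $|x|<\tilde C$, since $|\E^{-\I S(E^\ast,x,t)/\eps}|=1$, $P$ is bounded by (C3), and $p(\cdot,E^\ast)$ is bounded below by (V1). Combining yields the pointwise estimate $|\chi_{L,1}(\eps,x,t)|\,1_{\{|x-q_t|>\eps^{1/2-\delta},\,|x|<\tilde C\}} \leq C\,\E^{-M(E^\ast)/\eps}\eps^{m}$, and integrating over the bounded $x$-range gives the claimed $L^{2}(\D x)$ bound after renaming $m$. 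The step I expect to be the main obstacle is not the integration by parts, which is routine, but keeping all estimates uniform in $t$ while $q_t$ ranges over $\R$ — which is exactly what the identity $\partial_x S'(E^\ast,x,t) = -1/(2p(x,E^\ast))$ provides for the lower bound on $|S'|$ (and for the monotonicity), and what the $t$-independence of $S''(E^\ast,x,t)$ provides for the Gaussian decay of $h_{\eps,x,t}$.
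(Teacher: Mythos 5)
Your proposal is correct and follows essentially the same strategy as the paper's proof: non-stationary phase via $k$-fold integration by parts in $y$, using the uniform lower bound $|S'(E^\ast,x,t)|\ge c\,|x-q_t|$ (which both you and the paper obtain from the $t$-independence and uniform negativity of $\partial_x S'(E^\ast,x,t)=-1/(2p(x,E^\ast))$), the observation that derivatives of the amplitude are Hermite functions times Gaussians and hence uniformly controlled in $\eps$, $t$ and $|x|<\tilde C$, and finally choosing $k\delta>m$ and integrating over the bounded $x$-range. The only cosmetic difference is that the paper first extends the $y$-integration to all of $\R$ at cost $\Or(\eps^\infty)$, whereas you retain the finite range and estimate the boundary terms at $|y|=\eps^{-\delta}$ directly; both are routine and equivalent.
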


\begin{proof}
Let us first note that we can extend the range of integration to all $y \in \R$ in (\ref{chiL1 with y}) at the expense of an error bounded by $C \e^{-\frac{M(E^{\ast}}{\eps}} \eps^{m}$ in $L^2(\D x)$ sense, by the usual arguments. 
For $\delta>0$ choose $k \in \N$ such that $k>m/\delta$. $k$-fold integration by parts gives 
\begin{eqnarray*} 
\chi_{L,1} (\eps,x,t) &=& 1_{\{x>D\}}\sqrt{\eps} P_0(\eps,x,E^{\ast}) (-1)^{k}  
\e^{-\frac{M(E^\ast)+iS(E^{\ast},x,t)}{\eps}} \times \\ 
&& \times \int_{\mathbb R} \frac{\D^{k}}{\D y^{k}} \left( 
\mr{erf}\left(\frac{a(x,E^*)}{\sqrt{\eps}}+a'(x,E^\ast)y\right)
\e^{-y^{2} (M''(E^\ast)+iS''(E^\ast,x,t))/2}\right) \times \\
&& \times \frac{\eps^{k/2}}{(-i S'(E^{\ast},x,t))^{k}} \e^{-\frac{\I}{\sqrt{\eps}} S'(E^{\ast},x,t) y}
\,dy + \Or\left(  \e^{-\frac{M(E^{\ast}}{\eps}} \eps^{m} \right). 
\end{eqnarray*}
The $k$-fold derivative has the form $f(\eps,x,y) \exp(-y^2 M''(E^\ast)/2)$, where 
\be \label{bbb}
|f(\eps,x,y)| \leq C(k) |y|^k
\ee
  uniformly in $\eps > 0$, in time $t$ and $|x| < \tilde C$. To see this, note that by the formula $\partial_x^l g(a + b x) = b^n g^{(l)}(a + bx)$, the $l$-th derivative of the error 
  function is an Hermite function, i.e.\ is given by 
  \[
  a'(x,E^\ast)^l H_{l-1}\left(a(x,E^*)/\sqrt{\eps} + a'(x,E^\ast) y\right) e^{- (a(x,E^*)/\sqrt{\eps} + a'(x,E^\ast) y)^2},
 \]
   where $H_l$ is the $l$-th Hermite polynomial. 
Since Hermite functions are uniformly bounded in their variable, the uniformity claim in $\eps$ follows. Since the derivatives on the second factor also produce Hermite functions, we have shown 
(\ref{bbb}). 
Moreover, $|S'(E^{\ast},x,t)| > c |q_{t}-x|$ for some $c>0$ uniform in $t$, and so  
\[ 
1_{\{|q_{t}-x| > \eps^{1/2-\delta}\}} \frac{\eps^{k/2}}{|S'(E^{\ast},x,t)|^{k}} \leq \eps^{k \delta}/c^k \leq \eps^{m}/c^k.
\]
Thus the $\D y$ integral is bounded by $C(\delta,m) \eps^{m}$ and bounded uniformly in $x$ and $t$, and estimating the $x$ integral by the area $|x|<\tilde C$ times its maximum gives the result.
\end{proof}

In the region where $x$ is moderately large but not finite as $\eps\rightarrow 0$, we need a different argument due to the broadening of the wave packet. 

\begin{proposition} \label{concentration x moderate}
Put 
\[
\Lambda = \{ x \in \R: \eps^{1/2 - \delta} \sqrt{M''(E^\ast)^2 + 2 S''(E^\ast,q_t,t)^2} < |x-q_t| < \eps^{- \beta} \}.
\]
Then for each $m \in \N$ we have 
\[
\norm[L^2(\D x)]{\chi_{\mr{mod}} 1_{\Lambda}} \leq K \e^{-\frac{M(E^\ast)}{\eps}} \eps^m , \ \ \ \mbox{uniformly in $t$.}
\]
\end{proposition}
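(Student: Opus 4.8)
The plan is to treat this as a Gaussian tail estimate: $\chi_{\mr{mod}}$ is, up to a bounded prefactor and the overall factor $\e^{-M(E^\ast)/\eps}$, a complex Gaussian in $x$ centred at the classical trajectory $q_t$, and $\Lambda$ excludes a $\eps^{1/2-\delta}$-neighbourhood of $q_t$ measured in the natural width of that Gaussian, so the contribution of $\Lambda$ is super-polynomially small. First I would extract the modulus of $\chi_{\mr{mod}}$: since $S(E^\ast,\cdot,t)$ is real-valued we have $|\e^{-\I S(E^\ast,x,t)/\eps}|=1$; since $M''(E^\ast)>0$ we have $|M''(E^\ast)+\I S''(E^\ast,x,t)|^{-1/2}\le M''(E^\ast)^{-1/2}$; and $|P_0(x,E^\ast)|$ is bounded because $p(x,E^\ast)=\sqrt{E^\ast-V(x)}$ is bounded below by (V1). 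Using $\tRe\big(S'^2/(2(M''+\I S''))\big)=S'^2M''/(2(M''^2+S''^2))$ this gives
\[
|\chi_{\mr{mod}}(\eps,x,t)|^2 \le C\,\eps\,\e^{-2M(E^\ast)/\eps}\,\e^{-\frac{S'(E^\ast,x,t)^2\,M''(E^\ast)}{\eps\,(M''(E^\ast)^2+S''(E^\ast,x,t)^2)}}.
\]

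Next I would record two geometric inputs, both uniform in $t$ and valid for all real $x$, obtained by differentiating in $x$ and using $\partial_x\xi(x,E)=2p(x,E)$: first, $\partial_x S'(E^\ast,x,t)=-\partial_E p(x,E^\ast)=-1/(2p(x,E^\ast))$ has constant sign with modulus bounded below, so $|S'(E^\ast,x,t)|\ge c\,|x-q_t|$ (recall $q_t$ is defined by $S'(E^\ast,q_t,t)=0$); second, $\partial_x S''(E^\ast,x,t)=-\partial_E^2 p(x,E^\ast)=-1/(4p(x,E^\ast)^3)$ is bounded, so $|S''(E^\ast,x,t)|\le|S''(E^\ast,q_t,t)|+C|x-q_t|$. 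The needed upper and lower bounds on $p(\cdot,E^\ast)$ are immediate from (V1), since $V$ is bounded with $\sup V<E^\ast$.

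Then I would combine these. Writing $u=x-q_t$ and $D_t=\sqrt{M''(E^\ast)^2+2S''(E^\ast,q_t,t)^2}$, so that $D_t\ge M''(E^\ast)>0$ and $|S''(E^\ast,q_t,t)|\le D_t$, the two inputs give $M''(E^\ast)^2+S''(E^\ast,x,t)^2\le C(D_t^2+u^2)$ and $S'(E^\ast,x,t)^2\ge c^2u^2$, whence the exponent above is at least $c'u^2/(\eps(D_t^2+u^2))$. On $\Lambda$ one has $u^2>\eps^{1-2\delta}D_t^2$ and $|u|<\eps^{-\beta}$. Splitting into the case $u^2\gtrsim D_t^2$ (where the exponent is $\gtrsim\eps^{-1}\ge\eps^{-2\delta}$) and the case $u^2\lesssim D_t^2$ (where the exponent is $\ge c'u^2/(2\eps D_t^2)\ge\frac{c'}{2}\eps^{-2\delta}$) shows that throughout $\Lambda$ the exponent is $\ge C_0\eps^{-2\delta}$ for some $C_0>0$ independent of $t$ and $\eps$. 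Hence $\sup_\Lambda|\chi_{\mr{mod}}|^2\le C\eps\,\e^{-2M(E^\ast)/\eps}\e^{-C_0\eps^{-2\delta}}$, and since $|\Lambda|\le2\eps^{-\beta}$ I would conclude
\[
\norm[L^2(\D x)]{\chi_{\mr{mod}}\,1_\Lambda}^2\le 2C\,\eps^{1-\beta}\,\e^{-2M(E^\ast)/\eps}\,\e^{-C_0\eps^{-2\delta}},
\]
and the claim follows by taking square roots, because $\eps^{(1-\beta)/2}\e^{-C_0\eps^{-2\delta}/2}=\Or(\eps^m)$ for every $m$.

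The main obstacle is the uniformity in $t$: the inner radius of $\Lambda$ carries the $t$-dependent factor $D_t$, which is large when $q_t$ (hence $S''(E^\ast,q_t,t)$) is large, while $S''(E^\ast,x,t)$ in the denominator of the exponent also grows like $|u|$ near the outer edge $|u|\sim\eps^{-\beta}$. The case split above is designed so that these two growths compensate exactly — which is precisely why $\Lambda$ is defined with the factor $\sqrt{M''(E^\ast)^2+2S''(E^\ast,q_t,t)^2}$ rather than just a power of $\eps$ — so that $D_t$ drops out of the final estimate, using only $D_t\ge M''(E^\ast)>0$. Everything else is routine bookkeeping of $\eps$- and $t$-uniform bounds on $p(\cdot,E^\ast)$ and its first two $E$-derivatives, which follow from (V1) and (V4)$_m$.
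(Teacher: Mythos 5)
Your proof is correct and follows essentially the same strategy as the paper: bound the negative real part of the exponent of $\chi_{\mr{mod}}$ from below using $|S'(E^\ast,x,t)|\ge c|x-q_t|$ and $|S''(E^\ast,x,t)-S''(E^\ast,q_t,t)|\le C|x-q_t|$, conclude that this exponent is $\gtrsim\eps^{-2\delta}$ uniformly on $\Lambda$, and then integrate. The only cosmetic differences are that the paper obtains the $\eps^{-2\delta}$ lower bound via a monotonicity argument on $y\mapsto y/(a^2+yb^2)$ where you use a case split on $u^2\gtrless D_t^2$, and that you bound the volume directly by $|\Lambda|\le 2\eps^{-\beta}$ whereas the paper first bounds the range of $t$ and $|x|$ on which $\Lambda$ is nonempty; both variants close the argument.
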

\begin{proof}

First recall that $ S''(E^\ast,x,t)$ is independent of $t$ and grows with $x$ like $x$. The negative of the real part of the exponent of $\chi_{\mr mod}$ is given by 
\[
f(x) := \frac{S'(E^\ast,x,t)^2 M''(E^\ast)}{2 \eps (M''(E^\ast)^2 + S''(E^\ast,x,t)^2)}.
\]
Since $|S'(E^\ast,x,t)| \geq c|q_t-x|$ for some $c>0$ and $|S"(E^*,x)-S"(E^*,q_t)|\leq C|q_t-x|$  for some $C>0$, we find 
\be \label{f>}
f(x) \geq \frac{c^2 |x-q_t|^2 M''(E^\ast)}{2 \eps ( M''(E^\ast)^2 + 2 S''(E^\ast,q_t,t)^2 +
 2 C |q_t - x|^2)}.
\ee
Differentiation of the function $y \mapsto \frac{y}{a^2 + y b^2}$ shows that the right hand side of (\ref{f>}) is minimal when $|x-q_t|^2$ is minimal, i.e.\ when $|x-q_t|$ 
is equal to its lower bound in $\Lambda$. Inserting this yields
\[
f(x) \geq \eps^{-2 \delta} \frac{c^2( M''(E^\ast)^2 + 2 S''(E^\ast,q_t,t)^2) M''(E^\ast)}{2 ( M''(E^\ast)^2 + 2 S''(E^\ast,q_t,t)^2)(1 + 2C\eps^{1-2\delta})}=\eps^{-2 \delta} \frac{c^2 M''(E^\ast)}{2 (1 + 2C\eps^{1-2\delta})}.
\]
This shows that $|\chi_{\mr{mod}}| \leq C e^{-\frac{M(E^\ast)}{\eps}} \eps^m$ for each $m \in \N$ all $x \in \Lambda$, with $C$ uniform in $t$. Now, since for $t$ large $q_t\simeq t$ so that $ S''(E^\ast,q_t,t)\simeq t$, 
we get that on $\Lambda$, $t\leq c_1 \eps^{-\beta-1/2+\delta}$, so that $|x|\leq c_2 \eps^{-\beta-1/2+\delta}$, for 
some constants $c_1, c_2$. Integrating over the range $|x| < c_2\eps^{- \beta-1/2+\delta}$ then shows the claim.
\end{proof}

We can now complete the proof of Theorem \ref{chimod gauss}. Define $\chi_{\mr{gauss}}$ as in Theorem \ref{chimod gauss}. Put $\Lambda_- = \{x: x - x^\ast_{\mr r} < \frac{1}{2} \eps^{1/2 - \delta} \}$, and 
$\Lambda_+ = \{x: x - x^\ast_{\mr r} > 2 \eps^{- \beta} \}$
\begin{eqnarray*}
\norm{\chi_{\mr{gauss}}-\chi_{\mr{eff}}} & \leq & \norm{\chi_{\mr{gauss}} (1_{\Lambda_+} + 1_{\Lambda_-})} + \norm{\chi_{\mr{near}} 1_{\Lambda_-}} + \norm{\chi_{\mr{far}} 1_{\Lambda_+}} + \\
&& +  \norm{(\chi_{\mr{gauss}} - \chi_{\mr{mod}}) 1_{\{ \eps^{1/2 - \delta}/2 < x - x^\ast_{\mr r} < 2 
\eps^{-\beta}\}}} + \Or\left( \e^{-\frac{M(E^\ast)}{\eps}} \eps^{3/4 + \delta_0} \right),
\end{eqnarray*}
where all norms are $L^2$ norms, and $x^\ast_{\mr r} = x_{\mr r}(E^\ast)$. Now for $\eps^{1/2 - \delta} < |q_t-x_r^\ast|  < \eps^{-\beta}$, the first and third term are $\Or\left( \e^{-\frac{M(E^\ast)}{\eps}} \eps^m \right)$ for any $m>0$, as can be seen by direct estimates on Gaussian tails. By Proposition 
\ref{concentration x finite} the second term is of the same order for $|q_t-x_r^\ast| > \eps^{1/2 - \delta}$. For the term in the second line, we may restrict our attention to the region $|x-q_t| < \eps^{1/2 - \delta} \sqrt{M''(E^\ast)^2 + 2 S''(E^\ast,q_t,t)^2}$. 

This follows from Proposition \ref{concentration x moderate} for $\chi_{\mr{mod}}$, and from direct Gaussian estimates for $\chi_{\mr{gauss}}$. For these $x$ we expand the exponent of $\chi_{\mr{mod}}$; we have to investigate 
\[
d(x,t) := \left| \frac{S'(E^{\ast},x,t)^2}{\eps (M''(E^\ast) + \I S''(E^\ast,x,t))} - \frac{(x-q_t)^2}{4 \eps p(E^\ast,q_t)^2 (M''(E^\ast) + \I S''(E^\ast,q_t,t))} \right|. 
\]
Since both  $|\partial_x S'(E^\ast,x,t)| \leq C $ and $|\partial_x S''(E^\ast,x,t)| \leq C $ uniformly in $x$ and $t$, we have
\[
\left| S'(E^\ast,x,t)^2 - \frac{(x-q_t)^2}{4 p^2(E^\ast,q_t)} \right| \leq C |x-q_t|^3, 
\]
and 
\bea
&&\left| 
 \frac{1}{(M''(E^\ast) + \I S''(E^\ast,x,t))} -  \frac{1}{(M''(E^\ast) + \I S''(E^\ast,q_t,t))} \right| \nonumber\\ \nonumber
&&\quad\quad\leq C \frac{|x-q_t|}{|M''(E^\ast) + \I S''(E^\ast,x,t)||M''(E^\ast) + \I S''(E^\ast,q_t,t)|}.
\eea
Then we plug this into the expression for $d(x,t)$, and make use of the inequality $|ab-cd| \leq |a-c||b| + |b-d||c|$, and of the estimate
\bea
\lefteqn{ M''(E^\ast) + \I S''(E^\ast,x,t) = } \nonumber \\  & = & (M''(E^\ast) + \I S''(E^\ast,q_t,t))
\left(1+\frac{i(S''(E^\ast,x,t)-S''(E^\ast,q_t,t))}{M''(E^\ast) + \I S''(E^\ast,q_t,t)}\right)\nonumber\\ \label{lastest}
& = &(M''(E^\ast) + \I S''(E^\ast,q_t,t))\left(1+O\left(\frac{|x-q_t|}{|M''(E^\ast) + \I S''(E^\ast,q_t,t)|}\right)\right).
\eea
On the set $|x-q_t|\leq \sqrt{2}\eps^{1/2-\delta}|M''(E^\ast) + \I S''(E^\ast,q_t,t)|$, the last term of the parenthesis of  order $\eps^{1/2-\delta}$, uniformly in $x,t$.
Eventually, we get on the set 
\[ |x-q_t|\leq \eps^{1/2-\delta}\sqrt{M''(E^\ast)^2+2S''(E^\ast,q_t, t)}, \] for uniform constants $C_1$, $C_2$, 
\[
d(x,t) \leq \frac{C_1}{\eps} |x-q_t|^3 \leq C_2 \eps^{1 - 3 \delta}.
\]
It then remains to make use of these estimates in the computation of the 
modulus of $\chi_{\mr{gauss}} - \chi_{\mr{mod}}$ on the set  $|x-q_t|\leq \eps^{1/2-\delta}\sqrt{M''(E^\ast)^2+2S''(E^\ast,q_t, t)}$. We finally get, with a constant $C_3$ uniform   in $t$,
\[
|(\chi_{\mr{gauss}} - \chi_{\mr{mod}}) 1_{\{ \eps^{1/2 - \delta}/2 < x - x^\ast_{\mr r} < 2 
\eps^{-\beta}\}}| \leq C_3|\chi_{\mr{gauss}}| \eps^{1 - 3 \delta}.
\]
Explicit integration of the right hand side above finishes the proof. 

Let us finally remark that in a similar way, one can expand the expression (\ref{chiL1 with y}) for the near region, and get an approximate expression for the transition region as well.

\section{Superadiabatic representations} \label{recursion}
In this section we will prove Theorem \ref{n-th}. We start with equation (\ref{ODE2}) which we repeat here for convenience: 
\begin{equation} \label{ODE2a}
\I \eps \partial_{\xi} \psi(\xi) =  H(\xi) \psi(\xi) := \frac{1}{2}\left( \begin{array}{cc} 0 & \frac{1}{\tilde p(\xi)}\\ \tilde p(\xi) & 0 \end{array}\right) \psi(\xi).
\end{equation}
While the ultimate aim is to construct the superadiabatic transformations $T_{n}$ from theorem \ref{n-th}, we start by constructing superadiabatic {\em projections}. For this matter, write 
\[
\pi_+(\xi) =  \frac{1}{2}\left( \begin{array}{cc} 1 & \frac{1}{ \tilde p(\xi)}\\ \tilde p(\xi) & 1 \end{array}\right)\,, \quad \pi_-(\xi) = {\bf 1}- \pi_+(\xi)\,.
\]
for the spectral projections corresponding to the eigenvaluse 
$\pm 1/2$ of $H$. The $n$-th superadiabatic projection will be a matrix 
\begin{equation} \label{ansatz}
\pi^{(n)} = \sum_{k=0}^n\varepsilon^k \pi_k \,,
\end{equation}
with $\pi_0(x) = \pi_+(x)$ and 
\begin{eqnarray}
 & & (\pi^{(n)})^2 - \pi^{(n)} = \Or(\varepsilon^{n+1}),  \label{pi1}\\
 & & \commut{\I\eps \partial_\xi - H}{\pi^{(n)}} = \Or(\varepsilon^{n+1}). \label{pi2}
\end{eqnarray}
Here, $\commut{A}{B} = AB - BA$ denotes the commutator two
operators $A$ and $B$. Likewise, we will later use $\antcom{A}{B}
= AB + BA$ to denote the anti-commutator of $A$ and $B$. Equation
(\ref{pi1}) says that $\pi^{(n)}$ is a projection up to errors of
order $\eps^{n+1}$, while (\ref{pi2}) states that if $\xi \mapsto \psi(\xi) \in \C^{2}$ solves (\ref{ODE2a}), then also  $\xi \mapsto \pi^{(n)}(\xi) \psi(\xi)$ solves (\ref{ODE2a}) up to errors of order 
$\eps^{n+1}$. 

We construct $\pi^{(n)}$ inductively starting from the Ansatz (\ref{ansatz}).
Obviously  (\ref{pi1}) and (\ref{pi2}) are fulfilled
for $n=0$. In order to construct $\pi_n$ for $n > 0$, let us write
$G_n(t)$ for the term of order $\eps^{n+1}$ in (\ref{pi1}), i.e.
\begin{equation} \label{G def}
(\pi^{(n)})^2 - \pi^{(n)} = \eps^{n+1} G_{n+1} +
\Or(\eps^{n+2})\,,
\end{equation}
with 
\begin{equation} \label{G calculated}
G_{n+1} = \sum_{j=1}^{n} \pi_j \pi_{n+1-j}.
\end{equation}

The following proposition is taken from \cite{BeTe1} and the proof there applies literally also to the present case.
\begin{proposition} \label{general scheme}
Assume that $\pi^{(n)}$ given by (\ref{ansatz}) fulfills
(\ref{pi1}) and (\ref{pi2}). Then a unique matrix $\pi_{n+1}$
exists such that $\pi^{(n+1)}$ defined as in (\ref{ansatz})
fulfills (\ref{pi1}) and (\ref{pi2}). $\pi_{n+1}$ is given by
\begin{equation}\label{recursion1}
\pi_{n+1} = G_{n+1} - \antcom{G_{n+1}}{\pi_0} - \I
\commut{\pi'_n}{\pi_0}.
\end{equation}
Furthermore $\pi'_n$ is off-diagonal with respect to $\pi_0$, i.e.
\begin{equation} \label{offdiag}
\pi_0  \pi'_n \pi_0  =  (1-\pi_0) \pi'_n (1-\pi_0) = 0,
\end{equation}
and $G_{n}$ is diagonal with respect to $\pi_0$, i.e.
\begin{equation} \label{diag}
\pi_0 G_{n+1} (1-\pi_0)  =  (1-\pi_0) G_{n+1} \pi_0 = 0.
\end{equation}
\end{proposition}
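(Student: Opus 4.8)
\emph{Proof proposal.} The plan is to prove Proposition~\ref{general scheme} by induction on $n$, following the argument of \cite{BeTe1} essentially verbatim; the only thing to check is that that argument never uses self-adjointness of $H$, and indeed it uses only that $\pi_\pm$ are genuine (though non-orthogonal) spectral projections of $H$, i.e.\ $\pi_+^2=\pi_+$, $\pi_++\pi_-={\bf 1}$, $\commut{H}{\pi_\pm}=0$, together with $H=\pi_0-\frac12$ and the fact that $H$ has eigenvalue gap $1$. First I would substitute the Ansatz $\pi^{(n+1)}=\pi^{(n)}+\eps^{n+1}\pi_{n+1}$ into the left-hand sides of (\ref{pi1}) and (\ref{pi2}), expand in $\eps$, and collect the coefficient of $\eps^{n+1}$, using the inductive hypothesis that $\pi^{(n)}$ satisfies (\ref{pi1}) and (\ref{pi2}) to order $\eps^{n+1}$. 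With $\pi^{(n)}=\pi_0+\Or(\eps)$ and the definition (\ref{G calculated}) of $G_{n+1}$ one finds that $\pi^{(n+1)}$ satisfies (\ref{pi1}) and (\ref{pi2}) to order $\eps^{n+2}$ if and only if
\begin{equation}\label{twoeqprop}
\antcom{\pi_0}{\pi_{n+1}}-\pi_{n+1}=-G_{n+1}\qquad\mbox{and}\qquad
\commut{H}{\pi_{n+1}}=\I\,\pi_n'.
\end{equation}
So everything reduces to solving (\ref{twoeqprop}) uniquely for $\pi_{n+1}$.

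I would then split $\pi_{n+1}$ into its $\pi_0$-diagonal part $\pi_{n+1}^{\rm d}=\pi_0\pi_{n+1}\pi_0+(1-\pi_0)\pi_{n+1}(1-\pi_0)$ and the remaining off-diagonal part $\pi_{n+1}^{\rm od}$. Since $\antcom{\pi_0}{\cdot}$ doubles the $\pi_0$-block of a diagonal matrix, annihilates its $(1-\pi_0)$-block and acts as the identity on off-diagonal matrices, the first equation in (\ref{twoeqprop}) involves only $\pi_{n+1}^{\rm d}$ and---once one knows that $G_{n+1}$ is diagonal, i.e.\ (\ref{diag})---has the unique solution $\pi_{n+1}^{\rm d}=G_{n+1}-\antcom{G_{n+1}}{\pi_0}$. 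Likewise $\commut{H}{\cdot}=\commut{\pi_0-\frac12}{\cdot}$ annihilates diagonal matrices and is invertible on off-diagonal $2\times2$ matrices; hence the second equation in (\ref{twoeqprop}) first forces the diagonal part of $\pi_n'$ to vanish, i.e.\ (\ref{offdiag}), and then determines $\pi_{n+1}^{\rm od}$ uniquely, and a one-line computation using $\commut{\pi_0}{\commut{B}{\pi_0}}=-B$ for off-diagonal $B$ identifies the solution as $\pi_{n+1}^{\rm od}=-\I\commut{\pi_n'}{\pi_0}$. Adding the two blocks gives formula (\ref{recursion1}), and uniqueness of each block gives uniqueness of $\pi_{n+1}$.

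It remains to prove (\ref{offdiag}) and (\ref{diag}), which I would fold into the same induction. For (\ref{diag}) one inserts into $G_{n+1}=\sum_{j=1}^{n}\pi_j\pi_{n+1-j}$ the decomposition $\pi_j=(G_j-\antcom{G_j}{\pi_0})+\pi_j^{\rm od}$ obtained at the previous steps and checks that the off-diagonal contributions cancel in pairs. For (\ref{offdiag}) one differentiates the approximate idempotency $(\pi^{(n)})^2=\pi^{(n)}+\Or(\eps^{n+1})$, multiplies by $\pi^{(n)}$ on both sides, and reads off the coefficient of $\eps^n$; using again $\pi^{(n)}=\pi_0+\Or(\eps)$ and the inductive hypothesis $\pi_0\pi_j'\pi_0=0=(1-\pi_0)\pi_j'(1-\pi_0)$ for $j<n$ one obtains $\pi_0\pi_n'\pi_0=0=(1-\pi_0)\pi_n'(1-\pi_0)$.

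The main---indeed essentially the only---obstacle is the bookkeeping: one has to carry the four statements (\ref{pi1}), (\ref{pi2}), (\ref{offdiag}), (\ref{diag}) for all indices up to $n$ simultaneously through the induction, because (\ref{diag}) is what makes the first equation in (\ref{twoeqprop}) solvable and (\ref{offdiag}) what makes the second one solvable. Once the induction is organized this way, each individual step is an elementary $2\times2$ matrix computation, exactly as in \cite{BeTe1}.
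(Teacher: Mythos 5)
Your overall strategy --- reducing to the pair of equations $\antcom{\pi_0}{\pi_{n+1}}-\pi_{n+1}=-G_{n+1}$ and $\commut{H}{\pi_{n+1}}=\I\pi_n'$, and solving them block by block with respect to $\pi_0$ --- is exactly what \cite{BeTe1} does, and you are right that self-adjointness of $H$ enters nowhere; the block-algebra facts you invoke (the action of $\antcom{\pi_0}{\cdot}$ and $\commut{\pi_0}{\cdot}$ on diagonal and off-diagonal matrices, and $\commut{\pi_0}{\commut{B}{\pi_0}}=-B$ for off-diagonal $B$) are all correct. However, the two sentences you devote to (\ref{diag}) and (\ref{offdiag}) --- which are precisely what makes the two equations solvable --- do not actually establish them.

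For (\ref{diag}) you claim the off-diagonal contributions to $G_{n+1}=\sum_{j=1}^n\pi_j\pi_{n+1-j}$ ``cancel in pairs.'' After symmetrizing in $j\leftrightarrow n+1-j$ these contributions become $\sum_j\antcom{\pi_{n+1-j}^{\rm d}}{\pi_j^{\rm od}}$, and anticommutators of a diagonal with an off-diagonal matrix do not vanish in general ($\antcom{\pi_0}{B}=B$ for off-diagonal $B$). The correct, shorter argument is that $(\pi^{(n)})^2-\pi^{(n)}$, being a polynomial in $\pi^{(n)}$, commutes with $\pi^{(n)}$ exactly; reading off the coefficient of $\eps^{n+1}$ and using $\pi^{(n)}=\pi_0+\Or(\eps)$ gives $\commut{G_{n+1}}{\pi_0}=0$. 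For (\ref{offdiag}) you propose to read off the $\eps^n$ coefficient of $\pi^{(n)}(\pi^{(n)})'\pi^{(n)}=\Or(\eps^{n+1})$ and invoke the inductive hypothesis $\pi_0\pi_j'\pi_0=0$ for $j<n$. But that coefficient is $\sum_{a+b+c=n}\pi_a\pi_b'\pi_c$, and the terms with $a,c>0$ are sandwiched by $\pi_a,\pi_c$ rather than by $\pi_0$, so off-diagonality of the inner $\pi_b'$ does not make them drop out; already at $n=1$ the surviving part $\pi_1\pi_0'\pi_0+\pi_0\pi_0'\pi_1$ vanishes only because of the \emph{explicit} formula $\pi_1=-\I\commut{\pi_0'}{\pi_0}$, not because $\pi_0'$ is off-diagonal. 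The argument that does work starts from the identity $\commut{\I\eps\partial_\xi-H}{\pi^{(n)}}=\I\eps^{n+1}\pi_n'$, which is an exact consequence of (\ref{pi2}), sandwiches it with $\pi^{(n)}$ on both sides, and uses $(\pi^{(n)})^2=\pi^{(n)}+\Or(\eps^{n+1})$ together with (\ref{diag}) (so that the leading coefficient $\tfrac12\commut{\pi_0}{G_{n+1}}$ of $\pi^{(n)}\commut{H}{\pi^{(n)}}\pi^{(n)}$ vanishes) to show that the sandwiched left side is $\Or(\eps^{n+2})$; reading off the $\eps^{n+1}$ coefficient then gives $\pi_0\pi_n'\pi_0=0$, and likewise for $1-\pi_0$. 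So (\ref{diag}) must be proved first and then fed into (\ref{offdiag}); with those two steps corrected, the rest of your proposal goes through.
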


As in \cite{BeTe1} we represent the
  matrix recurrence relation
(\ref{recursion1})  with respect to a special basis of
$\C^{2 \times 2}$, in order to arrive at a more tractable set of scalar recurrence relations for the coefficients. While the following matrices are different from the one chosen in \cite{BeTe1}, they satisfy algebraic relations very close to those obtained and utilized in \cite{BeTe1}. As a consequence most of our proofs require only slight adjustments as compared to those in \cite{BeTe1, BeTe2}.
Let
\[
 X = \left(\begin{array}{cc}-1 & 0 \\ 0 & 1 \end{array}\right),  \quad Y = -2 H, \quad Z = \frac{-1}{\theta'}  Y'\,,
\]
where $\theta' = \tilde p'/ \tilde p$. Together with the identity matrix $\bf 1$, $X$, $Y(\xi)$ and $Z(\xi)$ from a basis of $\R^{2 \times 2}$ for every $\xi\in\R$.
The following  relations now follow without effort:
\begin{eqnarray}
&&  X' = 0, \quad  Y' = -  \theta' Z, \quad  Z' =  -\theta' Y, \label{derivatives} \\
&& \antcom XY = \antcom XZ = \antcom YZ = 0,\quad  X^2 = Y^2 = -Z^2 = {\bf 1}, \label{products} \\
&& \commut{X}{\pi_0} = Z, \quad \commut{Y}{\pi_0} = 0, \quad  \commut{Z}{\pi_0} = X, \label{commutators} \\
&& {\bf 1} - \antcom{{\bf 1}}{\pi_0} = Y. \label{W equation}
\end{eqnarray}
The relations (\ref{derivatives})--(\ref{W equation}) agree almost exactly with the corresponding relations derived and used in \cite{BeTe1}. The only differences are a sign change in the last equation in (\ref{derivatives}) and in the last equation in (\ref{products}). 
These relations show that this basis behaves extremely well under
the operations involved in the recursion (\ref{recursion1}). This
enables us to obtain

\begin{proposition} \label{function recursion}
For all $n \in \N$, $\pi_n$ is of the form
\begin{equation} \label{xyz ansatz}
\pi_n = x_nX + y_nY + z_nZ,
\end{equation}
where the functions $x_n, y_n $ and $z_n$ solve the scalar recursion
\begin{eqnarray}
 x_{n+1} & = &  -\I z_n' +\I \theta' y_n\label{xRec}\\
 y_{n+1} & = &  \sum_{j=1}^n \left( x_j x_{n+1-j} +y_j y_{n+1-j}    -z_j z_{n+1-j}   \right)\label{yRec}\\
 z_{n+1} & = &  -\I x_n'\,,\label{zRec}
\end{eqnarray}
and satisfy the differential
equations
\begin{eqnarray}
 x'_n & = &  \I  z_{n+1}, \label{E1a}\\
 y'_n & = &  \theta' z_n, \label{E1b}\\
z'_n  & = &   \I  x_{n+1} + \theta'  y_n.  \label{E1c}
\end{eqnarray}
Moreover,
\begin{equation} \label{functions recursion start}
 x_1(\xi) = \frac{\I}{2}\theta'(\xi), \quad y_1(\xi) = z_1(\xi) = 0.
\end{equation}
\end{proposition}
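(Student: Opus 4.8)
The plan is to prove all four assertions of the proposition — the ansatz (\ref{xyz ansatz}), the recursion (\ref{xRec})--(\ref{zRec}), the differential equations (\ref{E1a})--(\ref{E1c}), and the initial values (\ref{functions recursion start}) — by a single induction on $n$, pushing the matrix recursion (\ref{recursion1}) of Proposition \ref{general scheme} through the frame relations (\ref{derivatives})--(\ref{W equation}). The structural point that makes everything go through as in \cite{BeTe1} is that every $\pi_n$ with $n\ge 1$ has vanishing $\mathbf{1}$-component, so it is encoded by its three coordinates $(x_n,y_n,z_n)$ in the frame $\{X,Y,Z\}$, and this frame is stable under all operations occurring in (\ref{recursion1}).

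For the base case, the sum (\ref{G calculated}) defining $G_1$ is empty, so (\ref{recursion1}) reads $\pi_1=-\I\commut{\pi_0'}{\pi_0}$. Since $\pi_0=\pi_+$, differentiating and using $Y'=-\theta'Z$ from (\ref{derivatives}) gives $\pi_0'=\frac{\theta'}{2}Z$, and then $\commut{Z}{\pi_0}=X$ from (\ref{commutators}) shows $\pi_1$ is a multiple of $X$; this is precisely (\ref{functions recursion start}), with $y_1=z_1=0$. For the inductive step I would assume $\pi_k=x_kX+y_kY+z_kZ$ for all $k\le n$ and evaluate the three pieces of (\ref{recursion1}). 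First, in $G_{n+1}=\sum_{j=1}^n\pi_j\pi_{n+1-j}$ both indices run over $\{1,\dots,n\}$, so the sum is symmetric under $j\mapsto n+1-j$ and equals $\frac12\sum_j\antcom{\pi_j}{\pi_{n+1-j}}$; by $\antcom XY=\antcom XZ=\antcom YZ=0$ and $X^2=Y^2=-Z^2=\mathbf{1}$ from (\ref{products}), all cross terms drop and $G_{n+1}=g_{n+1}\mathbf{1}$ with $g_{n+1}=\sum_{j=1}^n(x_jx_{n+1-j}+y_jy_{n+1-j}-z_jz_{n+1-j})$ — exactly the right-hand side of (\ref{yRec}), which also recovers (\ref{diag}). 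Hence $G_{n+1}-\antcom{G_{n+1}}{\pi_0}=g_{n+1}(\mathbf{1}-2\pi_0)=g_{n+1}Y$ by (\ref{W equation}). Next, differentiating $\pi_n$ with $X'=0$, $Y'=-\theta'Z$, $Z'=-\theta'Y$ gives $\pi_n'=x_n'X+(y_n'-\theta'z_n)Y+(z_n'-\theta'y_n)Z$, whence $\commut{\pi_n'}{\pi_0}=x_n'Z+(z_n'-\theta'y_n)X$ by (\ref{commutators}) (the $Y$-term drops since $\commut{Y}{\pi_0}=0$). Substituting these into (\ref{recursion1}) shows $\pi_{n+1}\in\mathrm{span}\{X,Y,Z\}$ and, reading off the coefficients of $X$, $Y$, $Z$, gives precisely (\ref{xRec}), (\ref{yRec}), (\ref{zRec}), closing the induction.

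It then remains to extract the differential equations. Two of them are immediate: (\ref{E1a}) is (\ref{zRec}) solved for $x_n'$, and (\ref{E1c}) is (\ref{xRec}) solved for $z_n'$. For (\ref{E1b}) — the one spot where more than the recursion is needed — I would invoke (\ref{offdiag}), that $\pi_n'$ is off-diagonal with respect to $\pi_0$. From $\commut{Y}{\pi_0}=0$ the matrix $Y$ is purely block-diagonal, while $X$ and $Z$ are purely block-off-diagonal (for instance $\pi_0X\pi_0+(\mathbf{1}-\pi_0)X(\mathbf{1}-\pi_0)=\frac12(X+YXY)=0$, using $\antcom XY=0$ and $Y^2=\mathbf{1}$ so that $YXY=-X$, and likewise for $Z$). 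Therefore the $Y$-coordinate of $\pi_n'$ computed above, namely $y_n'-\theta'z_n$, must vanish, which is (\ref{E1b}).

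I expect the main difficulty to be organisational rather than conceptual. The computation mirrors the one in \cite{BeTe1}, but the two sign flips recorded after (\ref{W equation}) — in the last relation of (\ref{derivatives}) and in $-Z^2=\mathbf{1}$ of (\ref{products}) — must be carried through consistently; they are exactly what turns the self-adjoint case's $+z_jz_{n+1-j}$ into the $-z_jz_{n+1-j}$ of (\ref{yRec}) and flips a sign in (\ref{commutators}). The only genuinely structural ingredient, beyond pushing the frame relations through (\ref{recursion1}), is the use of (\ref{offdiag}) to pin down (\ref{E1b}); everything else is forced step by step by (\ref{derivatives})--(\ref{W equation}).
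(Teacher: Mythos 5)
Your proof is correct and takes essentially the same route as the paper's: substitute the ansatz (\ref{xyz ansatz}) into the recursion (\ref{recursion1}), push it through the frame relations (\ref{derivatives})--(\ref{W equation}) to read off the coefficients (\ref{xRec})--(\ref{zRec}), and extract (\ref{E1b}) from the off-diagonality property (\ref{offdiag}). The only departures are cosmetic — you symmetrize the sum in $G_{n+1}$ to make the cross-term cancellation explicit, and you argue via block-diagonality of $Y$ versus block-off-diagonality of $X,Z$ where the paper computes $\pi_0\pi_n'\pi_0$ directly — and both fill in exactly the steps the paper leaves to the reader.
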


\begin{remark}
Hence, for all even $n$, $x_n = 0$, while for all odd $n$, $y_n =
z_n = 0$. Moreover, $x_n$ is always purely imaginary, while $y_n$ and $z_n$ are real. 
\end{remark}

\begin{proof} For $n=0$ we have $G_1=0$ and thus 
\[
\pi_1 = -\I [\pi_0',\pi_0] = \frac{\I}{2}\theta' [Z,\pi_0] = \frac{\I}{2}\theta' X\,,
\]
which shows (\ref{functions recursion start}). For the rest we just have to stick (\ref{xyz ansatz}) into the recurrence (\ref{recursion1}) and use the relations (\ref{derivatives})--(\ref{W equation}). For $G_{n+1}$ we use (\ref{products}) to obtain
\begin{eqnarray*}
G_{n+1} &=& \sum_{j=1}^n \left(x_jX+ y_jY+z_jZ  \right)\left( x_{n+1-j}X+ y_{n+1-j}Y+z_{n+1-j}Z  \right)\\&=&
 \sum_{j=1}^n \left( x_j x_{n+1-j} + y_jy_{n+1-j}-z_jz_{n+1-j}\right) {\bf 1}\,. 
\end{eqnarray*}
With (\ref{W equation}) this yields
\[
G_{n+1}- \antcom{G_{n+1}}{\pi_0} = \sum_{j=1}^n \left( x_j x_{n+1-j} + y_jy_{n+1-j}-z_jz_{n+1-j}\right) Y\,.
\]
Next we compute with (\ref{derivatives}) that
\begin{equation}\label{pinprime}
\pi_n' = x_n' X + y_n' Y + z_n' Z - \theta' y_n Z - \theta' z_n Y\,,
\end{equation}
which implies with (\ref{commutators}) that
\[
[\pi_n',\pi_0] = x_n' Z + z_n'X - \theta'y_n X\,.
\]
In summary we obtain
\[
\pi_{n+1} =  \sum_{j=1}^n \left( x_j x_{n+1-j} + y_jy_{n+1-j}-z_jz_{n+1-j}\right) Y - \I \left(
x_n' Z + z_n'X - \theta'y_n X\right)\,,
\]
from which (\ref{xRec})--(\ref{zRec}) can be read off. While (\ref{E1a}) and (\ref{E1c}) follow directly from (\ref{xRec}) and (\ref{zRec}), for (\ref{E1b}) we observe that 
\[
0=\pi_0\pi_n'\pi_0 = (-y_n' + \theta' z_n)\pi_0\,,
\]
where the first equality is  (\ref{offdiag}) and for the second equality we use $\pi_0 X \pi_0 = \pi_0 Z \pi_0 =0$ (a consequence of (\ref{commutators})) and $\pi_0 Y\pi_0 = -\pi_0$ (a consequence of (\ref{W equation})) in (\ref{pinprime}).
 \qedi
\end{proof}

\begin{remark}
From  (\ref{E1a}) through (\ref{E1c}) we may derive recursions for
calculating $x_n$ or $z_n$, i.e.\
\begin{equation} \label{xn recursion with integration constant}
 x_{n+2}(\xi) = -   x''_n(\xi) +  \theta'(\xi)  \left( \int \theta'(\xi) x'_n(\xi) \, \D \xi + C \right) .
\end{equation}
and
\begin{equation} \label{zn recursion with integration constant}
 z_{n+2}(\xi) = - \frac{\D}{\D \xi} \left( z'_n(\xi) -  \theta'(\xi)  \left( \int \theta'(\xi) z_n(\xi) \, \D \xi + C \right) \right).
\end{equation}
The constant of integration $C$ must (and in some cases can) be
determined by comparison with (\ref{xRec})--(\ref{zRec}). In the case
where $\theta'$ is given as in Assumption 2, this strategy will
lead to fairly explicit expressions of the leading order of the coefficient functions
$x_n$, $y_n$ and $z_n$, cf.\ Proposition~\ref{z_n and recursion};
from these we will extract the asymptotic behaviour of $x_n$, $y_n$
and $z_n$.
\end{remark}

Using (\ref{E1a})--(\ref{E1c}), we can give very simple
expressions for the quantities appearing in (\ref{pi1}) and
(\ref{pi2}). As for (\ref{pi2}), we use (\ref{derivatives}) and
the differential equations to find
\begin{equation}
  \commut{\I\eps \partial_\xi - H}{\pi^{(n)}} = \I\eps^{n+1} \pi_n'
  = - \eps^{n+1} (z_{n+1} X + x_{n+1} Z) \label{pi_n'}.
 \end{equation}
The term by which
$\pi^{(n)}$ fails to be a projector is controlled in the following proposition.
\begin{proposition} \label{proj correction}
For each $n \in \N$, there exist functions $g_{n+1,k}$,  $k \leq n$,
with
\begin{equation} \label{projector error}
((\pi^{(n)})^2 - \pi^{(n)})(\xi) = \left(\sum_{k=1}^{n} \eps^{n+k}
g_{n+1,k}(\xi)\right) {\bf 1}.
\end{equation}
For each $k \leq n$,
$$g'_{n+1,k}  =  2 \I ( x_k  z _{n+1} -  z_k  x_{n+1}).$$
\end{proposition}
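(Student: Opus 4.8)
The whole assertion is a computation inside the four--dimensional matrix algebra spanned by $\mathbf{1},X,Y,Z$, using only the algebraic relations (\ref{derivatives})--(\ref{W equation}) together with the differential relations (\ref{E1a})--(\ref{E1c}). The key observation is that $\pi^{(n)}-\frac12\mathbf{1}$ lies in the span of $X,Y,Z$: indeed $\pi_0=\frac12(\mathbf{1}-Y)$ by (\ref{W equation}), and each $\pi_k$ with $k\geq 1$ equals $x_kX+y_kY+z_kZ$ by Proposition~\ref{function recursion}. So I would write $\pi^{(n)}=\frac12\mathbf{1}+N$ with $N=n_XX+n_YY+n_ZZ$, where $n_X=\sum_{k=1}^n\eps^kx_k$, $n_Z=\sum_{k=1}^n\eps^kz_k$ and $n_Y=-\frac12+\sum_{k=1}^n\eps^ky_k$. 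Because $\antcom{X}{Y}=\antcom{X}{Z}=\antcom{Y}{Z}=0$ and $X^2=Y^2=-Z^2=\mathbf{1}$ by (\ref{products}), every traceless combination of $X,Y,Z$ squares to a multiple of $\mathbf{1}$; explicitly $N^2=(n_X^2+n_Y^2-n_Z^2)\mathbf{1}$, whence
\[
(\pi^{(n)})^2-\pi^{(n)}=\left(\frac12\mathbf{1}+N\right)^2-\left(\frac12\mathbf{1}+N\right)=N^2-\frac14\mathbf{1}=:d(\eps,\xi)\,\mathbf{1}.
\]
As $N$ has degree $n$ in $\eps$, $d$ is a polynomial in $\eps$ of degree at most $2n$, and by (\ref{pi1}) it vanishes to order $\eps^{n+1}$ (equivalently, the coefficients of $\eps^m$ for $1\leq m\leq n$ cancel on account of (\ref{yRec})). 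Hence $d=\sum_{k=1}^n\eps^{n+k}g_{n+1,k}$, which \emph{defines} the functions $g_{n+1,k}$ and establishes (\ref{projector error}); comparison with (\ref{G calculated}) identifies $g_{n+1,1}$ with the scalar part of $G_{n+1}$.

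For the differential equation I would differentiate $D^{(n)}:=(\pi^{(n)})^2-\pi^{(n)}=d\,\mathbf{1}$ in $\xi$. Since $H$ commutes with scalar matrices, $\commut{\I\eps\partial_\xi-H}{d\,\mathbf{1}}=\I\eps\,d'\,\mathbf{1}$. On the other hand, the Leibniz rule and (\ref{pi_n'}) give
\[
\commut{\I\eps\partial_\xi-H}{D^{(n)}}=\antcom{\commut{\I\eps\partial_\xi-H}{\pi^{(n)}}}{\pi^{(n)}}-\commut{\I\eps\partial_\xi-H}{\pi^{(n)}}=\I\eps^{n+1}\bigl(\antcom{\pi_n'}{\pi^{(n)}}-\pi_n'\bigr).
\]
Because $\antcom{\pi_n'}{\frac12\mathbf{1}}=\pi_n'$, the bracket collapses to $\antcom{\pi_n'}{N}$; and since $\pi_n'=\I(z_{n+1}X+x_{n+1}Z)$ by (\ref{pi_n'}), the relations (\ref{products}) yield $\antcom{\pi_n'}{N}=2\I(z_{n+1}n_X-x_{n+1}n_Z)\mathbf{1}$. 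Equating the two expressions for $\commut{\I\eps\partial_\xi-H}{D^{(n)}}$ gives $d'=2\I\,\eps^n\sum_{i=1}^n\eps^i(z_{n+1}x_i-x_{n+1}z_i)$, and matching the coefficient of $\eps^{n+k}$ with that in $d=\sum_k\eps^{n+k}g_{n+1,k}$ produces $g_{n+1,k}'=2\I(x_kz_{n+1}-z_kx_{n+1})$, as asserted.

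The argument is purely algebraic, so I do not anticipate a genuine obstacle. The two points needing care are the bookkeeping of powers of $\eps$ --- making sure that $d$ really begins at order $\eps^{n+1}$ and that the $g_{n+1,k}$ are indexed as in the statement --- and the verification that $\antcom{\pi_n'}{\pi^{(n)}}-\pi_n'$ collapses to a scalar multiple of $\mathbf{1}$; both become transparent once $\pi^{(n)}$ is split as $\frac12\mathbf{1}+N$ with $N$ in the span of $X,Y,Z$ and the anticommutation relations in (\ref{products}) are invoked.
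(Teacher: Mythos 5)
Your proof is correct, and for the derivative identity it follows a genuinely different route from the paper's. To establish that $(\pi^{(n)})^2-\pi^{(n)}$ is scalar, both arguments rest on the same algebra from (\ref{products}) and the $\pi^{(n)}=\frac12\mathbf{1}+N$ decomposition is a compact way of saying what the paper unpacks term by term via $\sum_{j}\pi_{k+j}\pi_{n-j}$; here you coincide with the paper in substance. But for $g_{n+1,k}'$ you compute $\commut{\I\eps\partial_\xi-H}{D^{(n)}}$ with the Leibniz rule, exploit that the scalar $D^{(n)}=d\,\mathbf{1}$ commutes with $H$ so only $\I\eps\,d'\,\mathbf{1}$ survives on one side, and collapse the other side to $\I\eps^{n+1}\antcom{\pi_n'}{N}$ via (\ref{pi_n'}) and the anticommutation relations. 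The paper instead differentiates the explicit sum $g_{n+1,k}=\sum_{j}(x_{k+j}x_{n-j}+y_{k+j}y_{n-j}-z_{k+j}z_{n-j})$ termwise using (\ref{E1a})--(\ref{E1c}), watches the $\theta'$ contributions cancel, and telescopes the remaining $\I$-terms. Your approach is more structural: it makes manifest that the result is forced by the commutator identity (\ref{pi_n'}) plus the scalar nature of $D^{(n)}$, and it bypasses the telescoping altogether. The paper's version is more hands-on but requires noticing the cancellation; yours buys conceptual clarity at no extra cost. One small remark: you invoke (\ref{yRec}) to justify that $d$ starts at order $\eps^{n+1}$; this is correct, but it is cleaner to simply cite (\ref{pi1}), which was already established in Proposition~\ref{general scheme} and is exactly the statement that those lower-order coefficients vanish.
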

\begin{proof}
Recalling (\ref{ansatz}) we have that 
\[
(\pi^{(n)})^2 - \pi^{(n)} = \sum_{k=1}^{n} \eps^{n+k} \sum_{j=0}^{n-k} \pi_{k+j}\pi_{n-j}\,.
\]
As before we compute
\begin{eqnarray*}
 \sum_{j=0}^{n-k} \pi_{k+j}\pi_{n-j} &=& \sum_{j=0}^{n-k} \left(x_{k+j}X+ y_{k+j}Y+z_{k+j}Z  \right)\left( x_{n-j}X+ y_{n-j}Y+z_{n-j}Z  \right)\\&=&
 \sum_{j=0}^{n-k}  \left( x_{k+j} x_{n-j} + y_{k+j}y_{n-j}-z_{k+j}z_{n-j}\right) {\bf 1} =: g_{n+1,k} {\bf 1}\,. 
\end{eqnarray*}
  With Proposition \ref{function recursion},
\begin{eqnarray*}
g'_{n+1,k} & = & \sum_{j=0}^{n-k} \I (  z_{k+j+1}  x_{n-j} +
x_{k+j}  z_{n-j+1} ) +
( \theta'  z_{k+j}  y_{n-j} +  \theta'  y_{k+j}  z_{n-j}) + \\
&&  -\, \I (  x_{k+j+1}  z_{n-j} +  z_{k+j}  x_{n-j+1} )  - (\theta'  y_{k+j}  z_{n-j} +  \theta'  z_{k+j}  y_{n-j})= \\
& = & \I \sum_{j=0}^{n-k} (   z_{k+j+1}  x_{n-j} - z_{k+j}  x_{n-j+1}  +
x_{k+j}  z_{n-j+1} -  x_{k+j+1}  z_{n-j} )   = \\
& = & 2 \I ( x_k z_{n+1} -  z_k  x_{n+1} ).
\end{eqnarray*}
The last equality follows because the sum is a telescopic sum.
\qedi
\end{proof}

Since ${\mathbf 1} = \id$ is independent of $t$, Proposition \ref{proj
correction} gives the derivative of the correction $(\pi^{(n)})^2
- \pi^{(n)}$ to a projector. As above, this gives an easy way for
estimating the correction itself provided we have some clue how to
choose the constant of integration.

Our next task will be to construct, from the superadiabatic projections $\pi^{(n)}$, the transformation  $T_\eps^n$
into the $n^{\rm th}$ superadiabatic representation. 
By (\ref{ansatz}) and
(\ref{recursion1}), $\pi^{(n)}$ is almost a projection. Thus it has two eigenvectors $v_n$ and $w_n$. Let
$$v_0 = \left( \begin{array}{c} \frac{1}{\sqrt{p}}\\\sqrt{p} \end{array} \right), \quad
w_0 = \left( \begin{array}{c}  \frac{1}{\sqrt{p}}\\ -\sqrt{p} 
\end{array} \right)$$ be the eigenvectors of $\pi_0$, and write
\begin{equation} \label{v_n ansatz}
 v_n = \alpha v_0 + \beta w_0, \quad w_n = \overline{\alpha} w_0 - \overline{\beta} v_0 \qquad (\alpha, \beta \in \C).
 \end{equation}
We make this representation unique by requiring that $0 \leq \alpha \in
\R$ and $\alpha^2 +|\beta|^2= 1$, where the latter condition ensures that the determinant of $T_\eps^n$ is constant equal to $2$, which is the value for $T_0$. 

Let $T_\eps$ be the linear operator taking  $(v_n,w_n)$ to
the standard basis $(e_1,e_2)$ of $\R^2$ , i.e.
\begin{equation} \label{Tminus}
  T_\eps^{ -1} =  (v_n, w_n) =  \left( \begin{array}{cc} \frac{\alpha+\beta}{\sqrt{p}}
  & \frac{\alpha-\overline\beta}{\sqrt{p}} \\
  (\alpha-\beta) \sqrt{p} & - (\alpha+\overline \beta) \sqrt{p} \end{array} \right)
  \end{equation}
  and
  \begin{equation} \label{T}
  T_\eps  = \frac{1}{2} \left( \begin{array}{cc} (\alpha+  \overline\beta) \sqrt{p} 
  & \frac{\alpha-\overline\beta}{\sqrt{p}} \\
  (\alpha-\beta) \sqrt{p} & - \frac{\alpha+\beta}{\sqrt{p}}
 \end{array} \right)\,.
  \end{equation}
where all vectors are column vectors.
$T_\eps$ diagonalizes $\pi^{(n)}$, thus
\begin{equation} \label{diag pi_n}
  T_\eps \pi^{(n)} T_\eps^{ -1} = D \equiv \left( \begin{array}{cc} \lambda_1 & 0 \\
        0 & \lambda_2 \end{array} \right),
\end{equation}
where $\lambda_{1,2}$ are the eigenvalues of $\pi^{(n)}$. Although
$\alpha, \beta$ and $\lambda_{1,2}$ depend on $n$, $\eps$ and $\xi$,
we suppress this from the notation.

\begin{lemma} \label{T errors}
$$ T_0 T_\eps^{-1} = \left( \begin{array}{cc} \alpha & -\overline{\beta} \\
                                \beta & \alpha \end{array} \right),  \quad
    \mbox{and} \quad T_0 (T_\eps^{-1})' =
    \left( \begin{array}{cc}
    \alpha'  & -\overline\beta' \\
    \beta'  & \alpha'  
     \end{array} \right)-\frac{1}{2}\theta'
       \left( \begin{array}{cc}
    \beta &  \alpha\\
    \alpha & - \overline\beta
     \end{array} \right)
      $$
\end{lemma}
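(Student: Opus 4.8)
The plan is to prove both identities by direct $2\times 2$ matrix multiplication, using the explicit formula (\ref{Tminus}) for $T_\eps^{-1}$ and the formula (\ref{T}) for $T_\eps$ with $\alpha=1$, $\beta=0$, which is $T_0$. The only analytic input needed is that, since $p=\tilde p(\xi)$ and $\theta'=\tilde p'/\tilde p$, one has $(\sqrt p\,)'=\frac{1}{2}\theta'\sqrt p$ and $(1/\sqrt p\,)'=-\frac{1}{2}\theta'/\sqrt p$, the prime throughout denoting $\partial_\xi$.

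For the first identity I would simply multiply $T_0$ by $T_\eps^{-1}$. In each of the four entries the scalar factors $\sqrt p$ and $1/\sqrt p$ cancel, leaving $\frac{1}{2}$ times a sum or difference of $\alpha\pm\beta$ and $\alpha\pm\overline\beta$, which collapses to exactly the first matrix asserted in the statement. (One also reads off that $T_0 T_\eps^{-1}$ is the norm-preserving $2\times2$ matrix one expects from the normalisation $\alpha\ge 0$, $\alpha^2+|\beta|^2=1$.)

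For the second identity I would differentiate (\ref{Tminus}) entrywise. The first and second rows of $T_\eps^{-1}$ carry the scalar factors $1/\sqrt p$ and $\sqrt p$, whose logarithmic derivatives are $-\frac{1}{2}\theta'$ and $+\frac{1}{2}\theta'$; hence, by the product rule,
\[
(T_\eps^{-1})' \;=\; \widetilde A \;+\; \frac{1}{2}\,\theta'\, X\, T_\eps^{-1},
\]
where $\widetilde A$ is the matrix (\ref{Tminus}) with $\alpha,\beta,\overline\beta$ replaced by $\alpha',\beta',\overline\beta'$, and $X=\mr{diag}(-1,1)$ is the matrix introduced in Section~\ref{recursion}. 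Multiplying on the left by $T_0$ gives $T_0(T_\eps^{-1})' = T_0\widetilde A + \frac{1}{2}\theta'\,(T_0 X)\,T_\eps^{-1}$. The term $T_0\widetilde A$ is computed exactly as in the first identity and produces the first of the two matrices appearing in the statement. For the remaining term one computes $T_0 X\, T_\eps^{-1}$ directly: once more the powers of $\sqrt p$ cancel, and the outcome is the negative of the second matrix in the statement; multiplied by $\frac{1}{2}\theta'$ this is precisely the correction term claimed.

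There is no genuine obstacle here: this is a short and completely explicit finite computation. The two points that need a little care are keeping the complex conjugates on $\beta$ in their correct places — the first and second columns of $T_\eps^{-1}$ involve $\beta$ and $\overline\beta$ respectively, an asymmetry inherited from the parametrisation (\ref{v_n ansatz}) of $v_n$ and $w_n$ — and remembering that the prime denotes $\partial_\xi$, so that $p$, and hence $\theta'$, enters the derivative only through $(\sqrt p\,)'$ and $(1/\sqrt p\,)'$.
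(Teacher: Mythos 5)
Your proposal is correct and coincides with the paper's own proof, which is precisely the direct computation you describe: differentiate (\ref{Tminus}) using the product rule (the paper writes the prefactor as $p'/(2p)$, which equals $\tfrac12\theta'$ since $\theta'=\tilde p'/\tilde p$) and multiply by $T_0$ from (\ref{T0}). Your observation that the derivative correction is $\tfrac12\theta'\,X\,T_\eps^{-1}$ is a slightly cleaner way to organize the bookkeeping, but it is the same finite calculation.
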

\begin{proof}
Just compute, using
  \begin{equation} \label{T0}
  T_0   = \frac{1}{2} \left( \begin{array}{cc}  \sqrt{p} 
  & \frac{1}{\sqrt{p}} \\
   \sqrt{p} & - \frac{1}{\sqrt{p}}
 \end{array} \right) 
  \end{equation}
  and 
   \[
  (T_\eps^{-1})' =  \left( \begin{array}{cc} \frac{\alpha'+\beta'}{\sqrt{p}}
  & \frac{\alpha'-\overline\beta'}{\sqrt{p}} \\
  (\alpha'-\beta') \sqrt{p} & - (\alpha'+\overline \beta') \sqrt{p} \end{array} \right) + \frac{p'}{2p} \left( \begin{array}{cc} -\frac{\alpha+\beta}{\sqrt{p}}
  & -\frac{\alpha-\overline\beta}{\sqrt{p}} \\
  (\alpha-\beta) \sqrt{p} & - (\alpha+\overline \beta) \sqrt{p} \end{array} \right)\,.
  \]
\end{proof}

It will turn out that $\beta, \alpha' \alpha$, and $\beta'$  are
small  quantities, $\lambda'_1, \lambda'_2$, and $\lambda_2$ are
even much smaller, while $\alpha^2$ and $\lambda_1$ are large,
i.e. of order $1 + \Or(\eps)$. This motivates the form in which we
present the following result.

\begin{proposition} \label{general transformed matrix}
Suppose $\lambda_1 \neq \lambda_2$. Then for each $n \in \N$,
$$ T_\eps  (\I \eps \partial_\xi - H) T_\eps^{-1} =  \I \eps \partial_\xi -
\left( \begin{array}{cc}  \frac{1}{2} &  \frac{\alpha^2 \eps^{n+1}}{\lambda_1 - \lambda_2} (x_{n+1} - z_{n+1}) \\
        \frac{\alpha^2 \eps^{n+1}}{\lambda_1 -\lambda_2} (x_{n+1} +z_{n+1}) & -\frac{1}{2} \end{array} \right) + R,$$
with
$$ R = {
\left( \begin{array}{cc}  |\beta|^2 - \eps (\tIm (\overline{\beta} \beta') + \I \theta'\alpha\tRe\beta) &
 \frac{\eps^{n+1} \overline{\beta}^2}{\lambda_1 - \lambda_2} (x_{n+1} + z_{n+1})  \\
        \frac{\eps^{n+1} \beta^2}{\lambda_1 -\lambda_2} (x_{n+1} - z_{n+1})  &
         -|\beta|^2 +  \eps (\tIm (\overline{\beta} \beta') + \I \theta'\alpha\tRe\beta) \end{array} \right)}.$$
\end{proposition}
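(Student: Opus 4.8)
The plan is to compute $T_\eps(\I\eps\partial_\xi - H)T_\eps^{-1}$ directly by splitting it as $\I\eps\,T_\eps\partial_\xi T_\eps^{-1} - T_\eps H T_\eps^{-1}$ and treating the two pieces separately, using Lemma \ref{T errors} at every turn. First I would handle the potential-like term $-T_\eps H T_\eps^{-1}$: since $Y = -2H$ and $\pi^{(n)} = \sum_k \eps^k\pi_k$ with $\pi_0 = \pi_+$, one can write $H = \tfrac12(\pi_+ - \pi_-) = \tfrac12(2\pi_0 - \id)$, and more usefully express $H$ in terms of $\pi^{(n)}$ up to an $\Or(\eps^{n+1})$ error controlled by Proposition \ref{proj correction}. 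Diagonalising via \eqref{diag pi_n}, $T_\eps\pi^{(n)}T_\eps^{-1} = D = \mathrm{diag}(\lambda_1,\lambda_2)$ with $\lambda_1 = 1 + \Or(\eps)$, $\lambda_2 = \Or(\eps)$. The off-diagonal part of $T_\eps H T_\eps^{-1}$ then comes exactly from the failure of $\pi^{(n)}$ to commute with $\I\eps\partial_\xi - H$, i.e.\ from \eqref{pi_n'}: $[\I\eps\partial_\xi - H, \pi^{(n)}] = -\eps^{n+1}(z_{n+1}X + x_{n+1}Z)$. Conjugating this identity by $T_\eps$ and using the algebra of $X$, $Z$ together with \eqref{v_n ansatz}, one reads off that the off-diagonal entries of $T_\eps(\I\eps\partial_\xi - H)T_\eps^{-1}$ carry the prefactor $\alpha^2/(\lambda_1-\lambda_2)$ on the leading $(v_0,w_0)$-components and $\overline\beta^2/(\lambda_1-\lambda_2)$ resp.\ $\beta^2/(\lambda_1-\lambda_2)$ on the cross components, with $x_{n+1}\mp z_{n+1}$ as indicated.

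Next I would handle the derivative term $\I\eps\, T_\eps\partial_\xi T_\eps^{-1} = \I\eps\, T_\eps(T_\eps^{-1})' + \I\eps\partial_\xi$. Factoring $T_\eps(T_\eps^{-1})' = (T_0 T_\eps^{-1})^{-1}\big(T_0(T_\eps^{-1})'\big)$ and substituting the two explicit $2\times2$ matrices from Lemma \ref{T errors}, this becomes a finite matrix computation. Using the normalisation $\alpha \ge 0$, $\alpha^2 + |\beta|^2 = 1$ — which gives $\alpha\alpha' + \tRe(\overline\beta\beta') = 0$, hence $\alpha' = -\tRe(\overline\beta\beta')/\alpha$ is small — one expands the inverse $(T_0T_\eps^{-1})^{-1}$ and collects terms. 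The diagonal entries produce $\Or(1)$ contributions only through the $-\tfrac12\theta'$ block, but those are purely multiplied by small quantities $\beta,\overline\beta$, so they land in $R$ as the $\mp(|\beta|^2 - \eps(\ldots))$ terms; the $\I\eps\,T_\eps(T_\eps^{-1})'$ piece contributes exactly the $-\eps\,\tIm(\overline\beta\beta')$ and the $-\I\eps\theta'\alpha\tRe\beta$ corrections. Combining with the $\pm\tfrac12$ coming from $-T_\eps H T_\eps^{-1}$ at leading order (since $H$ is diagonalised to $\mathrm{diag}(\tfrac12,-\tfrac12)$ modulo the projector error and the commutator term already accounted for), one assembles the claimed $2\times2$ matrix plus $R$.

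The main obstacle I anticipate is the careful bookkeeping of which small quantities ($\beta$, $\alpha'$, $\beta'$, $\lambda_2$, $\lambda_1'$, $\lambda_2'$) multiply which $\Or(1)$ quantities, so that every term is assigned either to the main matrix or to the remainder $R$ with the stated algebraic form — in particular verifying that no $\Or(\eps^{n+1})$ off-diagonal contribution is missed and that the projector-defect terms $g_{n+1,k}$ from Proposition \ref{proj correction} do not leak into the off-diagonal part. A secondary subtlety is correctly propagating the relation $\pi_0 X\pi_0 = \pi_0 Z\pi_0 = 0$ and $\pi_0 Y\pi_0 = -\pi_0$ through the conjugation so that the off-diagonal structure $x_{n+1}\mp z_{n+1}$ with signs matching the $(1,2)$ versus $(2,1)$ entries comes out correctly; this is exactly where the representation \eqref{xyz ansatz} together with \eqref{commutators} does the work, and it is essentially the same computation as the corresponding step in \cite{BeTe1}, so I would invoke that parallel to keep the argument short.
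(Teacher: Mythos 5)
Your plan follows the paper's own proof: the diagonal entries are obtained by the same direct matrix multiplication using Lemma \ref{T errors} together with $T_0 H T_0^{-1}=\mathrm{diag}(\tfrac12,-\tfrac12)$ and $\alpha^2+|\beta|^2=1$, and the off-diagonal entries come from conjugating the commutator identity \eqref{pi_n'} through the diagonalisation \eqref{diag pi_n} and then computing $T_\eps X T_\eps^{-1}$, $T_\eps Z T_\eps^{-1}$ explicitly, exactly as in the text. The only slips in your write-up --- the opening remark that $H$ can be expressed via $\pi^{(n)}$ up to $\Or(\eps^{n+1})$ (the difference $\pi^{(n)}-\pi_0$ is only $\Or(\eps)$; Proposition \ref{proj correction} controls $(\pi^{(n)})^2-\pi^{(n)}$, not this), and the framing suggesting the commutator trick targets the off-diagonal of $T_\eps H T_\eps^{-1}$ alone rather than of the full operator $T_\eps(\I\eps\partial_\xi-H)T_\eps^{-1}$ --- are not actually used once you invoke \eqref{pi_n'} and \eqref{diag pi_n}, so they do not affect the validity of the argument.
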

\begin{proof}
Let us write $T_\eps  (\I \eps \partial_\xi - H) T_\eps^{-1} =
(M_{i,j})$, $i,j \in \{1,2\}$. $M_{1,1}$ and $M_{2,2}$ are
calculated in a straightforward manner, using Lemma \ref{T errors}
together with the
 fact $T_0 H T_0^{-1} ={\rm diag}(\frac{1}{2},-\frac{1}{2})$:
\begin{eqnarray*}
 T_\eps  (\I \eps \partial_\xi - H) T_\eps^{-1}  &= & \I
\eps
\partial_\xi + \I \eps T_\eps  T_0^{-1} T_0 (T_\eps^{-1})' -
T_\eps  T_0^{-1} T_0 H T_0^{-1} T_0 T_\eps^{-1} = \\
&=& \I \eps \partial_\xi + \I \eps
    \left( \begin{array}{cc} \alpha & \overline{\beta} \\
                                -\beta & \alpha \end{array} \right)
    \left( \begin{array}{cc}
    \alpha' -\frac{1}{2}\theta'\beta & -\overline\beta'-\frac{1}{2}\theta'\alpha \\
    \beta' -\frac{1}{2}\theta' \alpha& \alpha'  +\frac{1}{2}\theta'\overline\beta
     \end{array} \right) 
                  \\ & & \, -
    \frac{1}{2} \left( \begin{array}{cc} \alpha & \overline{\beta} \\
                -\beta & \alpha \end{array} \right)
    \left( \begin{array}{cc} 1 & 0 \\
        0 & -1 \end{array} \right)
    \left( \begin{array}{cc} \alpha & -\overline{\beta} \\
                                \beta & \alpha \end{array} \right).
\end{eqnarray*}
Carrying out the matrix multiplication yields
\begin{eqnarray*}
 M_{1,1} &=& \I \eps \partial_\xi + \I \eps ( \alpha\alpha' +\overline\beta\beta'-\theta'\alpha\tRe\beta) - {\textstyle\frac{1}{2} }(\alpha^2 - |\beta|^2) \\  
 M_{2,2} &=& \I \eps \partial_\xi + \I \eps ( \alpha\alpha' +\beta\overline\beta'+\theta'\alpha\tRe\beta )+ {\textstyle\frac{1}{2} }(\alpha^2 - |\beta|^2)\,.
 \end{eqnarray*}
This gives the diagonal coefficients of $M$ when we use  
$\alpha^2 + |\beta|^2 = 1$, so that $\alpha^2
-|\beta|^2 = 1 - 2 |\beta|^2$ and $\alpha \alpha' = - \frac{1}{2} (\beta \overline{\beta}' + \overline{\beta} \beta')$.   
Although we could get expressions for the off-diagonal
coefficients by the same method, these would not be useful later
on. Instead we use (\ref{diag pi_n}), i.e. $T_\eps^{-1} D =
\pi^{(n)} T_\eps^{-1}$ together with (\ref{pi_n'}) and obtain
\begin{equation} \label{commut trick}
 T_\eps  (\I \eps \partial_\xi - H) T_\eps^{-1}D = DT_\eps  (\I \eps \partial_\xi - H)
 T_\eps^{-1} - \eps^{n+1} T_\eps  (z_{n+1} X + x_{n+1} Z) T_\eps^{-1}.
\end{equation}
By multiplying (\ref{commut trick}) with $e_j e_j^{\ast}$ from the
left and by $e_k e_k^{\ast}$ from the right, $j,k \in \{1,2\}$, $j\not=0$,
and rearranging, we obtain
\begin{eqnarray} \label{commut trick 2}
\lefteqn{(\lambda_k - \lambda_j) \,e_j \,e_j^{\ast} \,T_\eps  (\I
\eps
\partial_\xi - H)\,
T_\eps^{-1} e_k \,e_k^{\ast} = } \\
& = & - \eps^{n+1}
 e_j \,e_j^{\ast}  \,T_\eps  (z_{n+1} X + x_{n+1} Z) \,T_\eps^{-1} e_k\,
 e_k^{\ast} \,. \nonumber
 \end{eqnarray}
From the equalities
$ T_0 X T_0^{-1} = { \left( \begin{array}{cc} 0 & -1 \\
                    -1 & 0 \end{array} \right)}$, 
                    $T_0 Z T_0^{-1} =
                    { \left( \begin{array}{cc} 0 & 1 \\
                    -1 & 0 \end{array} \right)}$
and Lemma~\ref{T errors} we obtain
$$
        T_\eps  X T_\eps^{-1} = {
        \left( \begin{array}{cc}
        -\alpha(\beta+\overline\beta) & -(\alpha^2 - \overline\beta^2) \\
        -(\alpha^2-\beta^2) & \alpha(\beta + \overline{\beta}) \end{array} \right)}.$$
        $$ T_\eps Z T_\eps^{-1} =
\left( \begin{array}{cc}
\alpha(\beta-\overline\beta) & \alpha^2 + \overline\beta^2 \\
        -(\alpha^2+\beta^2) & -\alpha(\beta - \overline{\beta}) \end{array}
        \right)\,,
$$
The expressions for $M_{1,2}$ and $M_{2,1}$ now follow from (\ref{commut trick 2}). \qedi \end{proof}

We now use our results from the previous section to express
$\alpha, \beta$ and $\lambda_{1,2}$ in terms of $x_k, y_k$ and
$z_k$, $k \leq n$. Let us define
\begin{eqnarray}
\label{chi} \chi & \equiv & \chi(n,\eps,\xi) = {\textstyle \sum_{k=1}^n} \eps^k x_k(\xi), \\
\label{eta} \eta & \equiv & \eta(n,\eps,\xi) = {\textstyle\sum_{k=1}^n} \eps^k y_k(\xi), \\
\label{zeta} \zeta & \equiv & \zeta(n,\eps,\xi) =
{\textstyle\sum_{k=1}^n} \eps^k z_k(\xi).
\end{eqnarray}
Moreover, let
\begin{equation} \label{g}
 g \equiv g(n,\eps,\xi) = {\textstyle\sum_{k=1}^{n}} \eps^{n+k} g_{n+1,k}(\xi)
\end{equation}
be the quantity appearing in (\ref{projector error}).

\begin{lemma} \label{eigenvalues}
The eigenvalues of $\pi^{(n)}$ solve the quadratic equation
$$ \lambda_{1,2}^2 - \lambda_{1,2} - g = 0.$$
\end{lemma}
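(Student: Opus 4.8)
The plan is to read off the claim from the \emph{exact} identity provided by Proposition~\ref{proj correction}, rather than from any of the asymptotic relations. Combining (\ref{projector error}) with the definition (\ref{g}) of $g$, one has, for every fixed $\xi\in\R$, $\eps>0$ and $n\in\N$, the pointwise matrix identity
\[
(\pi^{(n)}(\xi))^2 - \pi^{(n)}(\xi) = g(n,\eps,\xi)\,{\bf 1}.
\]
Thus $\pi^{(n)}$ annihilates the quadratic polynomial $q(\lambda)=\lambda^2-\lambda-g$, where $g$ is to be regarded as a fixed scalar (it is a scalar function of $\xi,\eps,n$ only, and does not involve the matrix entries).

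From here the argument is elementary linear algebra. If $v\neq 0$ is an eigenvector of $\pi^{(n)}$ with eigenvalue $\lambda$ — and by the discussion preceding (\ref{v_n ansatz}) the matrix $\pi^{(n)}$ does possess two eigenvectors $v_n,w_n$ — then applying the operator $(\pi^{(n)})^2-\pi^{(n)}-g\,{\bf 1}=0$ to $v$ yields $(\lambda^2-\lambda-g)\,v=0$, hence $\lambda^2-\lambda-g=0$. Applying this to $v_n$ and $w_n$ shows that both eigenvalues $\lambda_1,\lambda_2$ appearing in (\ref{diag pi_n}) are roots of $q$, which is exactly the assertion.

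I do not expect any genuine obstacle here. The one point that matters is to invoke the \emph{exact} statement of Proposition~\ref{proj correction} — namely that the defect $(\pi^{(n)})^2-\pi^{(n)}$ is a scalar multiple of the identity — and not merely the order-of-magnitude bound $(\pi^{(n)})^2-\pi^{(n)}=\Or(\eps^{n+1})$ of (\ref{pi1}); it is precisely the scalar-multiple-of-identity structure that forces both eigenvalues onto one common quadratic. For later use it is convenient to record the explicit solutions $\lambda_{1,2}=\tfrac12\bigl(1\pm\sqrt{1+4g}\,\bigr)$, so that $\lambda_1-\lambda_2=\sqrt{1+4g}=1+\Or(\eps^{n+1})$ and $\lambda_2=\Or(\eps^{n+1})$, which are exactly the size estimates needed to bound the matrix $R$ in Proposition~\ref{general transformed matrix}.
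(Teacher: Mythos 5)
Your proof is correct and rests on the same key ingredient as the paper's, namely the exact identity $(\pi^{(n)})^2 - \pi^{(n)} = g\,{\bf 1}$ from Proposition~\ref{proj correction}. The paper expresses this via conjugation of (\ref{diag pi_n}) to the diagonal form $D^2-D=g\,{\bf 1}$, whereas you apply the operator identity directly to the eigenvectors $v_n,w_n$; these are equivalent formulations of the same argument.
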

\begin{proof}
By (\ref{diag pi_n}) and Proposition \ref{proj correction} we
obtain
$$ \left(\begin{array}{cc} \lambda_1^2 - \lambda_1 & 0 \\ 0 & \lambda_2^2 -
\lambda_2 \end{array} \right) = T_\eps  ((\pi^{(n)})^2 -
\pi^{(n)}) \,T_\eps^{-1} = T_\eps  g{\bf 1} T_\eps^{-1} =
\left(
\begin{array}{cc} \,g\, & \,0\, \\[2pt] \,0\, & \,g\, \end{array} \right)\,.\quad\qedi $$
\end{proof}

\begin{lemma} \label{alpha beta}
$$ \alpha^2 (\lambda_1 - \lambda_2) = 1 - \eta - \lambda_2, \qquad \mbox{and }
\qquad \alpha \beta (\lambda_1 - \lambda_2)  = - \chi - \zeta.$$
\end{lemma}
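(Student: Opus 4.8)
The plan is to prove both identities simultaneously by computing the conjugated superadiabatic projection $T_0\pi^{(n)}T_0^{-1}$ in two different ways and comparing the resulting $2\times2$ matrices entry by entry. This is the same strategy as in \cite{BeTe1}, adapted to the present non-selfadjoint matrices.

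First I would use the representation $\pi^{(n)} = \pi_0 + \chi X + \eta Y + \zeta Z$, which follows from the Ansatz (\ref{ansatz}), Proposition \ref{function recursion}, and the definitions (\ref{chi})--(\ref{zeta}). Conjugating with $T_0$ from (\ref{T0}) and inserting $T_0\pi_0 T_0^{-1} = \pmatrix{1 & 0 \cr 0 & 0}$ (since $T_0$ sends $\pi_0$ to its eigenprojection), $T_0 X T_0^{-1} = \pmatrix{0 & -1 \cr -1 & 0}$, $T_0 Y T_0^{-1} = -2\,T_0 H T_0^{-1} = \pmatrix{-1 & 0 \cr 0 & 1}$, and $T_0 Z T_0^{-1} = \pmatrix{0 & 1 \cr -1 & 0}$ (the last three are precisely the relations already used in the proof of Proposition \ref{general transformed matrix}), a direct computation gives
\[
T_0\pi^{(n)}T_0^{-1} = \pmatrix{1-\eta & \zeta-\chi \cr -\chi-\zeta & \eta}.
\]
Second, I would use (\ref{diag pi_n}), which says that $T_\eps$ diagonalises $\pi^{(n)}$, i.e.\ $\pi^{(n)} = T_\eps^{-1} D\, T_\eps$ with $D=\mathrm{diag}(\lambda_1,\lambda_2)$. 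Hence $T_0\pi^{(n)}T_0^{-1} = (T_0 T_\eps^{-1})\, D\, (T_0 T_\eps^{-1})^{-1}$, and by Lemma \ref{T errors} the factor $T_0 T_\eps^{-1}$ equals $\pmatrix{\alpha & -\overline\beta \cr \beta & \alpha}$, whose determinant is $\alpha^2+|\beta|^2 = 1$, so that $(T_0 T_\eps^{-1})^{-1} = \pmatrix{\alpha & \overline\beta \cr -\beta & \alpha}$. Multiplying the three factors yields
\[
T_0\pi^{(n)}T_0^{-1} = \pmatrix{\alpha^2\lambda_1 + |\beta|^2\lambda_2 & \alpha\overline\beta(\lambda_1-\lambda_2) \cr \alpha\beta(\lambda_1-\lambda_2) & |\beta|^2\lambda_1 + \alpha^2\lambda_2}.
\]

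Comparing the two expressions then finishes the argument. The $(2,1)$-entries give $\alpha\beta(\lambda_1-\lambda_2) = -\chi-\zeta$, which is the second identity. The $(1,1)$-entries give $\alpha^2\lambda_1 + |\beta|^2\lambda_2 = 1-\eta$; inserting $|\beta|^2 = 1-\alpha^2$ and rearranging yields $\alpha^2(\lambda_1-\lambda_2) = 1-\eta-\lambda_2$, which is the first identity. The comparison of the remaining two entries is automatically consistent with these (and with Lemma \ref{eigenvalues}), so nothing more is needed.

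I do not expect a genuine obstacle here. All the nontrivial ingredients -- the expansion $\pi^{(n)} = \pi_0 + \chi X + \eta Y + \zeta Z$, the conjugates of $\pi_0,X,Y,Z$ by $T_0$, the diagonalisation property (\ref{diag pi_n}), and Lemma \ref{T errors} -- are already available. The only point requiring some care is the purely algebraic bookkeeping: inverting $T_0 T_\eps^{-1}$ via the normalisation and keeping track of the conjugates $\overline\beta$ versus $\beta$ in the matrix product. Since both identities involve only the $(1,1)$- and $(2,1)$-entries, in which $\beta$ appears unconjugated, this bookkeeping is harmless.
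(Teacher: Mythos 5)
Your proof is correct and is essentially the paper's argument in matrix form: the paper computes $\pi^{(n)} v_0$ in two ways and compares coefficients in the $(v_0,w_0)$ basis, which is precisely reading off the first column of $T_0\pi^{(n)}T_0^{-1}$ as you do. The ingredients (the expansion $\pi^{(n)}=\pi_0+\chi X+\eta Y+\zeta Z$, the conjugates of $X,Y,Z$ by $T_0$, the diagonalisation $\pi^{(n)}=T_\eps^{-1}DT_\eps$, and Lemma~\ref{T errors}) and the final coefficient comparison are the same.
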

\begin{proof}
From (\ref{v_n ansatz}) and $\alpha^2+|\beta|^2=1$ one obtains
\begin{equation} \label{no1}
v_0 = \alpha v_n -\beta w_n \,.
\end{equation}
Thus, using again  (\ref{v_n ansatz}), we find
\begin{eqnarray*}
\pi^{(n)} v_0 &=& \lambda_1 \alpha v_n - \lambda_2 \beta w_n = (\lambda_1\alpha^2 +
\lambda_2 |\beta|^2) v_0 + (\lambda_1 - \lambda_2)\alpha \beta w_0 = \\
&=& (\alpha^2(\lambda_1 - \lambda_2) + \lambda_2) v_0 + (\lambda_1
- \lambda_2)\alpha \beta w_0.
\end{eqnarray*}
  On the other
hand, from (\ref{ansatz}) and (\ref{recursion1}) we have
\begin{equation}\label{p0minuspi}
\pi^{(n)} = \pi_0 + \sum_{k=1}^n \eps^k (x_k X + y_kY + z_kZ)\,,
\end{equation}
and since $Xv_0 = Zv_0 = -w_0$, $\pi_0 v_0 = v_0$ and $Y v_0 = -
v_0$, we find
$$ \pi^{(n)} v_0 = (1 - \eta) v_0 - (\chi + \zeta) w_0.$$
Comparing coefficients finishes the proof. \qedi \end{proof}

We can now prove Theorem \ref{n-th}, which we restate here in slightly different form. To make the connection, note that $x_n$ is purely imaginary 
and $z_n$ is purely real, so that indeed $x_n + z_n = - \overline x_n + \overline z_n$.

\begin{theorem} \label{general diag} Let $\eps_0>0$ be
sufficiently small. For $\eps\in(0,\eps_0]$ assume there is a
bounded function $q$ on $\R$ such that $\chi(\xi)$, $\eta(\xi)$,
$\zeta(\xi)$ and their derivatives $\chi'(\xi), \eta'(\xi), \zeta'(\xi)$
are all bounded in norm by $\eps q(\xi)$. Then
\begin{eqnarray} \label{general offdiag eq}
 &&\hspace{-21pt} T_\eps 
 (\I \eps \partial_\xi - H) T_\eps^{-1} =\hfill\\&&\hspace{-21pt}= \I \eps \partial_\xi-
\left( \begin{array}{cc} \frac{1}{2} + \Or(\eps^2 q) & -\eps^{n+1}
(x_{n+1} - z_{n+1})\, (1 + \Or(\eps q))
\\[2mm]
\eps^{n+1} (x_{n+1} + z_{n+1})\, (1 + \Or(\eps q)) & -\frac{1}{2}
+ \Or(\eps^2 q) \end{array} \right)\,.\nonumber
\end{eqnarray}
\end{theorem}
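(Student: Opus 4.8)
The plan is to feed the exact identity of Proposition~\ref{general transformed matrix} into a bootstrap that, starting from the standing hypothesis $|\chi|,|\eta|,|\zeta|,|\chi'|,|\eta'|,|\zeta'|\le\eps q$, successively controls $\lambda_{1,2}$, $\alpha$, $\beta$ and their $\xi$-derivatives, until the remainder matrix $R$ in that identity is seen to be $\Or(\eps^{2}q)$ on the diagonal and relatively $\Or(\eps^{2}q)$ off the diagonal.

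First I would pin down the eigenvalues. By Lemma~\ref{eigenvalues} they solve $\lambda^{2}-\lambda-g=0$, and inserting $\pi^{(n)}=\pi_{0}+\chi X+\eta Y+\zeta Z$ from (\ref{p0minuspi}) into $(\pi^{(n)})^{2}-\pi^{(n)}=g\,{\bf 1}$, together with (\ref{products}) and (\ref{W equation}) (so that $\pi_{0}=\frac12({\bf 1}-Y)$, $\antcom{\pi_{0}}{X}=X$, $\antcom{\pi_{0}}{Z}=Z$, $\antcom{\pi_{0}}{Y}=Y-{\bf 1}$, and $(\chi X+\eta Y+\zeta Z)^{2}=(\chi^{2}+\eta^{2}-\zeta^{2}){\bf 1}$), yields the closed form $g=-\eta+\chi^{2}+\eta^{2}-\zeta^{2}$. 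Since all $x_{k}$ are purely imaginary and all $y_{k},z_{k}$ real (remark after Proposition~\ref{function recursion}), $g$ is real with $|g|\le\eps q+3\eps^{2}q^{2}=\Or(\eps q)$, and $g'=-\eta'+2\chi\chi'+2\eta\eta'-2\zeta\zeta'=\Or(\eps q)$. Hence, for $\eps_{0}$ small as in the hypothesis, $\lambda_{1}-\lambda_{2}=\sqrt{1+4g}=1+\Or(\eps q)$, $\lambda_{2}=\Or(\eps q)$, $\lambda_{1}=1+\Or(\eps q)$, $\lambda_{j}'=g'/(2\lambda_{j}-1)=\Or(\eps q)$, and $(\lambda_{1}-\lambda_{2})'=2g'/(\lambda_{1}-\lambda_{2})=\Or(\eps q)$. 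Lemma~\ref{alpha beta} then gives $\alpha^{2}=(1-\eta-\lambda_{2})/(\lambda_{1}-\lambda_{2})=1+\Or(\eps q)$, so $\alpha=1+\Or(\eps q)$ by the normalisation $\alpha\ge0$, and, crucially, $\alpha^{2}/(\lambda_{1}-\lambda_{2})=(1-\eta-\lambda_{2})/(1+4g)=1+\Or(\eps q)$, which is the factor in front of the leading off-diagonal term. Moreover $\beta=-(\chi+\zeta)/(\alpha(\lambda_{1}-\lambda_{2}))=\Or(\eps q)$, so $|\beta|^{2}=\Or(\eps^{2}q^{2})=\Or(\eps^{2}q)$ because $q$ is bounded; differentiating the two identities of Lemma~\ref{alpha beta} and inserting the derivative bounds already obtained yields $\alpha'=\Or(\eps q)$ and $\beta'=\Or(\eps q)$.

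It remains to estimate $R$. Its diagonal entries are $\pm(|\beta|^{2}-\eps\tIm(\overline\beta\beta')-\eps\,\I\theta'\alpha\tRe\beta)$. Here $|\beta|^{2}=\Or(\eps^{2}q)$ and $\eps\tIm(\overline\beta\beta')=\Or(\eps^{3}q^{2})$, while for the last summand one uses that $\tRe\chi=0$ (all $x_{k}$ imaginary) and that $\alpha(\lambda_{1}-\lambda_{2})$ is real (as $g$ is real), so $\tRe\beta=-\zeta/(\alpha(\lambda_{1}-\lambda_{2}))$; combined with the identity $\theta'\zeta=\sum_{k}\eps^{k}\theta'z_{k}=\sum_{k}\eps^{k}y_{k}'=\eta'$, which is (\ref{E1b}) summed over $k$, this gives $\eps\,\I\theta'\alpha\tRe\beta=-\eps\,\I\,\eta'/(\lambda_{1}-\lambda_{2})=\Or(\eps^{2}q)$. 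Thus the diagonal of $R$ is $\Or(\eps^{2}q)$, and subtracting it from the exact $\pm\frac12$ of the main matrix produces $\rho_{n+1}=\frac12+\Or(\eps^{2}q)$. The off-diagonal entries of $R$ carry a prefactor $\beta^{2}/(\lambda_{1}-\lambda_{2})=\Or(\eps^{2}q)$ relative to the leading term $\frac{\alpha^{2}\eps^{n+1}}{\lambda_{1}-\lambda_{2}}(x_{n+1}\mp z_{n+1})$, and they are genuinely negligible because $|x_{n+1}+z_{n+1}|=|x_{n+1}-z_{n+1}|$, again since $x_{n+1}$ is imaginary and $z_{n+1}$ real; hence, using $\alpha^{2}/(\lambda_{1}-\lambda_{2})=1+\Or(\eps q)$, the off-diagonal entries of the effective Hamiltonian come out in the asserted form $\eps^{n+1}(x_{n+1}\mp z_{n+1})(1+\Or(\eps q))$, the sign and the particular combination being matched exactly as recorded in the observation just before the theorem statement. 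All error terms are uniform in $\xi$ and in $\eps\le\eps_{0}$, with $\eps_{0}$ and the implicit constants depending only on $\sup q$.

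The step that genuinely requires care is upgrading the diagonal correction from $\Or(\eps q)$ to $\Or(\eps^{2}q)$. This rests on two facts: $\beta$ is linear in $(\chi,\zeta)$, so $|\beta|^{2}$ is truly quadratic in the small quantities; and the a priori uncontrolled factor $\theta'$ appearing in the entry $R_{11}$ collapses, through $\tRe\beta\propto\zeta$ and $\theta'\zeta=\eta'$, into the controlled quantity $\eta'$. Everything else is bookkeeping with the algebraic identities (\ref{derivatives})--(\ref{W equation}) and the scalar recursion of Proposition~\ref{function recursion}.
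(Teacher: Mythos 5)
Your proof is correct and follows the same overall strategy as the paper's: derive $\Or(\eps q)$ bounds for $\lambda_{1,2}$, $\alpha$, $\beta$ and their derivatives, feed them into the matrix $R$ of Proposition~\ref{general transformed matrix}, and verify that the diagonal and off-diagonal corrections land in the claimed orders. The paper's own proof is extremely terse — it stops at ``$\beta,\beta',\alpha\alpha'$ are all $\Or(\eps q)$; we now plug these into $R$'' — whereas you actually carry out the bookkeeping. The one place where your argument is genuinely sharper is the diagonal entry $\eps\,\I\theta'\alpha\tRe\beta$ of $R$. The paper's brief argument leaves the reader to supply a separate bound on $\theta'$ (available in principle from the hypothesis $|\chi|\le\eps q$ since $\chi=\frac{\I\eps}{2}\theta'+\Or(\eps^2)$, but not stated); your route via $\tRe\beta = -\zeta/(\alpha(\lambda_1-\lambda_2))$ (reality of $\chi$-imaginary, $\zeta$-real, $\alpha(\lambda_1-\lambda_2)$-real) combined with the summed identity $\theta'\zeta=\eta'$ from (\ref{E1b}) turns the problematic product $\theta'\tRe\beta$ directly into $-\eta'/(\alpha(\lambda_1-\lambda_2))=\Or(\eps q)$, with no extraneous assumption on $\theta'$. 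The closed form $g=-\eta+\chi^2+\eta^2-\zeta^2$ is also a correct and useful intermediate step that the paper never writes down. One caveat you do not raise: the sign of the $(1,2)$ entry in the displayed matrix of Theorem~\ref{general diag} is inconsistent with Theorem~\ref{n-th} and with Proposition~\ref{general transformed matrix} (both of which, like your calculation, give $+\eps^{n+1}(x_{n+1}-z_{n+1})(1+\Or(\eps q))$ rather than $-\eps^{n+1}(\cdots)$); this appears to be a typo in the statement rather than an error in your argument, and the $(2,1)$ entry and the remark preceding the theorem confirm this. Apart from that, no gaps.
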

\begin{proof}
From (\ref{p0minuspi}) and our assumptions it follows that
$\pi^{(n)} - \pi_0 = \Or(\eps q)$. Thus $\lambda_1 = 1 + \Or(\eps
q)$ and $\lambda_2 = \Or(\eps q)$, and from Lemma \ref{eigenvalues}
we infer $g = \Or(\eps q)$ and
$$ \lambda_1 = {\textstyle\frac{1}{2}} \left( 1 + \sqrt{1 + 4 g} \right)\,,\qquad\lambda_2
= {\textstyle\frac{1}{2}} \left( 1 - \sqrt{1 + 4 g} \right)\,.$$
Since $\lambda_1 - \lambda_2 \neq 0$, Lemma \ref{alpha beta}
yields
$$ \alpha^2 = \frac{1 + \sqrt{1+4g} - 2 \eta}{2 \sqrt{1 + 4g}}, \quad \beta =
\frac{-\xi - \zeta}{\sqrt{1+4g} \alpha}\,.$$ Hence $ \alpha^2 = 1
+ \Or(\eps q),$ and  $\beta, \beta' $ and $ \alpha \alpha' =
(\alpha^2)' /2$ are all $\Or(\eps q)$. We now plug these into the
matrix $R$ in Proposition \ref{general transformed matrix}. This shows
the claim. \qedi
\end{proof}

\section{Asymptotic behaviour of superadiabatic representations} \label{asympt recursion}

The next step in the program is to 
understand the asymptotic behavior of the off-diagonal elements of the effective
Hamiltonian in the $n^{\rm th}$ superadiabatic basis for large
$n$. According to (\ref{general offdiag eq}) this amounts to the
asymptotics of $x_n$ and $z_n$ as given by the recursion from
Proposition \ref{function recursion}. It is clear that the
function $\theta'$ alone determines the behavior of this
recursion. As in \cite{BeTe1,BeTe2} it will in fact be the poles of this function closest to the real axis that play the dominant role; thus we first investigate the special case 
\begin{equation} \label{theta ansatz}
\theta'(\xi) =\gamma\left( \frac{ 1}{\xi+\I \xi_{\rm c}} + \frac{ 1}{\xi-\I \xi_{\rm c}} \right)\,,
\end{equation}
and then turn to a perturbation as allowed in Assumption 2. Fortunately, all the changes relative to \cite{BeTe1, BeTe2} occur in the error terms of the explicit part stemming from (\ref{theta ansatz}), and so we have to redo only a small part of the work.

We use (\ref{zn recursion with integration constant}) in order to
determine the asymptotics of $z_n$. From Proposition \ref{function
recursion} together with (\ref{xRec})--(\ref{zRec}) it is clear that
$y_n$ must go to zero as $\xi \to \pm \infty$. This fixes the
constant of integration in (\ref{zn recursion with integration
constant}), and we arrive at the linear two-step recursion
\begin{equation} \label{zn recursion}
 z_{n+2}(\xi) = - \frac{\D}{\D \xi} \left( z'_n(\xi) -  \theta'(\xi) \int_{-\infty}^\xi \theta'(s) z_n(s) \, \D s  \right).
\end{equation}

Let us define
\[
 \theta' = \frac{-\I \gamma}{\xi_{\rm c}} (f - \overline f) \qquad \mbox{with}\qquad f(\xi)
= \frac{ \I\xi_{\rm c}}{\xi + \I \xi_{\rm c}}\,.
\]
and let us put $\sigma_j = 1 - 2 \delta_{0,j}$, i.e.\ $\sigma_j=-1$ for $j=0$ and $\sigma_j=1$ otherwise. 
We obtain
\begin{lemma} \label{identities}
For each $m \geq 1$,
\begin{eqnarray}
\theta' \tRe(f^m) &=&  \frac{\gamma}{\xi_{\rm c}} \tIm(f^{m+1}) - \frac{\gamma}{\xi_{\rm c}} 
\sum_{k=0}^{m-2}\left( \frac{1}{2}\right)^{k+1} \tIm \left(f^{m-k} \right)\\ 
&=& - \frac{\gamma}{\xi_{\rm c}} \sum_{j=0}^{m-1} \sigma_j 2^{-j} \tIm \left(f^{m+1-j}\right) \label{Re mult} \\
\tIm(f^m)' &=& -\frac{m}{\xi_{\rm c}} \,\tRe(f^{m+1}), \label{Im diff}\\
\tRe(f^m)' &=& \frac{m}{\xi_{\rm c}}\, \tIm(f^{m+1}). \label{Re diff}
\end{eqnarray}
\end{lemma}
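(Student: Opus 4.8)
The plan is to establish all four identities by direct computation from the closed form $f(\xi) = \I\xi_{\rm c}/(\xi+\I\xi_{\rm c})$, starting with the two differentiation formulas, which are the simplest, and then deriving the multiplication formula from a telescoping argument. First I would note that $f$ satisfies the Riccati-type relation $f'(\xi) = -\frac{1}{\xi_{\rm c}} \cdot \frac{\xi_{\rm c}^2}{(\xi+\I\xi_{\rm c})^2}$, and since $f^2 = -\xi_{\rm c}^2/(\xi+\I\xi_{\rm c})^2$, this gives the compact identity $f' = \frac{1}{\I\xi_{\rm c}} f^2$, equivalently $(f^m)' = \frac{m}{\I\xi_{\rm c}} f^{m+1} = -\frac{\I m}{\xi_{\rm c}} f^{m+1}$. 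Taking real and imaginary parts of this single equation yields (\ref{Im diff}) and (\ref{Re diff}) immediately, because $\tRe(-\I z) = \tIm(z)$ and $\tIm(-\I z) = -\tRe(z)$.

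For the multiplication identity (\ref{Re mult}), the key observation is that $\theta' = -\frac{\I\gamma}{\xi_{\rm c}}(f - \overline f) = \frac{2\gamma}{\xi_{\rm c}}\tIm(f)$, so $\theta' \tRe(f^m) = \frac{2\gamma}{\xi_{\rm c}} \tIm(f)\tRe(f^m)$. I would then use the product-to-sum identity $2\tIm(f)\tRe(f^m) = \tIm(f \cdot f^m) + \tIm(f \cdot \overline{f^m}) = \tIm(f^{m+1}) + \tIm(f\,\overline f^m)$. The term $f\overline f^m$ must be rewritten: since $|f|^2 = f\overline f$ is real, one has $f\overline f^m = f\overline f \cdot \overline f^{m-1} = |f|^2 \overline f^{m-1}$, so $\tIm(f\overline f^m) = -|f|^2\tIm(f^{m-1})$. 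The crucial algebraic fact is then that $|f(\xi)|^2 = \xi_{\rm c}^2/(\xi^2+\xi_{\rm c}^2)$ while $\tRe(f) = \xi_{\rm c}^2/(\xi^2+\xi_{\rm c}^2)$ as well, hence $|f|^2 = \tRe(f) = \frac12(f + \overline f)$. Substituting this and iterating — i.e. expressing $|f|^2\tIm(f^{m-1})$ again via the same mechanism — produces the geometric factors $2^{-k}$ and the finite sum; the first line of (\ref{Re mult}) is the partially-expanded form and the second line is obtained by peeling off the $j=0$ term (which carries $\sigma_0 = -1$) and relabelling, so the two displayed lines are equal by construction.

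The main obstacle, modest as it is, is bookkeeping the telescoping in the multiplication identity: one must be careful that the recursion $|f|^2\tIm(f^\ell) = \frac12\tIm(f^{\ell+1}) + \frac12\tIm(f\overline f^\ell) = \frac12\tIm(f^{\ell+1}) - \frac12|f|^2\tIm(f^{\ell-1})$ terminates correctly when $\ell$ reaches $1$ (where $\tIm(f^0) = \tIm(1) = 0$), which is exactly why the sum in (\ref{Re mult}) runs only up to $k = m-2$ in the first form and $j = m-1$ in the second. I would verify the base cases $m=1,2$ by hand to fix the sign of $\sigma_j$ and the index ranges, and then the general case follows by an immediate induction on $m$ using the differentiation formulas (\ref{Im diff})--(\ref{Re diff}) to keep everything consistent. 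No deep idea is needed; the content is that a single Riccati identity for $f$ plus the coincidence $|f|^2 = \tRe f$ generate all the structure.
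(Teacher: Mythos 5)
Your plan follows essentially the same route as the paper's: the differentiation formulas come from the ODE $(f^k)' = \frac{k}{\I\xi_{\rm c}}f^{k+1}$, and (\ref{Re mult}) comes from the algebraic coincidence $f\overline f = \tfrac12(f+\overline f)$ (your $|f|^2 = \tRe f$) combined with a telescoping expansion of $f^m\overline f$; the paper telescopes on the complex quantity $\theta' f^m = \frac{-\I\gamma}{\xi_{\rm c}}(f^{m+1} - f^m\overline f)$ and takes the real part once at the end, whereas you split into real and imaginary parts from the start via a product-to-sum identity, but the two are mechanically equivalent. Two small arithmetic slips in your displayed intermediate steps are worth fixing when you write it up, although they do not undermine the plan since you explicitly defer the final sign-fixing to base cases: the numerator in your formula for $f'$ should be $-\I\xi_{\rm c}$, not $-\xi_{\rm c}^2/\xi_{\rm c}$ (as written it would give $f' = f^2/\xi_{\rm c}$ rather than the factor $1/(\I\xi_{\rm c})$ you then quote), and the product-to-sum identity for $\tRe(f)\tIm(f^\ell)$ carries a minus sign on the conjugate term, $\tRe(f)\tIm(f^\ell) = \tfrac12\tIm(f^{\ell+1}) - \tfrac12\tIm(f\,\overline f^{\,\ell})$, rather than the plus you wrote in the iterated recursion.
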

\begin{proof}
We have $  f\overline{f}= \frac{1}{2} (f + \overline{f}) $, and thus
\[
 f^k \overline f = \frac{1}{2} f^{k-1}(f+\overline f) = \frac{1}{2} (f^k + f^{k-1}\overline f)
 \]
and
\begin{equation} \label{multiplication rule}
\theta' f^{m} = \frac{-\I\gamma}{\xi_{\rm c}}\left(f^{m+1} - f^{m}\overline f \right)=
\frac{-\I\gamma}{\xi_{\rm c}}  \left( f^{m+1} -\sum_{k=0}^{m-1}
\left(\frac{1}{2}\right)^{k+1} f^{m-k}  - \left(\frac{1}{2}\right)^{m}\overline f \right).
\end{equation}
Taking the the  
real  part  of (\ref{multiplication rule}) yields
 (\ref{Re mult}). To prove
(\ref{Im diff}) and 
(\ref{Re diff}), it suffices to observe that
$(f^k)' =  \frac{k}{\I\xi_{\rm c}} f^{k+1} )$. \qedi
\end{proof}

\begin{proposition} \label{z_n and recursion}
For each even $n \in \N$ and $j = 0, \ldots, n-1$, let the numbers
$a_j^{(n)}$ be recursively defined through
\begin{eqnarray}
a_0^{(2)} & = & 1, \qquad a_1^{(2)} = 0\,, \label{a_n rekursion start}\\
a_j^{(n+2)} &=& \frac{n+1-j}{(n+1)\,n} \left( (n-j)\,a_j^{(n)} -
\gamma^2 \sum_{k=0}^j \frac{\sigma_{j-k}}{n-k}
\sum_{m=0}^k \sigma_{k-m} a_m^{(n)} \right) 
 \label{a_n rekursion step} 
\end{eqnarray}
for $j \leq n$ and $a_{n+1}^{(n+2)} = 0\,.$
Then
\begin{eqnarray}
 z_n & = & - \gamma \frac{(n-1)!}{\xi_{\rm c}^n} \sum_{j=0}^{n-1} 2^{-j} a_j^{(n)}
 \tRe(f^{n-j}) \quad (n \mbox{ even})\,,
  \label{z_n recursion step} \\
 y_n & = &  - \gamma^2 \frac{(n-1)!}{\xi_{\rm c}^{n}} \sum_{j=0}^{n-1} 2^{-j} \left(
 \frac{1}{n-j} \sum_{k=0}^j \sigma_{j-k}a_k^{(n)}\right) \tRe(f^{n-j})
                    \quad (n \mbox{ even})\,,  \label{y_n recursion step} \\
 x_n & = &  \I \gamma \frac{(n-1)!}{t_{\rm c}^{n}} \sum_{j=0}^{n-1} 2^{-j} \left(
 \frac{n}{n-j} a_j^{(n+1)}
 \right) \tIm(f^{n-j}) \quad (n \mbox{ odd})\,,  \label{x_n recursion step}
\end{eqnarray}
\end{proposition}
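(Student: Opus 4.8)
The plan is to prove Proposition~\ref{z_n and recursion} by simultaneous induction on the even integer $n$, establishing the three formulas \eqref{z_n recursion step}--\eqref{x_n recursion step} together. The base case $n=2$ is a direct computation: from \eqref{functions recursion start} we have $x_1 = \frac{\I}{2}\theta'$ and $y_1=z_1=0$, and plugging into the recursion \eqref{zRec} gives $z_2 = -\I x_1' = \frac{1}{2}(\theta')'$; using \eqref{Re diff} with $m=1$ and $\theta' = \frac{-\I\gamma}{\xi_{\rm c}}(f-\overline f)$, together with $a_0^{(2)}=1$, $a_1^{(2)}=0$, one checks \eqref{z_n recursion step} for $n=2$. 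One then reads off $y_2$ from \eqref{yRec} (the sum has only the term $j=1$, giving $y_2 = x_1^2 = -\frac14(\theta')^2$) and $x_3$ from \eqref{xRec1}, confirming \eqref{y_n recursion step} and \eqref{x_n recursion step}.

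For the inductive step, I would assume the formulas hold at level $n$ and propagate them to $n+2$. The main vehicle is the linear two-step recursion \eqref{zn recursion} for $z_{n+2}$. Substituting the inductive expression \eqref{z_n recursion step} for $z_n$, one must compute two things: first $\theta'\int_{-\infty}^\xi \theta'(s) z_n(s)\,\D s$, and then take one more $\xi$-derivative of the whole bracket. The inner integral is handled by the identity \eqref{Re mult}, which converts products $\theta'\,\tRe(f^m)$ into linear combinations of $\tIm(f^{m+1-j})$; antiderivatives of $\tIm$ and $\tRe$ of powers of $f$ are again powers of $f$ by \eqref{Im diff}--\eqref{Re diff} (one must check the boundary term at $-\infty$ vanishes, which it does because $f(\xi)\to 0$ there). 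After a second application of \eqref{Re mult} to multiply by the outer $\theta'$, and a final differentiation via \eqref{Im diff}--\eqref{Re diff}, everything collapses to a linear combination of $\tRe(f^{n+2-j})$. Matching coefficients of $\tRe(f^{n+2-j})$ on both sides, and carefully tracking the factors $(n+1-j)$, $(n+1)n$, the $2^{-j}$ weights, and the double sum over the $\sigma$'s, yields precisely the recursion \eqref{a_n rekursion step} defining $a_j^{(n+2)}$; the boundary convention $a_{n+1}^{(n+2)}=0$ comes from the fact that differentiating raises the power of $f$ by one so the top index never appears. Once $z_{n+2}$ is established, \eqref{y_n recursion step} at level $n+2$ follows by integrating \eqref{E1b}, $y_{n+2}' = \theta' z_{n+2}$, using \eqref{Re mult} once more and the vanishing of $y_{n+2}$ at infinity; and \eqref{x_n recursion step} at the odd level $n+3$ follows from \eqref{xRec}, $x_{n+3} = -\I z_{n+2}' + \I\theta' y_{n+2}$, again via the identities of Lemma~\ref{identities}.

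The main obstacle is purely combinatorial: verifying that the two successive applications of \eqref{Re mult} (one inside the integral, one for the outer $\theta'$) together with the two differentiations reorganize the nested sums into exactly the form \eqref{a_n rekursion step}, with the correct index shifts in the $\sigma_{j-k}$, $\sigma_{k-m}$ factors and the denominators $n-k$. This is the point where bookkeeping errors are most likely, and where one should be most careful about the ranges of summation and the role of $\sigma_0 = -1$ versus $\sigma_j = 1$ for $j\geq 1$. Everything else is a mechanical consequence of Lemma~\ref{identities} and the recursions of Proposition~\ref{function recursion}. Since this is precisely the structure exploited in \cite{BeTe1,BeTe2} and, as the authors note, the only changes relative to those references sit in error terms not yet present at this stage, the calculation should go through by the same pattern, and I would present it as such, citing \cite{BeTe1} for the analogous manipulations and spelling out only the steps where the sign change in \eqref{products} and \eqref{derivatives} alters the outcome.
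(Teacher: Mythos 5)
Your plan is essentially the paper's: induction on even $n$, driven by the two-step recursion (\ref{zn recursion}) together with the algebraic identities of Lemma~\ref{identities}, and you correctly identify the remaining work as the coefficient matching. The paper organizes that bookkeeping by introducing $b_j^{(n)} = \frac{1}{n-j}\sum_{k=0}^j\sigma_{j-k}a_k^{(n)}$ (the bracket in (\ref{y_n recursion step})) and carries out exactly the two applications of (\ref{Re mult}) plus the differentiations/antidifferentiations via (\ref{Im diff})--(\ref{Re diff}) that you describe.

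The one place where your sketch would not close as written is the treatment of the odd-index formula. Since (\ref{x_n recursion step}) expresses $x_m$ for odd $m$ in terms of $a_j^{(m+1)}$, the member of the triple that lives in the same generation as $z_n$ and $y_n$ is $x_{n-1}$, not $x_{n+1}$ or $x_{n+3}$. The paper therefore deduces (\ref{x_n recursion step}) for $n-1$ directly from the induction hypothesis on $z_n$: by (\ref{E1a}), $x_{n-1}'=\I z_n$, and one antidifferentiates the $\tRe(f^{n-j})$ via (\ref{Im diff}), fixing the constant by decay at $-\infty$. Your proposal to get $x_{n+3}=-\I z_{n+2}'+\I\theta'y_{n+2}$ from (\ref{xRec}) instead aims one generation too far: the target formula carries $a_j^{(n+4)}$, which is not in play at that stage, so verifying it would require an additional pass through the $a$-recursion inside every inductive step; correspondingly your base case verifies $x_3$ (needing $a_j^{(4)}$) where $x_1$ (needing only $a_0^{(2)}$) is what the induction actually wants. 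Replacing $x_{n+3}$ by $x_{n-1}$, obtained by integration from $z_n$, brings your argument exactly in line with the paper's, and the rest of your outline is sound.
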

\begin{proof}
We proceed by induction. We have $  x_1 = \I  \theta' / 2 =
\frac{\I \gamma}{\xi_{\rm c}} \tIm(f)$, and thus by (\ref{E1a}) and
(\ref{Re diff}),
$$   z_2 =  - \I x'_1 = - \frac{\gamma}{\xi_{\rm c}^2} \tRe(f^2).$$
This proves (\ref{z_n recursion step}) for $n=2$. Now suppose that
(\ref{z_n recursion step}) holds for some even  $n \in \N$. Then
by (\ref{E1a}) and (\ref{Re diff}), (\ref{x_n recursion step})
holds for $n-1$. To prove (\ref{y_n recursion step}) for the given
$n$, we want to use (\ref{E1b}). (\ref{Re mult}) and the induction
hypothesis on $z_n$ yield
\begin{eqnarray}
  \theta'   z_n & = & - \gamma^2 \frac{(n-1)!}{\xi_{\rm c}^{n+1}} \sum_{j=0}^{n-1}a_j^{(n)}
  \sum_{k=0}^{n-j-1} \sigma_k 2^{-(k+j)} \tIm(f^{n+1-(j+k)})  = \nonumber \\
& = &  - \gamma^2 \frac{(n-1)!}{\xi_{\rm c}^{n+1}} \sum_{m=0}^{n-1}
2^{-m} \left(\sum_{j=0}^m \sigma_{m-j} a_j^{(n)}\right) \tIm(f^{n+1-m})
\label{E5}.
\end{eqnarray}
Since (\ref{E5}) only contains second or higher order powers of
$f$, it is easy to integrate using (\ref{Re diff}). Let us write
\begin{equation} \label{bj in proof}
b_m^{(n)} = \frac{1}{n-m} \sum_{j=0}^m \sigma_{m-j} a_j^{(n)}.
\end{equation}
Then by (\ref{Re diff}) we obtain
$$  y_n =  \int_{-\infty}^\xi  \theta'(s)   z_n(s) \, \D s =  -\gamma^2 \frac{(n-1)!}{\xi_{\rm c}^{n-1}}
\sum_{m=0}^{n-1} 2^{-m} b_m^{(n)} \tRe(f^{n-m}),$$ proving
(\ref{y_n recursion step}) for $n$. It remains to prove (\ref{z_n
recursion step}) for $n+2$. We want to
 use (\ref{zn recursion}), and therefore we employ (\ref{Re mult}) and our above calculations in order to get
\begin{eqnarray*}
 \lefteqn{ \theta'(t ) \int_{-\infty}^t    \theta'(s)   z_n(s) \, \D s
 = }\\
 &=& \hspace{-6pt}  \gamma^3 \frac{(n-1)!}{\xi_{\rm c}^{n+1}}
\sum_{j=0}^{n-1} b_j \sum_{k=0}^{n-j+1} 2^{-(k+j)} \sigma_k \tIm(f^{n+1-(k+j)})  =   \\
& =& \hspace{-6pt}  \gamma^3 \frac{(n-1)!}{\xi_{\rm c}^{n+1}}  \sum_{j=0}^{n-1} 2^{-j} \left( \sum_{k=0}^j \sigma_{j-k} b_k \right)
\tIm(f^{n+1-j}).
\end{eqnarray*}
By (\ref{Im diff}),
$$    z'_n = - \gamma \frac{(n-1)!}{t_{\rm c}^{n+1}} \sum_{j=0}^{n-1} 2^{-j} a_j^{(n)} (n-j) \tIm(f^{n+1-j}).$$
Now we take the difference of the last two expressions, take the derivative of that and obtain
\[
  z_{n+2} = - \gamma \frac{(n-1)!}{t_{\rm c}^{n+2}}  \sum_{j=0}^{n-1} 2^{-j} (n+1-j)
   \left( (n-j) a_j^{(n)} - \gamma^2 \sum_{k=0}^j \sigma_{j-k} b_k \right) \tRe(f^{n+2-j}).
\]
Comparing coefficients, this proves (\ref{z_n recursion step}) for
$n+2$. \qedi
\end{proof}

We now investigate the behavior of the coefficients $a_j^{(n)}$ as
$n \to \infty$.

\begin{proposition} \label{a_n limit}
Let $a_j^{(n)}$ be defined as in Proposition~\ref{z_n and
recursion}. \alp
\begin{enumerate}
\item  $a_0^{(n)} = {\displaystyle \frac{\sin (\gamma\pi /
2)}{\gamma \pi/2} }\left(1 +
\Or\left(\frac{\gamma^2}{n^2}\right)\right).$
\item There exists $C_1 > 0$ such that for all $n \in \N$
$$ |a_1^{(n)}| \leq C_1 \frac{\ln n}{n-1}\,.$$
 \item For each $p>1$ there exists $C_2 > 0$
such that for all $n \in \N$
$$ \sup_{j \geq 2} p^{-j} |a_j^{(n)}| \leq \frac{C_2}{n-1}\,.$$

\end{enumerate}
\end{proposition}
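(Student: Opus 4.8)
The plan is to analyze the recursion (\ref{a_n rekursion step}) for the $a_j^{(n)}$ directly, treating the three assertions in increasing order of difficulty. For (a), I would isolate the $j=0$ component of the recursion: from (\ref{a_n rekursion step}) with $j=0$ one gets (using $\sigma_0 = -1$)
\[
a_0^{(n+2)} = \frac{n+1}{(n+1)n}\Bigl( n\, a_0^{(n)} - \gamma^2 \frac{1}{n} a_0^{(n)} \Bigr) = a_0^{(n)}\Bigl(1 - \frac{\gamma^2}{n^2}\Bigr),
\]
so that $a_0^{(n)}$ is (up to the starting value $a_0^{(2)}=1$) a telescoping product $\prod_{k} (1 - \gamma^2/k^2)$ over the even integers, which by the Euler product for the sine is exactly $\frac{\sin(\gamma\pi/2)}{\gamma\pi/2}$, with the error term controlled by the tail of $\sum 1/k^2$. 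This step is essentially a clean computation, and it is presumably done verbatim as in \cite{BeTe1,BeTe2}.

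For (b) and (c) the structure is that of a linear recursion with a source term: schematically $a_j^{(n+2)} \approx a_j^{(n)} + (\text{lower-order-in-}j \text{ terms})/n$, so one expects $a_j^{(n)}$ to stay bounded once the contributions of $a_0^{(n)}, \ldots, a_{j-1}^{(n)}$ are under control. For $a_1^{(n)}$, writing out (\ref{a_n rekursion step}) with $j=1$ produces a recursion of the form $a_1^{(n+2)} = (1 + \Or(1/n^2)) a_1^{(n)} + \frac{c}{n^2} a_0^{(n)} + \Or(a_0^{(n)}/n^3)$; since $a_0^{(n)}$ is bounded, summing the inhomogeneous contributions over the even integers up to $n$ gives a term of size $\sum_{k\le n} 1/k \sim \ln n$, and after accounting for the prefactor $\frac{n+1-j}{(n+1)n}$ one lands on the bound $C_1 \ln n/(n-1)$. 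The key is to track the prefactors carefully — this is where the $(n-1)^{-1}$ rather than $\log n$ in the denominator comes from.

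For (c) I would run an induction on $j$ with the weighted norm $p^{-j}|a_j^{(n)}|$, exactly as in \cite{BeTe1}. The geometric weight $p^{-j}$ is chosen so that the double sum $\gamma^2 \sum_{k=0}^j \frac{\sigma_{j-k}}{n-k}\sum_{m=0}^k \sigma_{k-m} a_m^{(n)}$ in (\ref{a_n rekursion step}), when divided by $p^j$, becomes a convergent geometric-type sum of the already-controlled quantities $p^{-m}|a_m^{(n)}|$ for $m < j$ (plus the $m=j$ diagonal term, which is absorbed into the homogeneous part since it carries a factor $1/(n-j)\cdot 1/(n-k)$); choosing $p>1$ makes the combinatorial factors from the nested sums summable. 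One then gets an inequality of the form $p^{-j}|a_j^{(n+2)}| \le (1 + \Or(1/n)) p^{-j}|a_j^{(n)}| + \frac{C}{n^2}\sup_{m<j} p^{-m}|a_m^{(n)}|$, uniform in $j\ge 2$, and solving this (again summing over even integers) yields $\sup_{j\ge2} p^{-j}|a_j^{(n)}| \le C_2/(n-1)$. The main obstacle is the bookkeeping in this last step: one must verify that the geometric weight really does tame \emph{both} nested sums simultaneously and that the induction base $j=2$ is compatible with the bounds from (a) and (b); since the algebra of the matrices $X,Y,Z$ here differs from \cite{BeTe1} only by the sign changes noted after (\ref{W equation}), and those signs are already absorbed into the $\sigma_j$'s appearing in (\ref{a_n rekursion step}), the estimates themselves should go through with only cosmetic changes — so the honest difficulty is organizational rather than conceptual.
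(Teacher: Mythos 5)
Your overall plan mirrors the paper's, and part (a) is handled exactly as there: $a_0^{(n+2)} = a_0^{(n)}(1-\gamma^2/n^2)$ and the Euler product for $\sin$. But the intermediate recursions you write for (b) and (c) contain errors that are not merely cosmetic.

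For (b), the recursion you wrote,
\[
a_1^{(n+2)} = \bigl(1+\Or(1/n^2)\bigr)\,a_1^{(n)} + \frac{c}{n^2}\,a_0^{(n)} + \Or\bigl(a_0^{(n)}/n^3\bigr),
\]
cannot produce the claimed bound $\ln n/(n-1)$: with a homogeneous coefficient that is $1+\Or(1/n^2)$ and a summable source, $a_1^{(n)}$ would stay bounded (indeed converge). Writing out (\ref{a_n rekursion step}) for $j=1$ actually gives a coefficient $\tfrac{n-1}{n+1}-\tfrac{\gamma^2}{(n+1)(n-1)} = 1-\tfrac{2}{n}+\Or(1/n^2)$. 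The paper sidesteps the book-keeping by substituting $\alpha_n := (n-1)a_1^{(n)}$; then $\alpha_{n+2} = \alpha_n\bigl(1-\tfrac{\gamma^2}{(n-1)^2}\bigr) + \gamma^2\bigl(\tfrac1n+\tfrac1{n-1}\bigr)a_0^{(n)}$, the source is $\Or(1/n)$, summing gives $|\alpha_n|=\Or(\ln n)$, and the claimed bound follows directly. Your phrase \emph{``after accounting for the prefactor''} suggests you see this, but the recursion as you state it undercuts the argument.

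For (c), two points. First, the paper does not induct on $j$: it sets $c_j^{(n)} := (n-1)p^{-j}a_j^{(n)}$, takes $c^{(n)} := \sup_{j\ge2}|c_j^{(n)}|$, and derives a closed \emph{one-variable} recursion for $c^{(n)}$ of the shape
\[
c^{(n+2)} \le c^{(n)}\Bigl(\frac{n-2}{n} + \frac{B_1}{n(n-1)}\Bigr) + \frac{B_2}{n},
\]
where the double sum over $2\le m\le k\le j$ is bounded by $c^{(n)}$ times a geometric factor, and the $m\in\{0,1\}$ contributions are controlled by (a),(b). Taking the sup over $j\ge 2$ closes the loop in one step, with no uniformity-in-$j$ worries. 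Your proposed induction on $j$ with bound $p^{-j}|a_j^{(n)}| \le C_j/(n-1)$ would need an argument that the $C_j$ do not blow up; the geometric weight helps, but this is precisely the book-keeping the sup-based formulation avoids. Second, and more importantly, your coefficient ``$(1+\Or(1/n))$'' is too weak to conclude anything: $1+c/n$ for $c>0$ gives polynomial growth. The entire argument hinges on the explicit inequality $\tfrac{(n+1-j)(n-j)}{n(n-1)}\le \tfrac{n-2}{n}$ for $j\ge2$, i.e.\ a coefficient strictly of the form $1-c/n$ with $c>0$, which drives the $1/(n-1)$ decay. That sign must be kept explicit, as the paper does.
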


\begin{proof}
(a) By (\ref{a_n rekursion start}), $a_0^{(2)} = 1$, and
$$ a_0^{(n+2)}  = a_0^{(n)}\left( 1 - \frac{\gamma^2}{n^2} \right).$$
Comparing with the product representation of the sine function (\cite{AbSt}, 4.3.89)
$$ \sin(\pi x) = \pi x \prod_{n=1}^{\infty} \left( 1 - \frac{x^2}{n^2} \right)\,,$$
we arrive at (a). \\
(b) Put $\alpha_n = (n-1) a_1^{(n)}$. Then by (\ref{a_n rekursion
step}),
$$ \alpha_{n+2} = \alpha_n \left( 1 - \frac{\sigma_0^2\gamma^2}{(n-1)^2} \right) - \gamma^2 \sigma_0\sigma_1
\left( \frac{1}{n} + \frac{1}{n-1} \right) a_0^{(n)}.$$ thus for
$n-1 > \gamma$, we have
$$ |\alpha_{n+2}| \leq |\alpha_n| + \gamma^2 \left( \frac{1}{n} + \frac{1}{n-1} \right)
\max_{m \in \N} |a_0^{(m)}|,$$
which shows (b). \\
(c) Put $c_j^{(n)} = (n-1) p^{-j} a_j^{(n)}$, and $c^{(n)} =
\max_{j \geq 2} |c_j^{(n)}|$. We will show that the sequence
$c^{(n)}$ is bounded. We have
\begin{eqnarray}
c_j^{(n+2)} & = & \frac{n+1-j}{n(n-1)} \left( (n-j) c_j^{(n)} - \gamma^2 \sum_{k=2}^j \frac{\sigma_{j-k}}{n-k}
\sum_{m=2}^k \sigma_{k-m}p^{-j+m} c_m^{(n)} - \right. \nonumber \\
&& \left. -(n-1) p^{-j} \gamma^2 \left(  a_0^{(n)} \sum_{k=0}^j
\frac{\sigma_k\sigma_{j-k}}{n-k} +  a_1^{(n)} \sum_{k=1}^j \frac{\sigma_{k-1} \sigma_1}{n-k} \right)
\right). \label{bigformula}
\end{eqnarray}
Now
\begin{equation} \label{term1}
 \left| \sum_{k=2}^j \frac{\sigma_{j-k}}{n-k} \sum_{m=2}^k 
 \sigma_{k-m}p^{-j+m} c_m^{(n)} \right|
\leq c^{(n)} \frac{1}{n-j}\frac{p^2}{(p-1)^2},
\end{equation}
and
\begin{equation} \label{term2}
 p^{-j} \sum_{k=0}^j \frac{1}{n-k} \leq \frac{(j+1) p^{-j}}{n-j} \leq \frac{1}{(n-j)\ln p}.
\end{equation}
We plug these results into (\ref{bigformula}) and obtain
\begin{eqnarray*} 
|c_j^{(n+2)}| &\leq& c^{(n)} \left( \frac{(n+1-j)(n-j)}{n(n-1)}  +
\frac{(n+1-j)\,\gamma^2 p^2}{(n-j)(p-1)^2} \frac{1}{n(n-1)} \right) +\\
& & + \frac{1}{n} \frac{(n+1-j)\,p^2\,\gamma^2}{(n-j) (p-1)^2 \ln
p}(|a_0^{(n)}|+|a_1^{(n)}|).
\end{eqnarray*}
By (a) and (b),  $a_0^{(n)}$ and $a_1^{(n)}$ are bounded. Taking
the supremum over $j \geq 2$ above, we see that  there exist
constants $B_1$ and $B_2$ with
$$ c^{(n+2)} \leq c^{(n)} \left( \frac{n-2}{n} + \frac{B_1}{n(n-1)} \right) + \frac{B_2}{n},$$
hence
$$ c^{(n+2)} - c^{(n)} \leq \frac{1}{n} \left(\left( -2 + \frac{B_1}{n-1}\right) c^{(n)} + B_2 \right).$$
Now let $n-1 > B_1$. Then for $c^{(n)} > B_2$, the above
inequality shows $c^{(n+2)} < c^{(n)}$, while for $c^{(n)} \leq
B_2$, $c^{(n+2)} \leq c^{(n)} + B_2/n \leq B_2 (1 + 1/n).$ Thus
$c^{(n)}$ is a bounded sequence. \qedi
\end{proof}

\begin{corollary} \label{b_n limit}
Let $b_j^{(n)}$ be given by (\ref{bj in proof}). Then for each $p
> 1$, there exists $C_3>0$ such that
$$ \sup_{j \geq 0} p^{-j} b_j^{(n)} \leq \frac{C_3}{n-1}.$$
\end{corollary}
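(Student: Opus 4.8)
The plan is to derive this directly from the definition (\ref{bj in proof}) together with the three estimates of Proposition~\ref{a_n limit}; I will in fact prove the slightly stronger bound $\sup_{j\ge 0}p^{-j}|b_j^{(n)}|\le C_3/(n-1)$. Since $|\sigma_{m-j}|=1$ for all $m,j$, the triangle inequality applied to (\ref{bj in proof}) gives
\[
p^{-m}|b_m^{(n)}|\;\le\;\frac{p^{-m}}{n-m}\Bigl(|a_0^{(n)}|+|a_1^{(n)}|+\sum_{j=2}^m|a_j^{(n)}|\Bigr),
\]
so it suffices to bound each of the three terms on the right by a constant multiple of $1/(n-1)$, uniformly in $n$ and in $m\in\{0,\dots,n-1\}$.

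The only real point is an elementary observation about the factor $1/(n-m)$: for every fixed $\lambda\in[0,1)$ one has $\sup_{n}\sup_{0\le m\le n-1}\frac{(n-1)\lambda^m}{n-m}<\infty$, and likewise $\sup_{n}\sup_{0\le m\le n-1}\frac{\lambda^m\ln n}{n-m}<\infty$. Both are proved by splitting the $m$-range at $m=n/2$: for $m\le n/2$ one has $n-m\ge n/2$, so the quotient is dominated by $2$ (respectively $2\ln n/n$); for $m>n/2$ one uses $n-m\ge 1$ together with $\lambda^m\le\lambda^{n/2}$, which beats the polynomial resp. logarithmic prefactor as $n\to\infty$.

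With this in hand the three contributions are routine. For the $a_0$ term, Proposition~\ref{a_n limit}(a) gives $|a_0^{(n)}|\le C$, so $\frac{p^{-m}}{n-m}|a_0^{(n)}|\le\frac{C}{n-1}\cdot\frac{(n-1)p^{-m}}{n-m}\le\frac{C'}{n-1}$. For the $a_1$ term, Proposition~\ref{a_n limit}(b) gives $|a_1^{(n)}|\le C_1\ln n/(n-1)$, and the second bound above yields $\frac{p^{-m}}{n-m}|a_1^{(n)}|\le\frac{C_1}{n-1}\cdot\frac{p^{-m}\ln n}{n-m}\le\frac{C''}{n-1}$. For the tail I would fix $p'$ with $1<p'<p$ and apply Proposition~\ref{a_n limit}(c) with this $p'$, so that $|a_j^{(n)}|\le C_2 p'^{\,j}/(n-1)$ for $j\ge 2$; summing the geometric series gives $\sum_{j=2}^m|a_j^{(n)}|\le\frac{C_2\,p'^{\,m+1}}{(p'-1)(n-1)}$, whence
\[
\frac{p^{-m}}{n-m}\sum_{j=2}^m|a_j^{(n)}|\;\le\;\frac{C_2 p'}{(p'-1)(n-1)}\cdot\frac{(p'/p)^m}{n-m}\;\le\;\frac{C'''}{n-1},
\]
using the elementary observation with $\lambda=p'/p\in(0,1)$ (note $\frac{(p'/p)^m}{n-m}\le\frac{(n-1)(p'/p)^m}{n-m}$). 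Adding the three estimates gives the claim with $C_3=C'+C''+C'''$; all constants depend only on $p$. I do not expect any obstacle here beyond keeping track of the $1/(n-m)$ factor, which genuinely exceeds $1/(n-1)$ when $m$ is close to $n$ but is always absorbed by the exponential (and, in the tail sum, the strictly sub-unit geometric) decay in $m$.
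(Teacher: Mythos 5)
Your proof is correct and follows essentially the same route as the paper: split $b_m^{(n)}$ by the triangle inequality into the $a_0$, $a_1$, and $j\ge 2$ tail contributions, bound each using the corresponding part of Proposition~\ref{a_n limit}, and absorb the mismatch between $1/(n-m)$ and $1/(n-1)$ via the geometric decay $p^{-m}$. The only cosmetic differences are that the paper achieves your ``elementary observation'' in one line via the algebraic inequality $n-1\le j(n-j)$ rather than a split at $m=n/2$, and that the paper applies part (c) with the same $p$ (since $p^{-j}(p^{j-1}-1)\le p^{-1}$ already bounds the tail without passing to a smaller $p'$).
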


\begin{proof}
For $j \leq n-1$, we have $n-1 \leq j(n-j)$, and thus
Proposition~\ref{a_n limit} (c) gives
\begin{eqnarray*}
p^{-j} b_j^{(n)} & \leq & \frac{p^{-j}}{n-j} \left( (a_0^{(n)} + a_1^{(n)})
+ \frac{C_2}{n-1} \frac{p^2}{p-1} (p^{j-1}-1) \right) \leq \\
& \leq & p^{-j}  \frac{j(a_0^{(n)} + a_1^{(n)})}{n-1} + \frac{p\,
C_2}{p-1} \frac{1}{n-1} \leq \frac{C_3}{n-1}\,.\qquad\qedi
\end{eqnarray*}
\end{proof}

Given the results of Proposition \ref{a_n limit} and Corollary \ref{b_n limit} together with Lemma \ref{identities}, we can now derive 
Theorems \ref{MainTh} and \ref{solution}. The proofs are word for word the same as in \cite{BeTe1} and \cite{BeTe2} from here on, and so we omit them.

\end{document}